%
\documentclass[runningheads]{llncs}
\usepackage{graphicx}

\usepackage[utf8]{inputenc}
\usepackage[english]{babel}
\usepackage[margin=1in]{geometry}
\usepackage[fontsize=11pt]{scrextend}




\usepackage{nicefrac}

\usepackage{float}
\usepackage{graphicx}
\usepackage{subcaption}

\usepackage{amssymb}
\usepackage{amsmath}
\usepackage[boxed]{algorithm2e}
\usepackage{bm,bbm}

\usepackage[dvipsnames,usenames]{xcolor}
\usepackage[colorlinks=true,pdfpagemode=UseNone,urlcolor=RoyalBlue,linkcolor=RoyalBlue,citecolor=OliveGreen,pdfstartview=FitH]{hyperref}

\usepackage{tcolorbox}
\usepackage{todonotes}

\usepackage{booktabs}

\usepackage{footnote}
\makesavenoteenv{tabular}
\makesavenoteenv{table}

\renewcommand{\qed}{\hfill$\square$}

\newcommand{\budget}{\theta}
\newcommand{\load}{\ell}
\newcommand{\rank}{\rho}
\newcommand{\threshold}{\tau}
\newcommand{\before}{\prec}
\newcommand{\critical}{\mu}

\newcommand{\OPT}{\text{\rm OPT}}
\newcommand{\SOPT}{\text{\rm S-OPT}}

\DeclareMathOperator*{\maximize}{\textrm{maximize}}
\DeclareMathOperator*{\minimize}{\textrm{minimize}}

\newcommand{\E}{\mathbf{E}}

\newcommand*\D{\,\mathrm{d}}

%

\begin{document}
\title{
    Online Matching with Stochastic Rewards:\\
    Advanced Analyses Using Configuration Linear Programs
}
\titlerunning{Online Matching with Stochastic Rewards: Advanced Analyses}
%


\author{
   Zhiyi Huang \inst{1}
   \and
   Hanrui Jiang \inst{2}
   \and
   Aocheng Shen \inst{2}
   \and
   Junkai Song  \inst{1}
   \and
   Zhiang Wu  \inst{3}
   \and
   Qiankun Zhang  \inst{2} 
}

\institute{
    The University of Hong Kong  \\ \email{zhiyi@cs.hku.hk}, \email{dsgsjk@connect.hku.hk}\and
    Huazhong University of Science and Technology  \\ \email{hanry@hust.edu.cn}, \email{awsonshen@hust.edu.cn},
    \email{qiankun@hust.edu.cn}\\
    \and
    Hong Kong University of Science and Technology \\
    \email{zwube@connect.ust.hk}
}

\authorrunning{Z. Huang et al.}

\maketitle              
\begin{abstract}
    \thispagestyle{empty}
    Mehta and Panigrahi (2012) proposed Online Matching with Stochastic Rewards, which generalizes the Online Bipartite Matching problem of Karp, Vazirani, and Vazirani (1990) by associating the edges with success probabilities.
This new feature captures the pay-per-click model in online advertising.
Recently, Huang and Zhang (2020) studied this problem under the online primal dual framework using the Configuration Linear Program (LP), and got the best known competitive ratios of the Stochastic Balance algorithm.
Their work suggests that the more expressive Configuration LP is more suitable for this problem than the Matching LP.

This paper advances the theory of Configuration LP in two directions.
Our technical contribution includes a characterization of the joint matching outcome of an offline vertex and \emph{all its neighbors}.
This characterization may be of independent interest, and is aligned with the spirit of Configuration LP.
By contrast, previous analyses of Ranking generally focus on only one neighbor.
Second, we designed a Stochastic Configuration LP that captures a stochastic benchmark proposed by Goyal and Udwani (2020), who used a Path-based LP.
The Stochastic Configuration LP is smaller and simpler than the Path-based LP.
Moreover, using the new LP we improved the competitive ratio of Stochastic Balance from $0.596$ to $0.611$ when the success probabilities are infinitesimal, and to $0.613$ when the success probabilities are further equal.

\end{abstract}

\section{Introduction}
\label{sec:intro}

Suppose that Alice is planning an upcoming trip to Hawaii and she just types ``Hawaii resort'' in a search engine.
Once she presses the return key, the search engine will provide a list of related websites, among which several websites have a small ``Ad'' tag next to them to indicate that they are sponsored results.
We will refer to such websites as advertisers.
If Alice clicks on a sponsored result, the advertiser would pay the search engine, where the amount depends on its bid for the keywords among other factors.
This is one of many scenarios of online advertising, which contributes hundreds of billions of US dollars to the annual revenue of major IT companies.

Online advertising presents many unique challenges, each of which has led to a long line of research.
For example, the search engine must decide which advertiser shall be returned to a search query without knowing what search queries may come next.
The essence of this online decision-making problem is captured by Online Bipartite Matching, which amazingly was introduced by Karp, Vazirani, and Vazirani~\cite{KarpVV:STOC:1990} well before online advertising even existed.
To see the connection, consider the advertisers and search queries as vertices on the two sides of a bipartite graph.
The advertisers are known from the beginning and we call them \emph{offline vertices}.
The search queries come one by one and we call them \emph{online vertices}.
The algorithm needs to decide how to match each online vertex on its arrival, with the goal of maximizing the matching size.
We measure an algorithm by the worst-case ratio of its expected matching size to the size of the optimal matching in hindsight, a.k.a., the \emph{competitive ratio}.

Readers may notice a disparity, however, if they compare our example scenario of online advertising with the Online Bipartite Matching problem.
In online advertising, the search engine cannot control whether a user clicks on the ad or not.
The best that a search engine could do is to estimate how likely the user will click, based on the keywords and the information it gathers about the user.
To this end, an attempt to match an online vertex to an offline neighbor only succeeds with some probability, rather than with certainty like in Online Bipartite Matching.
Because of this disparity, Mehta and Panigrahi~\cite{MehtaP:FOCS:2012} proposed Online Matching with Stochastic Rewards, extending Online Bipartite Matching by associating the edges with success probabilities.

Much progress has been made in Online Matching with Stochastic Rewards in the last decade.
Most related to this paper is the work of Huang and Zhang~\cite{HuangZ:STOC:2020}, who successfully applied the online primal-dual framework to this problem.
Their analysis of the Stochastic Balance algorithm, which matches each online vertex to a neighbor that has been matched the least number of times so far, yielded the best competitive ratios to date. 
Conceptually, they found that the Matching Linear Program (LP) is not expressive enough for an online primal-dual analysis of this problem, and instead one could consider the Configuration LP.

There are still unanswered questions, however, that cast some doubt on the usefulness of the Configuration LP.
For instance, Mehta and Panigrahi~\cite{MehtaP:FOCS:2012} showed that the Ranking algorithm, which randomly ranks the offline vertices at the beginning and matches each online vertex to the neighbor with the highest rank, is another competitive algorithm for Online Matching with Stochastic Rewards.
However, Huang and Zhang~\cite{HuangZ:STOC:2020} failed to offer any non-trivial competitive analysis for the Ranking algorithm using the online primal-dual analysis with Configuration LP.
Moreover, Goyal and Udwani~\cite{GoyalU:OR:2022} introduced an alternative benchmark, which we will refer to as the stochastic benchmark (see Section~\ref{sec:prelim} for details).
They proved that the competitive ratios of both Stochastic Balance and Ranking would be strictly larger if they were compared against the stochastic benchmark.
To do so, they designed a Path-based LP which is completely different than the Configuration LP.
Can we reproduce or even improve these results by further developing the theory of online primal-dual with Configuration LP?

\subsection{Our Results}

%
Our first contribution is \emph{an online primal-dual analysis of Ranking based on the Configuration LP}, improving the competitive ratio of Ranking from $0.534$ to $0.572$.
To present the technical novelty that enables this analysis, we need to first explain the intuition behind the usefulness of the Configuration LP in an online primal-dual analysis compared to the simple Matching LP.
The essence of an online primal-dual analysis is to design an appropriate gain-splitting rule.
When the algorithm matches an online vertex $v$ to an offline neighbor $u$ and increases the expected objective accordingly by success probability $p_{uv}$, imagine that we further split the gain of $p_{uv}$ between its two vertices $u$ and $v$.
We need the gain splitting rule to satisfy some LP-dependent conditions which essentially say that the total expected gain of neighboring vertices are sufficiently large.
More precisely, if we use the Matching LP, the condition is: 
\begin{equation}
    \label{eqn:intro-matching-lp-condition}
    p_{uv} \cdot \E \big[ \text{gain of $u$} \big] + \E \big[ \text{gain of $v$} \big] \ge \Gamma \cdot p_{uv}
    ~,
\end{equation}
where $\Gamma$ is the competitive ratio that we seek to prove for the algorithm.

By contrast, if we use the Configuration LP, the critical condition no longer considers just a single edge $(u, v)$.
Instead, it examines an offline vertex $u$ and \emph{a subset of neighbors} $S$ whose success probabilities sum to about $1$, and requires that:
\begin{equation}
    \label{eqn:intro-config-lp-condition}
    \E \big[ \text{gain of $u$} \big] + \sum_{v \in S} \E \big[ \text{gain of $v$} \big] \ge \Gamma
    ~.
\end{equation}

In other words, the configuration LP only needs an amortized version of Eqn.~\eqref{eqn:intro-matching-lp-condition}, which sums over $v \in S$.
Even if Eqn.~\eqref{eqn:intro-matching-lp-condition} fails in the worst-case scenario with respect to (w.r.t.)\ an offline vertex $u$ and a single neighbor $v$, Eqn.~\eqref{eqn:intro-config-lp-condition} may still hold if such worst-case cannot happen to all $v \in S$ at the same time. 
Note that, however, to use the power of the weaker condition, we need to characterize the joint matching outcome of $u$ and a subset of neighbors $S$, rather than doing so for each $v \in S$ separately.
Huang and Zhang~\cite{HuangZ:STOC:2020} gave such a characterization for Stochastic Balance.

This paper finds that even the joint matching outcome of $u$ and $S$ is insufficient for getting a useful characterization of the worst-case scenario of Eqn.~\eqref{eqn:intro-config-lp-condition} w.r.t.\ Ranking.
Instead, we consider the joint outcome of $u$ and \emph{all its neighbors}, including those outside $S$.
Characterizing the entire neighborhood's matching outcome may seem counter-intuitive \emph{a priori}, which may be why Huang and Zhang~\cite{HuangZ:STOC:2020} failed to get a non-trivial competitive ratio for Ranking.
The characterization and the new approach in general may be useful for other online matching problems whose offline vertices can be matched more than once, e.g., the AdWords problem~\cite{MehtaSVV:JACM:2007}.

%
Our second contribution is \emph{a new Stochastic Configuration LP}, which we design to bound the stochastic benchmark of Goyal and Udwani~\cite{GoyalU:OR:2022}.
The stochastic benchmark refers to the best objective achievable by an algorithm which knows the graph and the edges' success probabilities but not the realization of edge successes and failures, and which still needs to match online vertices by their arrival order.
It is natural to consider LPs whose decision variables depend on the observed edge successes and failures to capture the stochastic benchmark.
This is the approach of Goyal and Udwani~\cite{GoyalU:OR:2022}, who referred to the observed edge successes and failures as a sample path and gave a Path-based LP.
Their LP is huge, however, because the number of sample paths is exponential in the number of edges by definition.
Further, compared to the Matching and Configuration LPs, the Path-based LP has an extra set of consistency constraints: if a sample path is a sub-path of another sample path, then the relevant matching outcome must be consistent.
These constraints lead to new dual variables that are outside the scope of existing gain-splitting methods.
Goyal and Udwani~\cite{GoyalU:OR:2022} still found a canonical gain-splitting rule and sufficient conditions for proving competitive ratios for Ranking and Stochastic Balance, but their conditions do not correspond to the constraints of Path-based LP's dual.
Therefore, it is unclear how to apply their approach to other problems.

We use a different strategy to design Stochastic Configuration LP for the stochastic benchmark.
Instead of using variables $x_{uS}$ for the probability an offline vertex $u$ is matched to a subset of online vertices $S$ as in the Configuration LP, we consider variables $y_{uS}$ that represents the probability an offline vertex $u$ \emph{would be matched to subset $S$ if all matches to $u$ failed}.
The size of the Stochastic Configuration LP is therefore exponential only in the number of vertices, like the original Configuration LP.
Further, these variables implicitly capture the consistency requirements and therefore no additional consistency constraints are needed.
As a result, the online primal dual analysis can directly work with the (approximate) dual constraints.
Compared to condition \eqref{eqn:intro-config-lp-condition} given by the Configuration LP, the condition w.r.t.\ the stochastic variant scales the gains of online vertices and the competitive ratio on the right-hand-side by some multipliers smaller than $1$, with the multiplier of competitive ratio being smaller.
Hence, the resulting condition is indeed easier to satisfy.

Using the Stochastic Configuration LP, we improved the competitive ratios of Stochastic Balance w.r.t.\ the stochastic benchmark from $0.596$ to $0.611$ when the success probabilities are infinitesimal, and to $0.613$ if the success probabilities are further equal.%
\footnote{
We remark that the competitive ratio of Goyal and Udwani~\cite{GoyalU:OR:2022} holds for the optimal value of the Path-based LP, which may be strictly larger than the optimal value of our Stochastic Configuration LP in some cases.
Nevertheless, both LPs' optimal values upper bound the stochastic benchmark.
}%
\footnote{We thank an anonymous reviewer for pointing out that a preliminary arXiv version (v5) of \cite{GoyalU:OR:2022} claims a competitive ratio of $0.605$ for the general infinitesimal case without proof. Nevertheless, our bound is also larger than that.}
We also reproduce the optimal $1-\frac{1}{e}$ competitive ratio of Ranking w.r.t.\ the stochastic benchmark.

\begin{table}[t]
    \centering
    \caption{Summary of Results.
        The non-stochastic and stochastic benchmarks are denoted as $\OPT$ and $\SOPT$ respectively.
        The results for Stochastic Balance apply to infinitesimal success probabilities.
        We round the competitive ratios down to the third digit after the decimal point.
    }
    \label{tab:summary}

    \begin{tabular}{p{2.1in}llllll}
        \toprule
        & \multicolumn{3}{p{1.5in}}{$\OPT$} & \multicolumn{3}{p{1.5in}}{$\SOPT$} \\
        \midrule
        Ranking {\scriptsize (Equal Prob.)} & $0.534$ \cite{MehtaP:FOCS:2012} & $\to$ & $\bm{0.572}$
        & $1-\frac{1}{e} \approx 0.632$ \cite{GoyalU:OR:2022} & \\
        Stochastic Balance {\scriptsize (Equal Prob.)} & $0.576$ \cite{HuangZ:STOC:2020} & & & $0.596$ \cite{GoyalU:OR:2022} & $\to$ & $\bm{0.613}$ \\
        Stochastic Balance {\scriptsize (General)} & $0.572$ \cite{HuangZ:STOC:2020}   & & & $0.596$    \cite{GoyalU:OR:2022}  & $\to$ & $\bm{0.611}$ \\
        \bottomrule
    \end{tabular}
\end{table}

\subsection{Related Works}

Our analyses follow the (randomized) online primal dual framework by Devanur, Jain, and Kleinberg~\cite{DevanurJK:SODA:2013}, who were inspired by Birnbaum and Mathieu~\cite{BirnbaumM:2008} and Goel and Mehta~\cite{GoelM:SODA:2008}.

An online advertising platform's objective is rarely as simple as maximizing the matching size.
For instance, advertisers usually have different values for different keywords.
Hence, the literature has subsequently introduced many variants of the original Online Bipartite Matching problem by Karp et al.~\cite{KarpVV:STOC:1990}.
Aggarwal et al.~\cite{AggarwalGKM:SODA:2011} generalized Ranking and the optimal $1-\frac{1}{e}$ competitive ratio to the (offline) vertex-weighted problem.
Feldman et al.~\cite{FeldmanKMMP:WINE:2009} considered the edge-weighted problem with free disposal, and gave a $1-\frac{1}{e}$ competitive algorithm assuming large capacities of offline vertices;
see also Devanur et al.~\cite{DevanurHKMY:TEAC:2016} for an online primal dual version of this result.
Their algorithm may be viewed as a generalization of the Balance algorithm for unweighted matching by Kalyanasundaram and Pruh~\cite{KalyanasundaramP:TCS:2000}, which also inspired the Stochastic Balance algorithm by Mehta and Panigrahi~\cite{MehtaP:FOCS:2012} for Online Matching with Stochastic Rewards.
Recently, Fahrbach et al~.\cite{FahrbachHTZ:JACM:forthcoming} gave the first algorithm that is strictly better than $\frac{1}{2}$-competitive for the edge-weighted problem without large-capacity assumption.
The ratio was later improved independently to $0.509$ by Shin and An~\cite{ShinA:ISAAC:2021}, to $0.519$ by Gao et al.~\cite{GaoHHNYZ:FOCS:2021}, and to $0.536$ by Blanc and Charikar~\cite{BlancC:FOCS:2021}.
Another important variant is the AdWords problem by
Mehta et al.~\cite{MehtaSVV:JACM:2007}, where each offline vertex may be matched multiple times and has a budget-additive valuation.
Under a small bid assumption, Mehta et al.~\cite{MehtaSVV:JACM:2007} gave an optimal $1-\frac{1}{e}$ competitive algorithm.
See also Buchbinder, Jain, and Naor~\cite{BuchbinderJN:ESA:2007} for an online primal dual analysis of this algorithm, and Devanur and Jain~\cite{DevanurJ:STOC:2012} for a generalization called Online Matching with Concave Returns.
Huang, Zhang, and Zhang~\cite{HuangZZ:FOCS:2020} recently broke the $\frac{1}{2}$ barrier for AdWords with general bids and gave a $0.501$-competitive algorithm, using an approach similar to Fahrbach et al.~\cite{FahrbachHTZ:JACM:forthcoming}.
A recent line of results~\cite{Liang2022OnTP,Udwani2021AdwordsWU,Vazirani2022TowardsAP} studied budget-oblivious algorithms for AdWords, whose allocation policies do not depend on the budgets of offline vertices.

Besides the stochastic edge successes and failures in Online Matching with Stochastic Rewards, the literature has also studied other stochastic models of online matching problems.
Mahdian and Yan~\cite{MahdianY:STOC:2011} and Karande, Mehta, and Tripathi~\cite{KarandeMT:STOC:2011} independently showed that Ranking is strictly better than $1-\frac{1}{e}$ competitive if online vertices arrive by random order.
Huang et al.~\cite{HuangTWZ:TALG:2019} extended this result to the vertex-weighted problem, which was further improved upon by Jin and Williamson~\cite{JinW:WINE:2021}.
The problem with stochastically generated graphs was first studied by Feldman et al.~\cite{FeldmanMMM:FOCS:2009}, who named it Online Stochastic Matching.
We also refer readers to the most recent works on this problem by Huang and Shu~\cite{HuangS:STOC:2021}, Huang, Shu, and Yan~\cite{HuangSY:STOC:2022}, and Tang, Wu, and Wu~\cite{TangWW:STOC:2022}.

\subsection{Paper Outline}

Section~\ref{sec:prelim} presents a formal definition of online matching with stochastic rewards, the benchmarks we will compare against, and existing linear programs and algorithms. Section~\ref{sec:lp} introduces our new stochastic Configuration LPs, and their properties.
Section~\ref{sec:ranking} shows the online primal-dual analyses of Ranking.
Finally, our results for Stochastic Balance are presented in Appendix~\ref{sec:balance} because of page limitation and the fact that the improvements mainly come from a combination of the structural lemmas by Huang and Zhang~\cite{HuangZ:STOC:2020} and the new stochastic Configuration LP proposed in this paper.

\section{Preliminaries}
\label{sec:prelim}

%
We write $x^+$ for function $\max\,\{ x, 0 \}$.
For any vector $x$ and any index $i$, we write $x_{-i}$ for the vector with the $i$-th entry removed, and $(x_i, x_{-i})$ for a vector whose $i$-th entry is $x_i$, and whose other entries are $x_{-i}$.

\subsection{Model}
Consider a bipartite graph $G = (U, V, E)$.
Vertices in $U$ are referred to as \emph{offline vertices} because they are known to the algorithm from the beginning.
Vertices in $V$ arrive one by one, and are referred to as \emph{online vertices}.
We will write $v \before v'$ to denote that $v$ arrives before $v'$.
Each edge $(u, v)$ is associated with a success probability $0 \le p_{uv} \le 1$;
further define $p_{uS} = \sum_{v \in S} p_{uv}$ for any $u \in U$ and any $S \subseteq V$.
When an online vertex $v \in V$ arrives, the algorithm sees $v$'s incident edges and the corresponding success probabilities.
Then, the algorithm makes an irrevocable matching decision about $v$.
Should the algorithm choose to match $v$ to an offline vertex $u$, the match would succeed with probability $p_{uv}$.
If a match is unsuccessful, the offline vertex $u$ can still be matched to future online vertices, but the online vertex $v$ will not get a second chance.
An offline vertex $u \in U$ is \emph{successful} if there is an edge that is matched to $u$ and is successful;
$u$ is \emph{unsuccessful} otherwise.
We want to maximize the number of successful offline vertices.

We remark that the results in this paper generalize to the vertex-weighted problem, where offline vertices have positive weights and we want to maximize the total weight of successful offline vertices.
Readers familiar with the online matching literature shall not find this surprising since almost all known results for unweighted online matching problems generalize to the vertex-weighted problems.
This version of the paper only presents the unweighted case to keep the exposition simple.

By allowing $p_{uv} = 0$, we may assume without loss of generality that $E = U \times V$ which will simplify the exposition in some parts of this paper.
We say that $u$ and $v$ are neighbors if $p_{uv} > 0$.
We say that the success probabilities are equal if $p_{uv}$ is either $0$ and $p$ for some $0 \le p \le 1$.

\smallskip
\noindent
\emph{Stochastic Budgets.~}
Mehta and Panigrahi~\cite{MehtaP:FOCS:2012} showed that when the success probabilities are infinitesimal, we can consider an equivalent model in which the randomness plays a different role.
At the beginning, each offline vertex $u \in U$ independently draws a budget $\budget_u$ from the exponential distribution with mean $1$.
When an online vertex $v \in V$ arrives, the algorithm may match it to any offline neighbor and collect a gain that equals the success probability of the edge.
However, the total gain from an offline vertex $u $ is capped by its budget $\theta_u$, i.e., it is either the sum of success probabilities of the edges matched to $u$, denoted as $\load_u$ and referred to as its \emph{load}, or its budget $\budget_u$, whichever is smaller.
Further, the budget $\budget_u$ is unknown to the algorithm until the moment when the load $\load_u$ exceeds the budget, which corresponds to $u$'s succeeding.

\smallskip
\noindent
\emph{Benchmarks and Competitive Analysis.~}
The \emph{offline (non-stochastic) optimum}, denoted as $\OPT$, refers to the optimal objective achievable by a computationally unlimited offline algorithm that knows the graph and success probabilities, and when the budgets are non-stochastically equal to $1$.
The offline optimum upper bounds the objective achievable by any (online or offline) algorithm in the model with stochastic budgets~\cite{MehtaP:FOCS:2012}.

The \emph{offline stochastic optimum}, denoted as $\SOPT$, refers to the optimal objective achievable by a computationally unlimited offline algorithm that knows the graph and success probabilities, but can only match the online vertices one by one by their arrival order, and can only observe the stochastic budget of an offline vertex when its load exceeds the budget like online algorithms.

An online algorithm is $\Gamma$-competitive w.r.t.\ one of these benchmarks if for \emph{any instance} of Online Matching with Stochastic Rewards, the expected objective given by the algorithm is at least a $\Gamma$ fraction of the benchmark.
Here $0 \le \Gamma \le 1$ is called the \emph{competitive ratio}.

\subsection{Existing Linear Programs}

%
Consider $0 \le x_{uv} \le 1$ as the probability that $u$ is matched to $v$ (successful or not) for any $u \in U$ and any $v \in V$.
The \emph{Matching LP} is defined as:
\begin{align*}
    \max_{x \ge 0} \quad & \sum_{u \in U} \sum_{v \in V} p_{uv} x_{uv}                                       \\
    \textrm{s.t.} \quad & \sum_{v \in V} p_{uv} x_{uv} \le 1          &  & \forall u \in U                  \\
    & \sum_{u \in U} x_{uv} \le 1                 &  & \forall v \in V                  
\end{align*}

The first constraint states that the expected load of an offline vertex $u$, which is equal to the probability that $u$ succeeds, is at most $1$.
The second constraint says that each online vertex $v$ can be matched to at most one offline vertex.
An offline allocation yields a feasible solution to the Matching LP by the aforementioned interpretation of $x_{uv}$.
Hence, the optimal objective of Matching LP upper bounds the offline optimum $\OPT$.

%
For any offline vertex $u \in U$ and any subset of online vertices $S \subseteq U$, let $\bar{p}_{uS} = \min \{ p_{uS}, 1 \}$, where recall that $p_{uS} = \sum_{v \in S} p_{uv}$.
Consider $0 \le x_{uS} \le 1$ as the probability that $S$ is the subset of online vertices matched to $u$ by the algorithm.
Huang and Zhang~\cite{HuangZ:STOC:2020} used the following \emph{Configuration LP} and its dual LP:
\begin{align}
    \max_{x \ge 0} \quad & \sum_{u \in U} \sum_{S \subset V} \bar{p}_{uS} x_{uS} && \textbf{(Primal)}
    \notag \\
    \textrm{s.t.} \quad & \sum_{S \subseteq V} x_{uS} \le 1                         &                                                                       & \forall u \in U \notag \\
    & \sum_{u \in U} \sum_{S \subseteq V: v \in S} x_{uS} \le 1 &                                                                       & \forall v \in V \notag \\[1ex]
%
%
    %
    \min_{\alpha, \beta \ge 0} \quad & \sum_{u \in U} \alpha_u + \sum_{v \in V} \beta_v && \textbf{(Dual)} \notag \\
    \textrm{s.t.} \quad &
    \alpha_u + \sum_{v \in S} \beta_v \ge \bar{p}_{uS} &                                                           & \forall u \in U, \forall S \subseteq V \label{eqn:configuration-dual}
\end{align}

The optimal objective of Configuration LP also upper bounds the offline optimum $\OPT$.
In fact, it coincides with the optimal objective of Matching LP when the success probabilities are infinitesimal (c.f., Huang and Zhang~\cite{HuangZ:STOC:2020}).
Nevertheless, its dual structure is better suited for an online primal dual analysis.

\begin{lemma}[c.f., Lemma 3 of Huang and Zhang~\cite{HuangZ:STOC:2020}]
    \label{lem:configuration-lp-primal-dual}
    Suppose that an algorithm for Online Matching with Stochastic Rewards can further split $p_{uv}$ between $\alpha_u$ and $\beta_v$ each time it matches an edge $(u,v)$, such that the dual constraint~\eqref{eqn:configuration-dual} holds up to factor $\Gamma$ \emph{in expectation}:
    \[
        \E \Big[ \, \alpha_u + \sum_{v \in S} \beta_v \, \Big] ~\ge~ \Gamma \cdot \bar{p}_{uS}
        ~.
    \]
    Then, the algorithm is $\Gamma$-competitive w.r.t.\ $\OPT$.
\end{lemma}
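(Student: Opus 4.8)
The plan is the standard randomized online primal-dual argument: exhibit a \emph{single} feasible solution to the Configuration LP's dual whose objective is at most $\frac1\Gamma\,\E[\mathrm{ALG}]$, where $\mathrm{ALG}$ denotes the algorithm's (random) objective, and then sandwich $\OPT$ between the two. The excerpt already records that the optimal value of the Configuration LP upper bounds $\OPT$; by strong LP duality this value equals the optimal value of the dual LP, which, being a minimization, is at most the objective of any feasible dual point. Hence it suffices to produce some nonnegative $(\hat\alpha,\hat\beta)$ satisfying~\eqref{eqn:configuration-dual} with $\sum_{u}\hat\alpha_u+\sum_v\hat\beta_v\le\frac1\Gamma\,\E[\mathrm{ALG}]$.

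For the candidate I would take $\hat\alpha_u=\frac1\Gamma\E[\alpha_u]$ and $\hat\beta_v=\frac1\Gamma\E[\beta_v]$, using the random assignments $\alpha_u,\beta_v$ built up by the gain-splitting rule. Nonnegativity is immediate, since each matched edge splits a nonnegative amount $p_{uv}$ into nonnegative pieces and nothing is ever subtracted. Feasibility is the hypothesis itself: dividing $\E[\alpha_u+\sum_{v\in S}\beta_v]\ge\Gamma\,\bar p_{uS}$ by $\Gamma$ and using linearity of expectation gives $\hat\alpha_u+\sum_{v\in S}\hat\beta_v\ge\bar p_{uS}$ for every $u\in U$ and $S\subseteq V$. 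Consequently $(\text{opt.\ value of the dual LP})\le\sum_u\hat\alpha_u+\sum_v\hat\beta_v=\frac1\Gamma\,\E\big[\sum_u\alpha_u+\sum_v\beta_v\big]$.

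It remains to establish the accounting identity $\E\big[\sum_u\alpha_u+\sum_v\beta_v\big]=\E[\mathrm{ALG}]$, which is the one step needing genuine care. Since the only operation ever performed on the dual variables is splitting $p_{uv}$ between $\alpha_u$ and $\beta_v$ on each matched edge, we have $\sum_u\alpha_u+\sum_v\beta_v=\sum_{(u,v)\text{ matched}}p_{uv}$ deterministically. To take the expectation, sum over edges: the event that $(u,v)$ gets matched is measurable with respect to the history up to the matching decision, which precedes the independent coin flip for $(u,v)$, so conditioned on $(u,v)$ being matched the match succeeds with probability exactly $p_{uv}$; moreover, because the algorithm never matches an online vertex to an already-successful offline vertex, matches to a given $u$ stop after its first success, so each successful offline vertex carries exactly one successful matched edge and each unsuccessful one carries none. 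Hence $\sum_{(u,v)\text{ matched}}\mathbf{1}[(u,v)\text{ succeeds}]=\#\{\text{successful offline vertices}\}=\mathrm{ALG}$, and taking expectations edge by edge, $\E\big[\sum_{(u,v)\text{ matched}}p_{uv}\big]=\sum_{(u,v)}p_{uv}\,\Pr[(u,v)\text{ matched}]=\E[\mathrm{ALG}]$. (In the stochastic-budget reformulation the same identity is even more transparent: the contribution of $u$ to the objective is $\min\{\load_u,\budget_u\}=\load_u$ up to the negligible overshoot of the last edge matched to $u$, and $\load_u$ is exactly the total split routed through $u$.) Chaining everything, $\OPT\le(\text{opt.\ value of the dual LP})\le\frac1\Gamma\,\E[\mathrm{ALG}]$, i.e.\ $\E[\mathrm{ALG}]\ge\Gamma\cdot\OPT$. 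The main obstacle, as flagged, is the accounting identity: one must be convinced that handing out $p_{uv}$ on \emph{every} matched edge — including edges that ultimately fail and edges incident to an offline vertex matched several times — still totals, in expectation, exactly the expected number of successful offline vertices; this hinges on the no-reuse-after-success property and the per-edge tower-rule computation, and everything else is LP duality and linearity of expectation.
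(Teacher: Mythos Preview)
Your proposal is correct and follows the standard primal-dual template. The paper does not actually prove this lemma---it is quoted with a citation to Huang and Zhang~\cite{HuangZ:STOC:2020}---but the paper does prove the two analogous lemmas for its new Stochastic Configuration LPs (Lemmas~\ref{lem:succinct-stochastic-configuration-lp-primal-dual} and~\ref{lem:stochastic-configuration-lp-primal-dual}), and those proofs use exactly the skeleton you give: scale the expected duals by $1/\Gamma$, read off feasibility from the hypothesis, note that the dual objective equals $1/\Gamma$ times the algorithm's expected objective because each matched $p_{uv}$ is fully split, and conclude by weak duality together with the fact that the LP upper bounds the benchmark. Your write-up is in fact more careful than the paper's on the accounting identity $\E\big[\sum_u\alpha_u+\sum_v\beta_v\big]=\E[\mathrm{ALG}]$, which the paper takes for granted. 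One cosmetic remark: you appeal to strong duality, but weak duality already suffices (and is what the paper uses in the analogous arguments).
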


%
Finally, Goyal and Udwani~\cite{GoyalU:OR:2022} introduced the \emph{Path-based LP} which is tailored for the offline stochastic optimum $\SOPT$.
A \emph{sample path} refers to all information of what has happened so far, including the subset of edges matched by the algorithm, and the realization of whether these matches are successful.
This LP considers variable $0 \le x_{uv}^\omega \le 1$ for any edge $(u,v)$ which represents the probability that the algorithm matches $v$ to $u$ conditioned on a sample path $\omega$.
We omit the details of Path-based LP because it is not directly related to the LPs in this paper, and its constraints are too complicated to be covered concisely here.
We remark that the number of possible sample paths and thus the number of variables of Path-based LP are exponential in the number of \emph{edges}.
By contrast, the number of variables of Configuration LP is exponential in the number of \emph{vertices}, which is typically much smaller than the number of edges.

\subsection{Existing Algorithms}

\noindent
\textbf{Ranking.~}
We will consider the \emph{Ranking} algorithm in the case of equal success probabilities.
At the beginning, draw a rank $\rank_u \in [0, 1]$ uniformly at random for each offline vertex $u \in U$.
Then, on the arrival of each online vertex $v \in V$, match it to the unsuccessful neighbor that has the smallest rank.
We can break ties arbitrarily since they occur with zero probability.

\smallskip
\noindent
\textbf{Stochastic Balance (Equal Probability).~}
On the arrival of each online vertex $v \in V$, match it to the unsuccessful neighbor that has the smallest load, breaking ties arbitrarily, e.g., by the lexicographical order.
Recall that the load of an offline vertex is the sum of success probabilities of the edges that have been matched to it so far.

\smallskip
\noindent
\textbf{Stochastic Balance (General).~}
We first describe a fractional algorithm.
On the arrival of each online vertex $v \in V$, consider a continuous process that matches $v$ fractionally to the unsuccessful neighbor $u \in U$ with the largest $p_{uv} \big(1 - g(\load_u) \big)$ where $g : [0, 1] \to [0, 1]$ is a non-decreasing function chosen by the algorithm designer.
If the success probabilities are infinitesimal, one can further convert it to an integral algorithm with no loss in the competitive ratio, through a reduction by Huang and Zhang~\cite{HuangZ:STOC:2020} based on randomized rounding.
Hence, it suffices to analyze the fractional algorithm's competitive ratio.

\section{Stochastic Configuration Linear Programs}
\label{sec:lp}

\subsection{Stochastic Thresholds}

Even when the success probabilities are arbitrary, we may still consider an equivalent model that generalizes the viewpoint of stochastic budgets for infinitesimal success probabilities.
We call this the \emph{stochastic thresholds viewpoint}.
At the beginning, each offline vertex $u$ independently draws a threshold $\threshold_u$ uniformly from $[0, 1]$.
When an online vertex $v$ arrives, the algorithm may match it to any offline neighbor and collect a gain which equals the success probability of the edge.
For each offline vertex $u$ and any subset of online vertices $S$, define:
\[
    \tilde{p}_{uS} = 1 - \prod_{v \in S} \big(1 - p_{uv}\big)
    ~.
\]

An offline vertex $u$ is successful (and can no longer be matched) if the subset of online vertices $S$ matched to it satisfies $\tilde{p}_{uS} \ge \threshold_u$.
Further, the threshold $\threshold_u$ is unknown to the algorithm until the moment that offline vertex $u$ becomes successful.

Observe that when the success probabilities are infinitesimal, we have $\tilde{p}_{uS} = 1 - e^{- p_{uS}}$, and the stochastic budget and threshold of an offline vertex $u \in U$ satisfy $\budget_u = - \ln(1- \tau_u)$.
%

\subsection{Reduced-form Stochastic Configuration Linear Program}

Consider $0 \le y_{uS} \le 1$ as the probability that a subset of online vertices $S$ would be matched to an offline vertex $u$ if it has an infinite stochastic budget $\theta_u = \infty$ (i.e., when $\threshold_u = 1$ if the success probabilities are not infinitesimal), over the randomness of the stochastic budgets $\budget_{-u}$ of other offline vertices and the randomness of the algorithm.
Further, for any $S \subseteq V$ and $v \in V$, let:
\[
    S(v) = \big\{ v' \in S : v' \before v \big\}
\]
denote the subset of online vertices in $S$ that arrive before $v$.
We consider the following \emph{Reduced-form Stochastic Configuration LP} and its dual:
\begin{align}
    \max_{y \ge 0} \quad                                                                                  &
    \sum_{u \in U} \sum_{S \subseteq V} \tilde{p}_{uS} \, y_{uS}                                     &                         & \textbf{(Primal)} \notag                                                                                             \\
    \textrm{s.t.} \quad                                                                                        &
    \sum_{S \subseteq V} y_{uS} \le 1                                                                &                         & \forall u \in U \label{eqn:reduced-form-lp-offline}                                                                  \\
                                                                                                     &
    \sum_{u \in U} \sum_{S \subseteq V : v \in S} \big( 1 - \tilde{p}_{u S(v)} \big) y_{uS} \, \le 1 &                         & \forall v \in V \label{eqn:reduced-form-lp-online} \\[2ex]
    \min_{\alpha, \beta \ge 0} \quad                                                                                  &
    \sum_{u \in U} \alpha_u + \sum_{v} \beta_v                                                       &                         & \textbf{(Dual)} \notag                                                                                               \\
    \textrm{s.t.} \quad                                                                                        &
    \alpha_u + \sum_{v \in S} \big( 1 - \tilde{p}_{u S(v)} \big) \beta_v \ge \tilde{p}_{uS}          &                         & \forall u \in U, \forall S \subseteq V \label{eqn:reduced-form-dual}                                             
\end{align}

\begin{theorem}
    \label{thm:succint-stochastic-configuration-lp}
    The optimal objective of Reduced-form Stochastic Configuration LP is greater than or equal to $\SOPT$.
\end{theorem}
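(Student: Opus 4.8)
The plan is to exhibit, for any algorithm operating under the stochastic-budget/threshold model, a feasible solution $y$ to the Reduced-form Stochastic Configuration LP whose objective equals exactly the expected number of successful offline vertices achieved by that algorithm. Applying this to the optimal offline stochastic algorithm then gives the claimed inequality, since the LP maximizes over all feasible $y$. The construction of $y$ is the crux: for each offline vertex $u$ and subset $S \subseteq V$, set $y_{uS}$ to be the probability, over the randomness of the algorithm and of the thresholds $\threshold_{-u}$ of the \emph{other} offline vertices, that $S$ is precisely the set of online vertices matched to $u$ \emph{in the modified run where $u$'s threshold is set to $1$} (equivalently, $u$ has infinite budget and never becomes successful). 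The key observation that makes this well-defined and useful is that, because the algorithm only learns $\threshold_u$ at the moment $u$ succeeds, the sequence of online vertices routed to $u$ is a prefix-determined process: conditioned on $S$ being the set that would be routed to $u$ under $\threshold_u = 1$, the vertices actually matched to $u$ for a general threshold are exactly those $v \in S$ arriving before $u$ first crosses its threshold, and for each $v \in S$ the match $v \mapsto u$ actually occurs iff $\threshold_u > \tilde p_{uS(v)}$, which happens with probability $1 - \tilde p_{uS(v)}$ independently of everything else.

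First I would verify the offline constraint~\eqref{eqn:reduced-form-lp-offline}: the events ``$S$ is the set routed to $u$ under $\threshold_u=1$'' are disjoint over $S$ for fixed $u$ (over the fixed randomness of algorithm and $\threshold_{-u}$), so $\sum_S y_{uS} \le 1$. Next, the online constraint~\eqref{eqn:reduced-form-lp-online}: I would fix an online vertex $v$ and argue that $\sum_{u}\sum_{S \ni v}(1-\tilde p_{uS(v)})\,y_{uS}$ equals the probability that $v$ is matched at all (in the true run), hence is at most $1$. The term $(1-\tilde p_{uS(v)})\,y_{uS}$ is the probability that $u$ would route $S$ under infinite budget \emph{and} that $\threshold_u$ exceeds $\tilde p_{uS(v)}$, i.e.\ that $u$ is still unsuccessful when $v$ arrives and routes to it; summing over the disjoint $(u,S)$ alternatives covers exactly the ways $v$ gets matched. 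Then for the objective: the expected contribution of $u$ to the algorithm's payoff is $\Pr[u \text{ successful}]$, and conditioning on the infinite-budget routed set being $S$, vertex $u$ is successful iff $\threshold_u \le \tilde p_{uS}$ (the threshold is eventually crossed by the full prefix $S$), which has probability $\tilde p_{uS}$; summing gives $\sum_{u,S}\tilde p_{uS}\,y_{uS}$, matching the LP objective. Finally I would invoke the equivalence (attributed to Mehta--Panigrahi and generalized by the stochastic-thresholds viewpoint in Section~\ref{sec:lp}) that $\SOPT$ is the supremum of this expected payoff over offline stochastic algorithms.

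The main obstacle is making the prefix-determinacy argument fully rigorous: I must argue that an offline stochastic algorithm's routing decisions into $u$, up to the point $u$ succeeds, do not depend on $\threshold_u$ beyond the binary "still unsuccessful" information, so that the true run and the $\threshold_u = 1$ run agree on a prefix of the routed sequence. This is exactly the informational assumption built into the $\SOPT$ model (the budget of $u$ is observed only when its load crosses it), but it requires care because the algorithm may adapt to the successes of \emph{other} vertices, which is why $y_{uS}$ is defined with $\threshold_{-u}$ and the algorithm's coins still random and only $\threshold_u$ pinned to $1$. Once that coupling is set up cleanly, the three constraint checks and the objective computation are short conditioning arguments; I expect no difficulty with the algebra, only with stating the coupling precisely.
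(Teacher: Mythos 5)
Your proposal is correct and follows essentially the same approach as the paper's proof: you define $y_{uS}$ identically (probability the infinite-budget run routes exactly $S$ to $u$, over algorithm coins and $\threshold_{-u}$), verify the two constraints by the same disjointness and "probability $v$ is matched" arguments, and then check the objective. The paper leaves the prefix-determinacy coupling and the objective computation implicit, so your version is a slightly more fully spelled-out instance of the same argument rather than a different route.
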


\begin{proof}
    Consider any computationally unlimited offline algorithm that knows the graph and the edges' success probabilities, but must follow the arrival order of online vertices and the stochastic thresholds of offline vertices.
    It suffices to show that the corresponding $y_{uS}$, i.e., the probability that a subset of online vertices $S$ would be matched to $u$ should $u$ never succeed, is a feasible solution to the Reduced-form Stochastic Configuration LP.

    Constraint~\eqref{eqn:reduced-form-lp-offline} corresponds to the fact that there is a uniquely defined subset of online vertices $S$ that $u$ would be matched to if $u$ never succeed, for any realization of randomness of both the model and the algorithm.

    Next we examine constraint~\eqref{eqn:reduced-form-lp-online}.
    Conditioned on any stochastic thresholds $\threshold_{-u}$ of vertices other than $u$, and the corresponding subset of online vertices $S$ that would be matched to $u$ if $u$ never succeed, an online vertex $v \in S$ would be matched to $u$ if $u$'s stochastic threshold $\threshold_u$ is large enough to accommodate all vertices in $S$ that arrive before $v$, i.e., if $\threshold_u \ge \tilde{p}_{uS(v)}$.
    Since $\threshold_u$ is drawn uniformly between $0$ and $1$, this happens with probability $1 - \tilde{p}_{uS(v)}$.
    Hence, the left-hand-side (LHS) of constraint~\eqref{eqn:reduced-form-lp-online} calculates the probability that $v$ is matched, and is thus at most $1$.
\qed \end{proof}

\begin{lemma}
    \label{lem:succinct-stochastic-configuration-lp-primal-dual}
    Suppose that an algorithm for Online Matching with Stochastic Rewards can further split $p_{uv}$ between $\alpha_u$ and $\beta_v$ each time it matches an edge $(u,v)$, such that the dual constraint~\eqref{eqn:reduced-form-dual} holds up to factor $\Gamma$ \emph{in expectation}, i.e.:
    \[
        \E \Big[ \, \alpha_u + \sum_{v \in S} \big( 1 - \tilde{p}_{u S(v)} \big) \beta_v \, \Big] ~\ge~ \Gamma \cdot \tilde{p}_{uS}
        ~.
    \]
    Then, the algorithm is $\Gamma$-competitive w.r.t.\ $\SOPT$.
\end{lemma}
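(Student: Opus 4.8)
The plan is to run the randomized online primal-dual argument of Devanur, Jain, and Kleinberg~\cite{DevanurJK:SODA:2013}, in direct parallel with the proof of Lemma~\ref{lem:configuration-lp-primal-dual}, but against the dual of the Reduced-form Stochastic Configuration LP rather than the Configuration LP. Fix an arbitrary instance and run the algorithm together with its gain-splitting rule. This produces nonnegative random variables $\alpha_u$ and $\beta_v$ (over the randomness of the stochastic thresholds $\tau$ and of the algorithm), where $\alpha_u$ is the total amount ever credited to offline vertex $u$ and $\beta_v$ the total amount ever credited to online vertex $v$. The proof rests on two facts: (i) $\E\big[\sum_{u} \alpha_u + \sum_{v} \beta_v\big]$ equals the expected objective of the algorithm; and (ii) $\big(\E[\alpha_u]/\Gamma,\ \E[\beta_v]/\Gamma\big)$ is feasible for the dual LP.

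The only step requiring care is Fact (i), though it is the same accounting already used in the non-stochastic case and is not specific to the new LP. Each time the algorithm matches an edge $(u,v)$ it splits the gain it collects on that match, so, summing over all matches, the grand total of the credits equals the total gain collected by the algorithm. By the equivalence with the stochastic-thresholds viewpoint of Section~\ref{sec:prelim} (with the usual convention that on the critical match to an offline vertex the collected gain is capped at its threshold), the expected gain collected from a fixed offline vertex $u$ equals the probability that $u$ ends up successful; summing over $u \in U$ yields the expected number of successful offline vertices, i.e.\ the algorithm's expected objective. Note that this step never mentions the variables $y_{uS}$ — the reduced-form reinterpretation is only needed for Theorem~\ref{thm:succint-stochastic-configuration-lp}, not here.

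Fact (ii) and the conclusion are then routine. The hypothesis says $\E\big[\alpha_u + \sum_{v \in S}(1-\tilde{p}_{uS(v)})\beta_v\big] \ge \Gamma\,\tilde{p}_{uS}$ for every $u \in U$ and $S \subseteq V$. Since $\tilde{p}_{uS}$ and each $1-\tilde{p}_{uS(v)}$ are deterministic constants (the set $S(v)$ depends only on $S$ and the arrival order), linearity of expectation gives $\E[\alpha_u] + \sum_{v \in S}(1-\tilde{p}_{uS(v)})\E[\beta_v] \ge \Gamma\,\tilde{p}_{uS}$; dividing by $\Gamma$ and using $\E[\alpha_u],\E[\beta_v] \ge 0$ shows that $\big(\E[\alpha_u]/\Gamma,\ \E[\beta_v]/\Gamma\big)$ satisfies constraint~\eqref{eqn:reduced-form-dual} together with nonnegativity, hence is dual-feasible. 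By weak LP duality its objective $\frac{1}{\Gamma}\big(\sum_u \E[\alpha_u] + \sum_v \E[\beta_v]\big)$ is at least the optimal value of the primal Reduced-form Stochastic Configuration LP, which by Theorem~\ref{thm:succint-stochastic-configuration-lp} is at least $\SOPT$. Combining with Fact (i), the algorithm's expected objective equals $\E\big[\sum_u \alpha_u + \sum_v \beta_v\big] \ge \Gamma\,\SOPT$, which is exactly the claimed $\Gamma$-competitiveness with respect to $\SOPT$.
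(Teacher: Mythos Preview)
Your proof is correct and follows exactly the paper's approach: verify that $\big(\tfrac{1}{\Gamma}\E[\alpha_u],\tfrac{1}{\Gamma}\E[\beta_v]\big)$ is dual-feasible with objective equal to $\tfrac{1}{\Gamma}$ times the algorithm's expected objective, then invoke weak duality and Theorem~\ref{thm:succint-stochastic-configuration-lp}. One small remark: your parenthetical about capping the gain on the critical match is unnecessary and slightly off, since the lemma splits exactly $p_{uv}$ (uncapped) on every match; the clean justification of Fact~(i) is simply $\E[\ell_u]=\sum_v p_{uv}\Pr[(u,v)\text{ matched}]=\sum_v \Pr[(u,v)\text{ matched and succeeds}]=\Pr[u\text{ succeeds}]$, using that success of a match is independent of the (earlier) decision to make it.
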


\begin{proof}
    The lemma's assumptions imply that
    (1) $(\frac{1}{\Gamma} \,\E\, \alpha_u)_{u \in U}$ and $(\frac{1}{\Gamma} \,\E\, \beta_v)_{v \in V}$ form a feasible dual assignment,
    and the corresponding dual objective equals $\frac{1}{\Gamma}$ times the algorithm's expected objective.
    Further, the dual objective of any feasible dual assignment is greater than or equal to the optimal dual objective, which by weak duality is greater than or equal to the optimal primal objective.
    Putting these together with Theorem~\ref{thm:succint-stochastic-configuration-lp} completes the proof.
\qed \end{proof}

\subsection{Stochastic Configuration Linear Program}
Consider $0 \le y_{uS}(\budget_{-u}) \le 1$ (respectively, $y_{uS}(\threshold_{-u})$ if the success probabilities are \emph{not} infinitesimal) as the probability that a subset of online vertices $S$ would be matched to an offline vertex $u$ if it has an infinite stochastic budget $\budget_u = \infty$ (respectively, if $\threshold_u = 1$), \emph{conditioned on the stochastic budgets $\budget_{-u}$} (respectively, the stochastic thresholds $\threshold_{-u}$) of other offline vertices, and over the randomness of the algorithm.
This paper will further consider an even more expressive Stochastic Configuration LP and its dual in some analyses.
\begin{align}
    \maximize_{y \ge 0} \quad &
    \sum_{u \in U} \sum_{S \subseteq V} \, \E_{\budget_{-u}} \Big[ \, \tilde{p}_{uS} \, y_{uS}(\budget_{-u}) \, \Big] && \textbf{(Primal)} \notag \\
    \textrm{subject to} \quad &
    \sum_{S \subseteq V} y_{uS}(\budget_{-u}) \le 1 && \forall u \in U, \forall \budget_{-u} \label{eqn:verbose-form-lp-offline} \\
    &
    \sum_{u} \sum_{S \subseteq V : v \in S, \budget_u \ge p_{uS(v)}} y_{uS}(\budget_{-u}) \le 1 && \forall v \in V, \forall \budget \label{eqn:verbose-form-lp-online} \\[2ex]
    \minimize_{\alpha, \beta \ge 0} \quad &
    \E_{\budget} \Big[ \sum_{u} \alpha_u(\budget_{-u}) + \sum_{v \in V} \beta_v(\budget) \Big] && \textbf{(Dual)} \notag \\
    \textrm{subject to} \quad &
    \alpha_u(\budget_{-u}) + \E_{\budget_{u}} \Big[ \sum_{v \in S : \budget_u \ge p_{uS(v)}} \beta_v(\budget_u, \budget_{-u}) \Big] \ge \tilde{p}_{uS} && \forall u \in U, \forall S \subseteq V, \forall \budget_{-u} \label{eqn:verbose-form-dual}
\end{align}

\begin{theorem}
    \label{thm:stochastic-configuration-lp}
    The optimal objective of Stochastic Configuration LP is at least $\SOPT$.
\end{theorem}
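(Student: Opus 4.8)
The plan is to mimic the proof of Theorem~\ref{thm:succint-stochastic-configuration-lp}, but now carrying along the extra conditioning on the stochastic budgets $\budget_{-u}$ (or thresholds $\threshold_{-u}$) of the other offline vertices. Fix any computationally unlimited offline algorithm that respects the arrival order of online vertices and learns an offline vertex's budget only when its load exceeds it. For an offline vertex $u$ and a realization of $\budget_{-u}$, consider the counterfactual execution in which $u$ is given an infinite budget, so the algorithm always treats $u$ as not-yet-successful; in this execution the subset of online vertices matched to $u$ is a well-defined function of $\budget_{-u}$ and the algorithm's internal randomness. Letting $y_{uS}(\budget_{-u})$ be the probability, over the algorithm's randomness, that this subset equals $S$ gives the candidate primal solution. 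It remains to check the two families of constraints and identify the objective.

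Constraint~\eqref{eqn:verbose-form-lp-offline} is immediate: for fixed $\budget_{-u}$ the sets $S$ index a partition of the algorithm's randomness, so the $y_{uS}(\budget_{-u})$ sum to at most $1$. For constraint~\eqref{eqn:verbose-form-lp-online}, fix a full budget vector $\budget$ and the algorithm's randomness, and let $S$ be the subset matched to $u$ in the infinite-budget execution. A coupling argument, by induction over the online vertices in arrival order, shows that the real execution and the infinite-budget execution agree on every matching decision up to and including the vertex at which $u$ succeeds; hence $v$ is matched to $u$ in the real execution if and only if $v \in S$ and $u$ is still available when $v$ arrives, i.e. $\budget_u \ge p_{uS(v)}$, the boundary case $\budget_u = p_{uS(v)}$ having probability zero. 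Since each online vertex is matched to at most one offline vertex, summing over $u$ the probabilities of the disjoint events ``$v \in S$ and $\budget_u \ge p_{uS(v)}$'' — which is exactly the left-hand side of~\eqref{eqn:verbose-form-lp-online} — gives the probability that $v$ is matched at all, at most $1$. Finally, conditioned on $\budget_{-u}$ and the algorithm's randomness with $S$ the matched-if-infinite set, the same coupling shows $u$ succeeds exactly when $\budget_u \le p_{uS}$, which happens with probability $\tilde{p}_{uS} = 1 - e^{-p_{uS}}$ over $\budget_u$ drawn from the exponential distribution; averaging over $\budget_{-u}$ and the algorithm's randomness and summing over $u$ shows the primal objective equals $\sum_{u} \Pr[\, u \text{ succeeds} \,]$, the algorithm's expected number of successful offline vertices. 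Taking the offline stochastic-optimal algorithm then exhibits a feasible solution of value $\SOPT$, so the LP optimum is at least $\SOPT$. The non-infinitesimal case is handled identically after replacing $\budget$ by $\threshold$ drawn uniformly from $[0,1]$, the load comparison $\budget_u \ge p_{uS(v)}$ by the threshold comparison $\threshold_u > \tilde{p}_{uS(v)}$, and $1 - e^{-p_{uS}}$ by $\tilde{p}_{uS}$.

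The main obstacle is the coupling step: one must argue carefully that conditioning on $\budget_{-u}$ alone (and not $\budget_u$) already pins down the ``infinite-budget'' matched set $S$; that the real and counterfactual executions stay in lockstep precisely until the moment $u$ succeeds, so that the online vertices $u$ actually absorbs form the arrival-order prefix of $S$ whose cumulative success probability first reaches $\budget_u$; and that once $u$ is removed, the subsequent, possibly divergent behaviour of the two executions is irrelevant to whether any particular $v$ was matched to $u$. Everything else is bookkeeping: the offline constraint is a normalization, the online constraint reduces to ``$v$ is matched with probability at most $1$'', and the objective identity is the law of total probability conditioned on $\budget_{-u}$, exactly as in the proof of Theorem~\ref{thm:succint-stochastic-configuration-lp}.
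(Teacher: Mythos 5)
Your proposal is correct and follows essentially the same route as the paper: it exhibits the counterfactual-matching probabilities $y_{uS}(\budget_{-u})$ of any offline stochastic algorithm as a feasible primal solution, verifies constraint~\eqref{eqn:verbose-form-lp-offline} by uniqueness of the counterfactual matched set and constraint~\eqref{eqn:verbose-form-lp-online} by the prefix characterization $\budget_u \ge p_{uS(v)}$ bounding the probability that $v$ is matched by $1$. Your added detail (the explicit coupling of the real and infinite-budget executions, and the check that the LP objective equals the algorithm's expected number of successes) only makes explicit what the paper leaves implicit.
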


\begin{proof}
    Consider any computationally unlimited offline algorithm that knows the graph and the edges' success probabilities, but must follow the arrival order of online vertices and the stochastic budgets of offline vertices.
    We next show that the corresponding $y_{uS}(\budget_{-u})$, i.e., the probability that a subset of online vertices $S$ would be matched to $u$ if the stochastic budgets were $(\budget_u = \infty, \budget_{-u})$, is a feasible solution to the Stochastic Configuration LP.

    Constraint~\eqref{eqn:verbose-form-lp-offline} corresponds to the fact that there is a uniquely defined subset of online vertices $S$ that $u$ would be matched to if $u$ never succeed, for any realization of randomness of both the model and the algorithm.

    Next we examine constraint~\eqref{eqn:verbose-form-lp-online}.
    Conditioned on any stochastic budgets $\budget_{-u}$ of vertices other than $u$, and the corresponding subset of online vertices $S$ that would be matched to $u$ if $u$ never succeed, an online vertex $v \in S$ would be matched to $u$ if $u$'s stochastic budget $\budget_u$ is large enough to accommodate all vertices $S$ that arrive before $v$, i.e., if we have $\budget_u \ge p_{uS(v)}$.
    Hence, the LHS of constraint~\eqref{eqn:verbose-form-lp-online} calculates the probability that $v$ is matched conditioned on any stochastic budgets $\budget$, and is thus at most $1$.
\qed \end{proof}

\begin{lemma}
    \label{lem:stochastic-configuration-lp-primal-dual}
    Suppose that an algorithm for Online Matching with Stochastic Rewards can further split $p_{uv}$ between $\alpha_u$ and $\beta_v$ each time it matches an edge $(u,v)$.
    Let $\alpha_u(\budget_{-u})$ be the expectation of $\alpha_u$ conditioned on any stochastic budgets $\budget_{-u}$ of offline vertices other than $u$, and let $\beta(\budget)$ be the expectation of $\beta_v$ conditioned on any stochastic budgets $\budget$ of all vertices.
    If the dual constraint~\eqref{eqn:reduced-form-dual} holds up to factor $\Gamma$, i.e.:
    \[
        \alpha_u(\budget_{-u}) + \E_{\budget_{u}} \Big[ \sum_{v \in S : \budget_u \ge p_{uS(v)}} \beta_v(\budget_u, \budget_{-u}) \Big] ~\ge~ \Gamma \cdot \tilde{p}_{uS}
        ~,
    \]
    then the algorithm is $\Gamma$-competitive w.r.t.\ $\SOPT$.
\end{lemma}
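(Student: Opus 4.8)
The plan is to mirror the short weak-duality argument of Lemma~\ref{lem:succinct-stochastic-configuration-lp-primal-dual}; the only genuinely new point is that the dual variables of the Stochastic Configuration LP are indexed by the realization of the stochastic budgets, so conditional expectations replace plain expectations. Writing $\alpha_u$ and $\beta_v$ for the (random) shares accumulated by the algorithm's gain-splitting rule, I would take as the candidate dual solution $\widehat{\alpha}_u(\budget_{-u}) = \tfrac{1}{\Gamma}\,\E[\,\alpha_u \mid \budget_{-u}\,]$ and $\widehat{\beta}_v(\budget) = \tfrac{1}{\Gamma}\,\E[\,\beta_v \mid \budget\,]$ --- that is, exactly the quantities $\alpha_u(\budget_{-u})$, $\beta_v(\budget)$ named in the statement, scaled by $\tfrac{1}{\Gamma}$. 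The hypothesis of the lemma is then precisely the assertion that $(\widehat{\alpha}, \widehat{\beta})$ satisfies the dual constraint~\eqref{eqn:verbose-form-dual} for every $u$, every $S \subseteq V$, and every realization $\budget_{-u}$; since that constraint is already stated pointwise in $\budget_{-u}$ and already carries the inner $\E_{\budget_u}[\cdot]$ on its left-hand side, no averaging is needed and feasibility of $(\widehat{\alpha}, \widehat{\beta})$ is immediate.

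Next I would evaluate the dual objective of $(\widehat{\alpha}, \widehat{\beta})$. By the tower rule, $\E_{\budget}\big[\,\sum_u \E[\alpha_u \mid \budget_{-u}] + \sum_v \E[\beta_v \mid \budget]\,\big] = \E\big[\,\sum_u \alpha_u + \sum_v \beta_v\,\big]$, so the dual objective is $\tfrac{1}{\Gamma}\,\E\big[\,\sum_u \alpha_u + \sum_v \beta_v\,\big]$. As in Lemma~\ref{lem:configuration-lp-primal-dual}, the gain-splitting rule merely redistributes the gain the algorithm collects, so this quantity equals the total gain collected along the run; and, taken over all the randomness, the expected gain collected from any fixed offline vertex equals the probability that this vertex is successful (the exponential-budget, resp.\ uniform-threshold, computation recalled in Sections~\ref{sec:prelim} and~\ref{sec:lp}). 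Summing over $u \in U$, the dual objective equals $\tfrac{1}{\Gamma}$ times the algorithm's expected objective. It then remains to chain $\tfrac{1}{\Gamma} \cdot (\text{expected objective}) \ge (\text{optimal dual value}) \ge (\text{optimal primal value}) \ge \SOPT$, using weak duality for the feasible pair on the left and Theorem~\ref{thm:stochastic-configuration-lp} for the last step; rearranging gives $\Gamma$-competitiveness w.r.t.\ $\SOPT$.

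The step I expect to need the most care is the identification ``total split $=$ total gain collected $=$ expected objective'': in the stochastic-budget model (and, for general $p_{uv}$, the stochastic-threshold model) the contribution of a matched edge $(u,v)$ is really its \emph{incremental} contribution to $\min\{\load_u, \budget_u\}$, and one must check that these increments telescope to the load capped at the budget, whose expectation over $\budget_u$ is $\Pr[u\text{ successful}]$. A minor secondary point is that the Stochastic Configuration LP and its dual are formally infinite-dimensional, since their variables range over a continuum of budget profiles, so one should note that weak duality is invoked here only as the termwise inequality between a feasible primal and a feasible dual solution, which continues to hold. Neither point touches the novel structure of the LP, which is why the whole proof can stay essentially as short as that of Lemma~\ref{lem:succinct-stochastic-configuration-lp-primal-dual}.
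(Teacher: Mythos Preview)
Your proposal is correct and follows essentially the same approach as the paper's proof: scale the conditional-expectation dual variables by $\tfrac{1}{\Gamma}$, observe that the hypothesis gives dual feasibility for~\eqref{eqn:verbose-form-dual}, note that the dual objective equals $\tfrac{1}{\Gamma}$ times the algorithm's expected objective, and then chain weak duality with Theorem~\ref{thm:stochastic-configuration-lp}. Your additional remarks on the telescoping of gains and on weak duality in the continuum setting are reasonable caveats that the paper's terse proof simply omits.
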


\begin{proof}
    The lemma's assumptions imply that
    (1) $\frac{1}{\Gamma} \alpha_u(\budget_{-u})$ and $\frac{1}{\Gamma} \beta_v(\budget)$ are a feasible dual assignment,
    and the corresponding dual objective equals $\frac{1}{\Gamma}$ times the algorithm's expected objective.
    Further, the dual objective of any feasible dual assignment is greater than or equal to the optimal dual objective, which by weak duality is greater than or equal to the optimal primal objective.
    Putting these together with Theorem~\ref{thm:stochastic-configuration-lp} completes the proof.
\qed \end{proof}

\section{Ranking}
\label{sec:ranking}

\subsection{Basics}

We will assume throughout the section that the ranks of offline vertices are distinct, since that happens with probability $1$.
Below we first develop some basic elements that will be used by the proofs of both results in this section.

\smallskip

\noindent
\emph{Dual Updates.~}
We start by explaining the dual update rule associated with Ranking, which is identical to the existing one in the literature.
The dual variables are initially $0$.
Let $g:[0,1] \to [0,1]$ be a non-decreasing gain splitting function to be determined.
We will assume that $g(1) = 1$ which will allow us to handle a boundary case of the analysis under a unified framework.
Recall that $\rank_u$ denotes the rank of $u$ which is uniformly drawn between $0$ and $1$ at the beginning by the Ranking algorithm.
When the algorithm matches an online vertex $v$ to an offline vertex $u$, increase $\alpha_u$ by $p \cdot g(\rank_u)$, and set $\beta_v$ as $p \cdot \big(1-g(\rank_u)\big)$, where recall that $p$ denotes the equal success probability of matching neighboring vertices.

\smallskip
\noindent
\emph{Characterization of Matching.~}
All analyses in this section will fix an offline vertex $u \in U$ and the ranks $\rank_{-u}$ and stochastic thresholds $\threshold_{-u}$ of other offline vertices.
A canonical analysis of Ranking, such as those for Online Bipartite Matching (c.f., Devanur et al.~\cite{DevanurJK:SODA:2013}), would further fix a neighboring online vertex $v \in V$ and characterize the matching outcome of $u$ and $v$.
For the problem of Online Matching with Stochastic Rewards and in the spirit of Configuration LPs, however, we need to characterize the joint matching outcome of $u$ and \emph{all its neighbors}.

We first define some notations.
Consider an imaginary run of Ranking with vertex $u$ removed, while keeping the ranks of other offline vertices as $\rank_{-u}$.
Further consider an online vertex $v$ in $u$'s neighborhood.
If $v$ is matched to an offline vertex $u'$ in the imaginary run, define the $v$'s \emph{critical rank} as $\critical_v = \rank_{u'}$.
If $v$ is not matched, define its critical rank as $\critical_v = 1$.
Finally, for any $0 \le \critical \le 1$, let $N_u(\critical)$ be the set of $u$'s neighbors whose critical ranks are greater than or equal to $\critical$.

\begin{lemma}
    \label{lem:ranking-outcome}
    Suppose that $u$'s rank is $\rank_u$ and $u$'s stochastic threshold is such that $u$ succeeds after $i$ matches.
    Then, the subset of online vertices matched to $u$ is the first $\min \big\{ i, |N_u(\rank_u)| \big\}$ vertices in $N_u(\rank_u)$ according to the arrival order.
\end{lemma}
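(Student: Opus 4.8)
The plan is to prove this by induction on the arrival order of $u$'s neighbors, tracking a single invariant: at any point during the actual run of Ranking (with $u$ present), if we have processed the online vertices in $u$'s neighborhood up to and including some vertex $v$, then the set of vertices matched to $u$ so far consists of exactly the first few vertices of $N_u(\rank_u)$ in arrival order, up to the cap $i$. The key comparison is between the actual run and the imaginary run with $u$ removed, and the main tool is the standard ``Ranking is monotone'' fact: removing one offline vertex only improves (in the rank sense) the match of every online vertex, so the imaginary run's matches dominate the real run's matches vertex by vertex.

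First I would set up the coupling between the two runs and record the monotonicity lemma: for every online vertex $v$, the offline vertex matched to $v$ in the real run has rank at least as small as (or equal to) $\critical_v$, the rank of $v$'s match in the imaginary run — equivalently, if $v$ is not matched to $u$ in the real run, then $v$ is matched (in the real run) to some $u'$ with $\rank_{u'} \le \critical_v$. This is exactly the classical argument that adding an offline vertex can only help each online vertex; it is where the distinctness-of-ranks assumption is used. Second, I would argue the forward direction: if $v \in N_u(\rank_u)$, i.e. $\critical_v \ge \rank_u$, and $u$ has not yet succeeded and has been matched fewer than $i$ times when $v$ arrives, then $v$ must be matched to $u$. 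Indeed, by monotonicity every neighbor of $v$ other than $u$ that is still available to $v$ at that time already has a match of rank $\le \critical_v$ in the imaginary run — wait, that is not quite it; the cleaner way is: in the imaginary run $v$'s best available option has rank $\critical_v$; in the real run the only extra option is $u$ with rank $\rank_u \le \critical_v$, and $u$ is still unsuccessful (hence available) by hypothesis, so Ranking matches $v$ to $u$. Third, the reverse direction: if $v \notin N_u(\rank_u)$, i.e. $\critical_v < \rank_u$, then $v$ is matched to the same offline vertex in both runs (a vertex of rank $\critical_v < \rank_u$), so in particular $v$ is not matched to $u$; this again follows from monotonicity plus the observation that $u$ is never a better option for such $v$ than its imaginary-run partner. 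Combining these, the set of vertices matched to $u$ is precisely the prefix of $N_u(\rank_u)$ (in arrival order) of length $\min\{i, |N_u(\rank_u)|\}$: we stop either because $u$ succeeds after $i$ matches, or because $N_u(\rank_u)$ is exhausted.

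The step I expect to be the main obstacle is making the monotonicity/coupling argument fully rigorous in the stochastic-rewards setting rather than the classical one. In classical online bipartite matching, ``$u'$ is matched to $v$'' is a permanent fact; here an offline vertex $u'$ can be matched several times (unsuccessfully) before succeeding, so ``available when $v$ arrives'' depends on $u'$'s stochastic threshold, and the coupling must be between the real run and the imaginary run under the \emph{same} thresholds $\threshold_{-u}$ and ranks $\rank_{-u}$. I would need to check that removing $u$ does not change which neighbors of any other vertex are available in a way that breaks the induction — concretely, that the multiset of matches received by each $u' \ne u$ in the real run is ``delayed'' relative to the imaginary run, so $u'$ succeeds no earlier in the real run, preserving availability. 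This amounts to re-deriving the Ranking monotonicity lemma with multiplicities, which is routine but needs care; once it is in hand, the three directional claims above are short. I would also handle the boundary cases ($\rank_u$ larger than all critical ranks, so $N_u(\rank_u)=\emptyset$ and $u$ is matched to nothing; or $i \ge |N_u(\rank_u)|$) uniformly by the convention $\critical_v = 1$ for unmatched $v$, matching the paper's stated setup.
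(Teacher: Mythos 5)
Your high-level plan (compare the real run against the $u$-removed run, and invoke a monotonicity fact) is in the right spirit, but there is a genuine gap in the forward direction, and it is exactly the step you flag as ``the main obstacle.'' Your monotonicity lemma gives, for each online vertex $v$, that its real-run match has rank $\le \critical_v$. This is the wrong direction for the claim you need in step two. To conclude that a vertex $v$ with $\critical_v \ge \rank_u$ is matched to $u$ (while $u$ is available), you assert ``in the real run the only extra option is $u$.'' That is not justified, and it is the crux: adding $u$ diverts matches away from other offline vertices, so those vertices may have \emph{fewer} matches when $v$ arrives in the real run, and hence may be newly available. If any such newly-available vertex $u''$ had $\rank_{u''} < \rank_u$, Ranking would prefer $u''$ over $u$ and your conclusion would fail. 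Nothing in the stated monotonicity rules this out; it only bounds how good $v$'s match can be, not which vertices become available. What is needed is the sharper structural fact that the cascade of rematchings triggered by $u$'s presence touches only offline vertices of rank strictly above $\rank_u$, so that vertices of rank below $\rank_u$ have identical match histories in both runs. Your Claim~3, by contrast, is fine once the coupling is in place, since there the monotonicity inequality points the right way ($v$'s real-run match has rank $\le \critical_v < \rank_u$, hence is not $u$).

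The paper establishes exactly the missing structural fact by inducting on the number $i$ of matches that $u$ accepts (not on arrival order). The base case $i=0$ is the imaginary run itself, and the step from $i$ to $i+1$ changes the matching along a single alternating path starting at $u$, whose offline ranks are weakly increasing because each online vertex on the path switches to a strictly better neighbor. This gives directly that vertices of rank below $\rank_u$ are untouched, which is codified in the paper's three-part invariant (properties (a), (b), (c)). That invariant is precisely what lets one say that the only available vertex of rank $\le \rank_u$ when the $(i+1)$-th vertex of $N_u(\rank_u)$ arrives is $u$ itself. If you want to pursue your approach, you should replace the vanilla monotonicity lemma with an inductive invariant of comparable strength — in particular one that distinguishes offline vertices of rank below $\rank_u$ (identical behaviour) from those of rank at or above $\rank_u$ (weakly fewer matches) — at which point you would essentially be reproducing the paper's alternating-path argument in a different induction order.
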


\begin{proof}
    We will prove a stronger claim consisting of the following three properties, by an induction on $0 \le i \le |N_u(\rank_u)|$:
    \begin{enumerate}
        \item[(a)] the subset of online vertices matched to $u$ is the first $i$ vertices in $N_u(\rank_u)$;
        \item[(b)] $u$'s remaining neighbors in $N_u(\rank_u)$ are either matched to offline vertices with ranks greater than $\rank_u$ or unmatched; and
        \item[(c)] $u$'s neighbors outside $N_u(\rank_u)$ are matched to offline vertices with ranks smaller than $\rank_u$.
    \end{enumerate}

    The base case when $i = 0$ corresponds to running Ranking with $u$ removed from the graph.
    Thus the above properties follow by the definition of $N_u(\rank_u)$.

    Suppose that the above claim holds for some $i \ge 0$.
    Next we consider the case of $i+1$.
    The matchings w.r.t.\ $i$ and $i+1$ merely differ by an alternating path triggered by the $(i+1)$-th match to offline vertex $u$.
    Further by the inductive hypothesis, in particular by (b) and (c), this online vertex is the $(i+1)$-th vertex in $N_u(\rank_u)$.
    Hence, we have established property (a) for the case of $i+1$.
    To see the other two properties, further observe that the ranks of offline vertices on the alternating paths are weakly increasing, because every change is triggered by some online vertex's matching to a weakly better neighbor.
    Therefore, the dichotomy of which $u$'s neighbors are matched to offline vertices with ranks smaller than $\rank_u$ stays intact.

    Finally, we remark that the claim for $i = N_u(\rank_u)$ also proves the lemma for $i > N_u(\rank_u)$.
\qed \end{proof}

\noindent
\emph{Expectation of Dual Variables.~}
The next invariant about dual variable $\alpha_u$ follows from the above gain splitting rule.



\begin{lemma}
    \label{lem:ranking-alpha-bound}
    For any offline vertex $u \in U$, any ranks $\rank$ and stochastic thresholds $\threshold$ of offline vertices, and the corresponding load $\load_u$ of $u$, we have:
    \[
        \alpha_u ~=~ \load_u \cdot g(\rank_u)
        ~.
    \]
\end{lemma}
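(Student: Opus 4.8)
The plan is to treat this as a bookkeeping identity obtained by tracking the contributions to $\alpha_u$ and to $\load_u$ match by match. First I would fix $u$ together with the ranks $\rank$ and stochastic thresholds $\threshold$ of all offline vertices, so that the entire execution of Ranking on the given instance is deterministic. Under this conditioning, let $k$ denote the total number of matches the algorithm makes to $u$ over the whole run; by Lemma~\ref{lem:ranking-outcome} one could even write $k = \min\{ i, |N_u(\rank_u)| \}$, but the precise value of $k$ is irrelevant here.

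Next I would invoke the dual update rule: $\alpha_u$ is initialized to $0$, and each time the algorithm matches an online vertex $v$ to $u$ it increases $\alpha_u$ by exactly $p \cdot g(\rank_u)$. The key point is that $\rank_u$ is drawn once at the very beginning of Ranking and never changes, so $g(\rank_u)$ is a fixed constant throughout the run once $\rank$ is fixed; summing over the $k$ matches therefore gives $\alpha_u = k \cdot p \cdot g(\rank_u)$. Separately, by definition $\load_u$ is the sum of the success probabilities of the edges matched to $u$, and in the equal-probability setting every such edge has success probability $p$, so $\load_u = k \cdot p$ with the \emph{same} $k$. Substituting $k p = \load_u$ into the expression for $\alpha_u$ yields $\alpha_u = \load_u \cdot g(\rank_u)$.

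I do not expect a real obstacle: the only thing deserving a moment's care is confirming that the $k$ appearing in both computations is literally the same count — that is, that the edges counted by $\load_u$ are exactly the matches that trigger an $\alpha_u$ update, including the final successful match. This is immediate since the update rule fires on every match (successful or not) and the load counts every edge matched to $u$. If one prefers, the whole argument can be phrased as an invariant maintained during the execution: the identity $\alpha_u = \load_u \cdot g(\rank_u)$ holds vacuously before the first match, and each match to $u$ raises both sides by $p \cdot g(\rank_u)$, so it is preserved.
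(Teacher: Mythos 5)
Your proof is correct and matches the paper's (implicit) reasoning: the paper states this invariant without a written-out proof, noting only that it "follows from the above gain splitting rule," which is precisely the match-by-match bookkeeping you carry out. The invariant framing you offer at the end is the cleanest way to phrase it.
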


As a corollary, we can further bound the expectation of $\alpha_u$ conditioned on the ranks $\rank_{-u}$ and stochastic thresholds $\threshold_{-u}$ of other offline vertices.

\begin{lemma}
    \label{lem:ranking-alpha-expectation-bound}
    For any offline vertex $u \in U$, any ranks $\rank_{-u}$ and stochastic thresholds $\threshold_{-u}$ of offline vertices other than $u$, we have:
    \[
        \E_{\rank_u, \threshold_u} \big[ \, \alpha_u \mid \rank_{-u}, \threshold_{-u} \, \big] = \int_0^1 \Big( 1 - (1-p)^{|N_u(\rank_u)|} \Big) g(\rank_u) \D\rank_u
        ~.
    \]
\end{lemma}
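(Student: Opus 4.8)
The plan is to derive the identity from the pointwise formula $\alpha_u = \load_u \cdot g(\rank_u)$ of Lemma~\ref{lem:ranking-alpha-bound} together with the matched-subset characterization of Lemma~\ref{lem:ranking-outcome}, by integrating out $\rank_u$ and $\threshold_u$ in turn.

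First I would fix $\rank_{-u}$ and $\threshold_{-u}$ as in the hypothesis and recall that the critical ranks $\critical_v$, and hence every set $N_u(\critical)$, are defined through the imaginary run of Ranking with $u$ removed; thus they are functions of $\rank_{-u}$ and $\threshold_{-u}$ only, and in particular do not depend on $\rank_u$ or $\threshold_u$. Since $\rank_u$ and $\threshold_u$ are independent and uniform on $[0,1]$, and since $g(\rank_u)$ depends on $\rank_u$ alone, Lemma~\ref{lem:ranking-alpha-bound} gives
\[
\E_{\rank_u,\threshold_u}\big[\,\alpha_u \mid \rank_{-u},\threshold_{-u}\,\big] = \int_0^1 g(\rank_u)\,\E_{\threshold_u}\big[\,\load_u \mid \rank_u,\rank_{-u},\threshold_{-u}\,\big]\,\D\rank_u .
\]
It therefore suffices to show that the inner conditional expectation of the load equals $1-(1-p)^{|N_u(\rank_u)|}$.

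Next I would fix $\rank_u$ and abbreviate $n = |N_u(\rank_u)|$. Under equal success probabilities, after $u$ has received $k$ matches its accumulated success probability is $\tilde p = 1-(1-p)^k$, so the event ``$u$ succeeds after exactly $i$ matches'' is $1-(1-p)^{i-1} < \threshold_u \le 1-(1-p)^i$, which has probability $p(1-p)^{i-1}$ over the uniform $\threshold_u$, while the complementary event (the load never reaches $\threshold_u$ within the first $n$ matches) has probability $(1-p)^n$. By Lemma~\ref{lem:ranking-outcome}, on the event ``$u$ succeeds after $i$ matches'' the subset matched to $u$ has size $\min\{i,n\}$, and the same quantity captures the complementary event by taking $i>n$ (equivalently, size $n$); hence $\load_u = p\cdot\min\{i,n\}$. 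Writing $I$ for the geometric variable with $\Pr[I\ge k]=(1-p)^{k-1}$, a tail-sum gives $\E[\min\{I,n\}]=\sum_{k=1}^{n}(1-p)^{k-1}$, so by telescoping
\[
\E_{\threshold_u}\big[\,\load_u \mid \rank_u,\rank_{-u},\threshold_{-u}\,\big] = p\sum_{k=1}^{n}(1-p)^{k-1} = 1-(1-p)^n .
\]
Plugging this into the previous display yields the claimed identity.

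The calculation is short and I do not anticipate a genuine obstacle; the only points requiring care are (i) justifying that $N_u(\rank_u)$ is constant with respect to the inner expectation over $\threshold_u$, which holds because the imaginary run defining it omits $u$ entirely, and (ii) handling the truncation at $n$ correctly — Lemma~\ref{lem:ranking-outcome} already folds the two regimes ``$u$ succeeds while neighbors in $N_u(\rank_u)$ remain'' and ``$u$ exhausts $N_u(\rank_u)$'' into the single quantity $\min\{i,n\}$, so no separate boundary case is needed, and the telescoping identity $p\sum_{k=1}^{n}(1-p)^{k-1}=1-(1-p)^n$ is valid for every $p\in[0,1]$.
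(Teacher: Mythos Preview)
Your proposal is correct and follows essentially the same route as the paper: reduce via Lemma~\ref{lem:ranking-alpha-bound} to computing $\E_{\threshold_u}[\load_u \mid \rank,\threshold_{-u}]$, then use Lemma~\ref{lem:ranking-outcome} to identify the matched set as (a prefix of) $N_u(\rank_u)$ and conclude that this expectation equals $1-(1-p)^{|N_u(\rank_u)|}$. The only cosmetic difference is that the paper invokes the one-line identity ``expected load of $u$ equals the probability that $u$ succeeds'' to get $1-(1-p)^{|N_u(\rank_u)|}$ directly, whereas you expand it through the geometric variable and a tail sum; both compute the same quantity.
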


\begin{proof}
    Comparing the stated bound with the conclusion of Lemma~\ref{lem:ranking-alpha-bound}, it remains to prove that:
    \begin{equation}
        \label{eqn:ranking-load-expectation-bound}
        \E_{\threshold_u} \big[ \, \load_u \mid \rank, \threshold_{-u} \,\big] = 1 - (1-p)^{|N_u(\rank_u)|}
        ~.
    \end{equation}

    By Lemma~\ref{lem:ranking-outcome}, the subset of online vertices matched to $u$ would be $N_u(\rank_u)$ if $u$ never succeeds.
    Since the expected load of $u$ equals the probability that it succeeds, which is $1 - (1-p)^{|N_u(\rank_u)|}$, we prove Equation~\eqref{eqn:ranking-load-expectation-bound} and thus the lemma.
\qed \end{proof}

Recall that $N_u(\critical)$ is the subset of $u$'s neighbors whose critical ranks are greater than or equal to $\critical$.
Further let $N_u(\critical, v)$ be the subset of vertices in $N_u(\critical)$ that arrive before $v$.

\begin{lemma}
    \label{lem:ranking-beta-expectation-bound}
    For any offline vertex $u \in U$, any $u$'s neighbor $v$, any ranks $\rank_{-u}$ and stochastic thresholds $\threshold_{-u}$ of offline vertices other than $u$, we have:
    \[
        \E_{\rank_u, \threshold_u} \big[ \, \beta_v \mid \rank_{-u}, \threshold_{-u} \, \big] ~\ge~ p \bigg( 1 - g(\critical_v) +\int_0^{\critical_v} \big( 1-p \big)^{|N_u(\rank_u, v)|} \big( g(\critical_v)-g(\rank_u) \big) \D \rank_u \bigg)
        ~.
    \]
\end{lemma}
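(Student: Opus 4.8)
The plan is to condition on the rank $\rank_u$ and on whether Ranking ever matches $v$ to $u$, take the expectation over $\threshold_u$, and finally integrate over $\rank_u \in [0,1]$. Recall that the dual update sets $\beta_v = p\bigl(1 - g(\rank_{u''})\bigr)$ whenever $v$ is matched to an offline vertex $u''$ and $\beta_v = 0$ if $v$ is never matched. So there are two regimes: (i) $v$ is matched to $u$ itself, where $\beta_v = p\bigl(1 - g(\rank_u)\bigr)$ exactly; and (ii) $v$ is matched to some vertex other than $u$, or unmatched, where the claim will be $\beta_v \ge p\bigl(1 - g(\critical_v)\bigr)$.

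To get the bound in regime (ii) I would reuse the alternating-path structure already developed in the proof of Lemma~\ref{lem:ranking-outcome}: passing from the matching after $i$ matches to $u$ to the matching after $i+1$ matches only moves online vertices to weakly better (smaller-rank) offline neighbors, except for the single new vertex handed to $u$. Chaining this from $i=0$, which is the imaginary run without $u$ where $v$ is matched to a vertex of rank $\critical_v$ (or unmatched, in which case $\critical_v = 1$), it follows that in the real run either $v$ is matched to $u$ at some step, or $v$ ends up matched to an offline vertex of rank at most $\critical_v$. Since $g$ is non-decreasing and $g(1)=1$, the latter gives $\beta_v \ge p\bigl(1 - g(\critical_v)\bigr)$, which also covers the boundary case $\critical_v = 1$ as the trivial inequality $\beta_v \ge 0$.

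Next I would pin down the probability of regime (i) given $\rank_u$. By Lemma~\ref{lem:ranking-outcome}, $v$ is matched to $u$ exactly when $v \in N_u(\rank_u)$, i.e.\ $\rank_u \le \critical_v$, and $u$ needs at least $i \ge |N_u(\rank_u, v)| + 1$ matches to succeed, since $v$ is the $\bigl(|N_u(\rank_u,v)|+1\bigr)$-th vertex of $N_u(\rank_u)$ in arrival order. Because the success probabilities are all equal to $p$, the load of $u$ after its first $j$ matches is $1 - (1-p)^j$, so with $\threshold_u$ uniform on $[0,1]$ we get $\Pr[i \ge k] = (1-p)^{k-1}$, and hence $\Pr[\,v \text{ matched to } u \mid \rank_u\,] = (1-p)^{|N_u(\rank_u, v)|}$ when $\rank_u \le \critical_v$, and $0$ otherwise.

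Putting the two regimes together, for each fixed $\rank_u$ we obtain
\[
\E_{\threshold_u}\bigl[\,\beta_v \mid \rank, \threshold_{-u}\,\bigr]
~\ge~
p\bigl(1 - g(\critical_v)\bigr) + \mathbf{1}\!\left[\rank_u \le \critical_v\right] (1-p)^{|N_u(\rank_u, v)|}\, p\,\bigl(g(\critical_v) - g(\rank_u)\bigr),
\]
where the second term is nonnegative since $g$ is non-decreasing and $\rank_u \le \critical_v$ on its support. Integrating over $\rank_u \in [0,1]$ — the constant term contributes $p\bigl(1 - g(\critical_v)\bigr)$ and the indicator confines the second term to $[0, \critical_v]$ — yields precisely the stated lower bound. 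The only genuinely delicate point is regime (ii): one must argue cleanly that, when $v$ is not matched to $u$, its final partner has rank at most $\critical_v$ even though $u$ may be matched many times; this monotonicity is essentially already encapsulated in the induction behind Lemma~\ref{lem:ranking-outcome}, and everything else is a routine computation.
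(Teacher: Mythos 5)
Your proposal is correct and follows essentially the same route as the paper: establish the baseline $\beta_v \ge p\bigl(1-g(\critical_v)\bigr)$ from the alternating-path monotonicity underlying Lemma~\ref{lem:ranking-outcome}, then add the bonus term $p\bigl(g(\critical_v)-g(\rank_u)\bigr)$ on the event that $\rank_u < \critical_v$ and all $|N_u(\rank_u,v)|$ prior matches to $u$ fail (probability $(1-p)^{|N_u(\rank_u,v)|}$), and integrate over $\rank_u$. Your treatment of regime (ii) is somewhat more explicit than the paper's (which simply cites Lemma~\ref{lem:ranking-outcome}), but the decomposition and computation are identical.
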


\begin{proof}
    By the characterization of matching outcome (Lemma~\ref{lem:ranking-outcome}) and the dual update rule, dual variable $\beta_v$ is at least $p \big( 1 - g(\critical_v) \big)$ regardless of what $u$'s rank $\rank_u$ and stochastic threshold $\threshold_u$ are.
    Further, if $u$'s rank is better than $v$'s critical rank, i.e., $\rank_u < \critical_v$, then $v$ would be matched to $u$ if $u$'s stochastic threshold is such that all matches to $u$ before $v$ are unsuccessful.
    If that happens, dual variable $\beta_v$ would increase by $p \big( 1 - g(\critical_v) \big) - p \big( 1 - g(\rank_u) \big) = p \big( g(\rank_u) - g(\critical_v) \big)$ accordingly compared to the aforementioned $p \big( 1 - g(\critical_v) \big)$ baseline.
    By Lemma~\ref{lem:ranking-outcome}, the online vertices matched to $u$ before $v$ are those in $N_u(\rank_u, v)$.
    Therefore the probability that they all fail is $( 1-p )^{|N_u(\rank_u, v)|}$.
    Putting everything together gives the expression on the right-hand-side (RHS).
\qed \end{proof}

\subsection{Non-stochastic Benchmark}

\begin{theorem}
    \label{thm:ranking-non-stocahstic}
    Ranking is at least $0.572$%
    -competitive w.r.t.\ the offline (non-stochastic) benchmark $\OPT$ for any instance with equal success probabilities.
\end{theorem}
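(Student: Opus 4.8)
By Lemma~\ref{lem:configuration-lp-primal-dual}, it suffices to design a non-decreasing gain-splitting function $g\colon[0,1]\to[0,1]$ with $g(1)=1$ so that, with the dual updates of Ranking and for $\Gamma=0.572$,
\[
\E\Big[\,\alpha_u+\sum_{v\in S}\beta_v\,\Big]\ \ge\ \Gamma\cdot\bar p_{uS}
\]
for every offline vertex $u$ and every $S\subseteq V$. Since $\bar p_{uS}$ is deterministic and the dual variables are non-negative, I would condition once and for all on the ranks $\rank_{-u}$ and thresholds $\threshold_{-u}$ of the other offline vertices; this fixes the imaginary run of Ranking without $u$, hence all critical ranks $\critical_v$ and all counting functions $|N_u(\cdot)|$, $|N_u(\cdot,v)|$, and we may assume $S\subseteq N(u)$. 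Plugging in the bounds of Lemma~\ref{lem:ranking-alpha-expectation-bound} (for $\alpha_u$) and Lemma~\ref{lem:ranking-beta-expectation-bound} (for each $\beta_v$, $v\in S$) reduces the goal to the deterministic inequality
\[
\int_0^1\!\big(1-(1-p)^{|N_u(\rank_u)|}\big)g(\rank_u)\,\D\rank_u\ +\ p\!\sum_{v\in S}\Big(1-g(\critical_v)+\!\int_0^{\critical_v}\!(1-p)^{|N_u(\rank_u,v)|}\big(g(\critical_v)-g(\rank_u)\big)\,\D\rank_u\Big)\ \ge\ \Gamma\,\bar p_{uS},
\]
whose left-hand side depends only on the arrival order of $u$'s neighbors, the multiset of their critical ranks, and which of them lie in $S$.

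The heart of the proof is to minimize this left-hand side over all such combinatorial configurations, for each fixed $\bar p_{uS}$ (equivalently, each fixed $|S|$ since probabilities are equal); the binding regime is $p_{uS}\approx 1$, i.e.\ $|S|\approx\nicefrac1p$, which becomes a continuum limit as $p\to 0$. This is exactly where the neighborhood-wide characterization of Lemma~\ref{lem:ranking-outcome} is indispensable: the critical ranks of the vertices of $S$ already enter the $\alpha_u$ term through $|N_u(\rank_u)|$, so the contribution of each $v\in S$ cannot be bounded in isolation; and the adversary may insert extra neighbors of $u$ \emph{outside} $S$, which simultaneously inflate $|N_u(\rank_u)|$ (helping $\alpha_u$) and the counts $|N_u(\rank_u,v)|$ for $v\in S$ (hurting $\beta_v$), so both effects must be tracked jointly. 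I expect two ``extreme'' families to drive the bound --- a single neighbor with critical rank $x$, forcing $(1-x)g(x)\le 1-\Gamma$ for all $x$, and the limit of arbitrarily many extra neighbors with critical rank near $1$, forcing $\int_0^1 g\ge\Gamma$ --- together with a one-parameter interpolation. The technical content is to show, by exchange/monotonicity arguments, that the minimum over all configurations is attained within this canonical family, so that the left-hand side reduces to an explicit functional $F(g;x)$ of $g$ and a single parameter $x\in[0,1]$.

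Finally one chooses $g$: take it to be the non-decreasing solution of the integral (or differential) equation obtained by making all canonical constraints tight, $F(g;x)=\Gamma\,\bar p_{uS}$ for every $x$, capped at $1$ once it would reach $1$ --- the standard recipe for producing such functions (cf.\ Devanur--Jain--Kleinberg~\cite{DevanurJK:SODA:2013} and Huang--Zhang~\cite{HuangZ:STOC:2020}) --- and then verify that this $g$ is legitimate (non-decreasing, $[0,1]$-valued, $g(1)=1$), that the canonical family is genuinely the worst case, and that the resulting value of $\Gamma$ is at least $0.572$.

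\textbf{Main obstacle.} The crux is the middle step: pinning down the worst-case configuration of critical ranks, arrival order, and the set $S$ \emph{simultaneously}, and in particular controlling the tension between the $\alpha_u$-gain and the $\beta_v$-loss produced by neighbors outside $S$ --- precisely where the classical single-edge analysis of Ranking fails and the joint characterization of $u$'s whole neighborhood is required. A secondary difficulty is that for general (non-infinitesimal) equal probabilities the exponential $e^{-p|N|}$ of the infinitesimal case is replaced by $(1-p)^{|N|}$, so the limiting arguments must be carried out with $|S|$ finite and the optimization of $\Gamma$ done uniformly in $p\in(0,1]$.
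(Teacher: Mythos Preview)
Your plan is essentially the paper's: condition on $\rank_{-u},\threshold_{-u}$, plug in Lemmas~\ref{lem:ranking-alpha-expectation-bound}--\ref{lem:ranking-beta-expectation-bound}, minimize the resulting deterministic expression over all arrival orders and critical-rank profiles of $u$'s full neighborhood, and then choose $g$ against the worst case. Two small corrections will save you time. First, the worst-case family is \emph{two}-parameter, not one: after the exchange arguments the paper shows (i) vertices outside $S$ arrive first and all share a common critical rank $\critical_0$, (ii) $p_{uS}=1$ is binding, and (iii) vertices in $S$ all share a common critical rank $\rank_0\ge\critical_0$; the residual inequality is
\[
\int_0^{\critical_0} g + \bigl(1-g(\rank_0)\bigr) + \bigl(1-\tfrac1e\bigr)(\rank_0-\critical_0)\,g(\rank_0)\ \ge\ \Gamma,
\]
so the tight constraint on $g$ is $\bigl(1-(1-\tfrac1e)x\bigr)g(x)\le 1-\Gamma$ rather than $(1-x)g(x)\le 1-\Gamma$, together with $\int_0^1 g\ge\Gamma$; the optimal $g$ is $\min\{c/(e-(e-1)\rank),\,1-\tfrac1e\}$ capped at $1$ at the endpoint. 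Second, your ``secondary difficulty'' about general $p$ is dispatched in the paper not by uniform-in-$p$ estimates but by an explicit reduction: for any instance with equal probability $p$, replace each online vertex by $n$ copies with probability $p'=1-(1-p)^{1/n}$ and take $n\to\infty$, which only decreases the left-hand side, so the infinitesimal case is genuinely the worst.
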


The rest of the subsection is devoted to proving Theorem~\ref{thm:ranking-non-stocahstic} by an online primal dual analysis with the Configuration LP.
By Lemma~\ref{lem:configuration-lp-primal-dual}, it suffices to show that $\E \big[ \, \alpha_u + \sum_{v \in S} \beta_v \big] \ge \Gamma \cdot \bar{p}_{uS}$ for any offline vertex $u \in U$ and any subset of online vertices $S \subseteq V$, and with the stated competitive ratio $\Gamma = 0.572$.
We will prove it further conditioning on any ranks $\rank_{-u}$ and stochastic thresholds $\threshold_{-u}$ of other vertices, i.e.:
\begin{equation}
    \label{eqn:ranking-non-stochastic-dual-feasibility}
    \E_{\rank_u, \threshold_u} \bigg[ \, \alpha_u + \sum_{v \in S} \beta_v \mid \rank_{-u}, \threshold_{-u} \, \bigg] ~\ge~ \Gamma \cdot \bar{p}_{uS}
    ~,
\end{equation}

First we apply the bounds for dual variables' expectation from Lemmas~\ref{lem:ranking-alpha-expectation-bound} and \ref{lem:ranking-beta-expectation-bound} to the LHS of the inequality:
\begin{align}
    \E_{\rank_u, \threshold_u} \bigg[ \, \alpha_u + \sum_{v \in S} \beta_v \mid \rank_{-u}, \threshold_{-u} \, \bigg]
    &
    \ge \int_0^1 \big( 1 - \underbrace{\vphantom{\bigg|} (1-p)^{|N_u(\rank_u)|}}_{(a)} \big) g(\rank_u) \D\rank_u
    \label{eqn:ranking-non-stochastic-dual-feasibility-original}
    \\
     & \quad
    + \underbrace{\sum_{v \in S} ~p}_{(b)} \bigg( 1 - g(\critical_v) +\int_0^{\critical_v} \underbrace{\vphantom{\sum_{v \in S}} \big( 1-p \big)^{|N_u(\rank_u, v)|}}_{(c)} \big( g(\critical_v)-g(\rank_u) \big) \D \rank_u \bigg)
    ~.
    \notag
\end{align}

\noindent
\emph{Assumptions.~}
The rest of the analysis will treat the above as a pure inequality problem, where the arrival order and critical ranks of $u$'s neighbors may be chosen arbitrarily by an adversary.
In particular, when we characterize the worst-case scenario of the inequality we shall not concern about how to construct an instance to get the specified critical ranks.
With this treatment, we can make several assumptions that on the one hand are without loss of generality, and on the other hand simplify the subsequent analysis.

First recall the assumption that the offline vertices' ranks are distinct, because the exceptions happen with zero probability.
In other words, we may relax and simplify the RHS assuming distinct ranks, which for example implies that $u$'s rank $\rank_u$ does not equal the critical rank $\critical_v$ of any $v$.
Although the resulting bound may be violated for some zero measure subset of ranks $\rank_u$, the bound still holds after the integration.

Further, it suffices to consider the above inequality for infinitesimal success probabilities.
For $p = 1$ it is the online bipartite matching problem by Karp et al.~\cite{KarpVV:STOC:1990}, for which Ranking is $1-\frac{1}{e}$ competitive.
For any instance with $0 < p < 1$ and any large integer $n$, we can instead consider a new instance such that (i) the equal success probabilities is $p' = 1 - \sqrt[n]{1-p}$, (ii) each online vertex in the original instance has $n$ copies in the new instance, which arrive consecutively and have the same critical ranks as the original vertex, and (iii) the subset $S'$ of concerned in the new instance contains $m = \lfloor \frac{p}{p'} \rfloor$ copies of each vertex $v \in S$ in the original one.
By this construction, the RHS of Eqn.~\eqref{eqn:ranking-non-stochastic-dual-feasibility} is the same in the two instances, up to an error from the rounding of $m$ that diminishes as $n$ tends to infinity.
Further, the RHS of Eqn.~\eqref{eqn:ranking-non-stochastic-dual-feasibility-original} is smaller in new instance because (i) and (ii) ensure that part (a) stays the same and part (c) decreases, and (iii) ensures that the changes to part (b) weakly decreases the expression.
Hence, it suffices to establish the inequality in the new instance which satisfy the assumption of infinitesimal success probability.
We may therefore rewrite the expression as:
\begin{align}
    \E_{\rank_u, \threshold_u} \bigg[ \, \alpha_u + \sum_{v \in S} \beta_v \mid \rank_{-u}, \threshold_{-u} \, \bigg]
    &
    \ge \int_0^1 \big( 1 - e^{-p|N_u(\rank_u)|} \big) g(\rank_u) \D\rank_u + \sum_{v \in S} ~p \big( 1 - g(\critical_v) \big)
    \label{eqn:ranking-non-stochastic-dual-feasibility-small}
    \\
     & \quad
    + \sum_{v \in S} ~ \int_0^{\critical_v} e^{-p|N_u(\rank_u,v)|} \big( 1 - e^{-p} \big) \big( g(\critical_v)-g(\rank_u) \big) \D \rank_u
    ~.
    \notag
\end{align}

Finally, it is sufficient to establish the stated approximate dual feasibility condition when the critical ranks $\critical_v$ are distinct for different $v$.
Otherwise, we may slightly perturb the critical ranks, e.g., by a random noise drawn from $[-\epsilon, \epsilon]$ for a sufficiently small $\epsilon > 0$, to get a new instance satisfying our assumption.
The resulting RHS of Eqn.~\eqref{eqn:ranking-non-stochastic-dual-feasibility-small} converges to the original one when the magnitude of perturbations tends to zero.

For any $0 \le \critical \le 1$, let:
\[
    P(\critical) ~=~ p \cdot \big| N_u(\critical) \cap S \big|
    \quad,\qquad
    \bar{P}(\critical) ~=~ p \cdot \big| N_u(\critical) \setminus S \big|
\]
denote the sums of success probabilities of $u$'s neighbors with critical ranks greater than or equal to $\critical$, inside $S$ and outside $S$ respectively.

\smallskip
\noindent
\emph{Characterization of Worst-case.~}
We next present a series of lemmas that characterize the worst-case of various aspects of the instance, including the arrival order, the size of $S$, and critical ranks, etc.
We defer the relative long proofs to Appendix~\ref{app:ranking} for readability.

\begin{lemma}[Worst Arrival Order]
    \label{lem:ranking-non-stochastic-dual-feasibility-worst-order}
    We have:
    \begin{align}
        \E_{\rank_u, \threshold_u} \bigg[ \, \alpha_u + \sum_{v \in S} \beta_v \mid \rank_{-u}, \threshold_{-u} \, \bigg]
         &
        ~\ge~
        \int_0^1 \Big( 1 - e^{-\bar{P}(\rank_u)} \Big) g(\rank_u) \D\rank_u
        - \int_0^1 \big( 1 - g(\critical_v) \big) \D P(\critical_v)
        \notag    \\
         & \qquad
        - \int_0^1 e^{-P(\rank_u)-\bar{P}(\rank_u)} \int_{\rank_u}^1 g(\critical_v) \D e^{P(\critical_v)} \D \rank_u
        ~.
        \label{eqn:ranking-non-stochastic-dual-feasibility-worst-order}
    \end{align}
\end{lemma}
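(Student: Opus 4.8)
The plan is to exploit that in the lower bound of Eqn.~\eqref{eqn:ranking-non-stochastic-dual-feasibility-small} only the last sum depends on the arrival order — through the counts $|N_u(\rank_u, v)|$ — while the $\alpha_u$-term $\int_0^1 (1 - e^{-p|N_u(\rank_u)|}) g(\rank_u)\D\rank_u$ and the baseline $\sum_{v \in S} p(1 - g(\critical_v))$ do not. So I would fix the critical ranks and minimize this bound over arrival orders, and show the minimum already equals the stated right-hand side. I claim the minimizing order is: the neighbors of $u$ outside $S$ arrive first (in any internal order), followed by the neighbors of $u$ inside $S$ in \emph{increasing} order of their critical ranks $\critical_v$.

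I would establish this claim by two adjacent-transposition (bubble-sort) exchange arguments. (i) If a neighbor $z \notin S$ is immediately preceded by some $v \in S$, swapping them moves $z$ earlier, which can only increase $|N_u(\rank_u, v)|$ (by one when $\critical_z \ge \rank_u$) and leaves all other counts unchanged; since $g(\critical_v) - g(\rank_u) \ge 0$ on $[0, \critical_v]$, the bound weakly decreases, and iterating pushes every neighbor outside $S$ to the front. (ii) If two neighbors $v, w \in S$ with $\critical_v > \critical_w$ are adjacent with $v$ before $w$, swapping them leaves all other counts intact (the pair shares the same set of earlier neighbors) while decreasing the $v$-term and increasing the $w$-term; since the shared background count cancels, the net change works out to $(1 - e^{-p})^2$ times an integral of $g(\critical_v) - g(\critical_w) \ge 0$ against a positive weight, hence $\ge 0$, so the bound weakly decreases and iterating sorts $S$ by increasing critical rank.

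With the order fixed as above, the neighbors of $u$ arriving before a given $v \in S$ are exactly $u$'s neighbors outside $S$ together with the neighbors in $S$ whose critical rank exceeds $\critical_v$, which yields $p\,|N_u(\rank_u, v)| = \bar{P}(\rank_u) + \bigl(P(\rank_u) - P(\critical_v)\bigr)$ for $\rank_u \le \critical_v$. Substituting and exchanging the order of summation, the last sum of Eqn.~\eqref{eqn:ranking-non-stochastic-dual-feasibility-small} becomes $(1 - e^{-p}) \int_0^1 e^{-P(\rank_u) - \bar{P}(\rank_u)} \sum_{v \in S : \critical_v > \rank_u} e^{P(\critical_v)} \bigl( g(\critical_v) - g(\rank_u) \bigr) \D\rank_u$. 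Splitting off the $g(\rank_u)$-part, distinctness of the critical ranks makes the exponentials $e^{P(\critical_v)}$ over the relevant $v$ a geometric progression, and the elementary identity $(1 - e^{-p})\, e^{-P(\rank_u)} \sum_{v \in S : \critical_v > \rank_u} e^{P(\critical_v)} = 1 - e^{-P(\rank_u)}$ turns that part into exactly $\int_0^1 e^{-\bar{P}(\rank_u)}\bigl(1 - e^{-P(\rank_u)}\bigr) g(\rank_u)\D\rank_u$, i.e.\ precisely the amount by which the $\alpha_u$-term exceeds $\int_0^1 (1 - e^{-\bar{P}(\rank_u)}) g(\rank_u)\D\rank_u$; the surviving $g(\critical_v)$-part is $-\int_0^1 e^{-P(\rank_u) - \bar{P}(\rank_u)} \int_{\rank_u}^1 g(\critical_v)\,\D e^{P(\critical_v)}\D\rank_u$ once the jumps of $P$ are written as a Stieltjes measure, and the baseline is $-\int_0^1 (1 - g(\critical_v))\D P(\critical_v)$. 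Summing these three identifications gives the right-hand side of the lemma — in fact with equality in the worst order.

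The step I expect to be the obstacle is this final bookkeeping, not the exchange arguments. Trying to verify the inequality pointwise in $v$ and $\rank_u$ genuinely fails: after peeling $\int_0^1 (1 - e^{-\bar{P}}) g$ off the $\alpha_u$-term, the leftover $\int_0^1 e^{-\bar{P}}(1 - e^{-P}) g$ is, summand by summand, strictly less than the deficit it has to cover, and the accounts balance only after summing over all $v \in S$ with $\critical_v > \rank_u$ via the geometric-series identity above. So the proof has to sum over $v$ before comparing, and recognize that the worst-order bound meets the target exactly. The residual boundary situations — coincident critical ranks, $\critical_v = 1$, $S = \emptyset$ — are handled by the assumption $g(1) = 1$, the almost-sure distinctness of ranks, and the perturbation arguments already set up before the lemma.
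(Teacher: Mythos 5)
Your proposal is correct and takes essentially the same approach as the paper's: identify the worst arrival order (outside-$S$ vertices first, then $S$ in increasing order of critical rank), substitute $p\,|N_u(\rank_u,v)| = \bar P(\rank_u) + P(\rank_u) - P(\critical_v)$, and close with the geometric-series/Stieltjes identification. The only cosmetic difference is that the paper packages your two bubble-sort exchange steps as a direct inequality $|N_u(\rank_u,v)| \le |N_u(\rank_u,v)\cap S| + |N_u(\rank_u)\setminus S|$ (for moving outside-$S$ vertices to the front) plus an appeal to the rearrangement inequality (for sorting inside $S$), which is the same optimization.
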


The lemma's inequality would hold with equality if online vertices in $S$ arrive by increasing order of critical ranks, and vertices outside $S$ arrive before those in $S$.
We remark that the negative signs in front of the second and third integration come from the fact that $P(\critical)$ is non-increasing.
All three parts make positive contribution to the RHS.
The proof is in Appendix~\ref{app:ranking-non-stochastic-dual-feasibility-worst-order}.

\begin{lemma}[Worst Critical Ranks Outside $S$]
    \label{lem:ranking-non-stochastic-dual-feasibility-worst-outside}
    Given any non-increasing function $P : [0, 1] \to [0, \infty)$, the RHS of Eqn.~\eqref{eqn:ranking-non-stochastic-dual-feasibility-worst-order} is minimized when $\bar{P}(\critical)$ is a step function that equals $\infty$ for $\critical < \critical_0$ and equals $0$ for $\critical > \critical_0$, for some threshold $\critical_0$.
\end{lemma}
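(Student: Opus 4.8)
The plan is to fix the function $P$ and view the RHS of Eqn.~\eqref{eqn:ranking-non-stochastic-dual-feasibility-worst-order} as a functional of the non-increasing function $\bar{P} : [0,1] \to [0,\infty)$, then argue that among all such $\bar{P}$ this functional is minimized by a $\{0,\infty\}$-valued step function. Observe that $\bar{P}$ appears in the RHS only through the terms $e^{-\bar{P}(\rank_u)}$ in the first integrand and $e^{-\bar{P}(\rank_u)}$ as a factor in the third integrand; the middle term $- \int_0^1 (1-g(\critical_v)) \D P(\critical_v)$ does not involve $\bar{P}$ at all. Writing $h(\critical) = e^{-\bar{P}(\critical)}$, which ranges over non-decreasing functions $[0,1] \to [0,1]$ since $\bar{P}$ is non-increasing, the relevant part of the RHS becomes
\[
    \int_0^1 \big(1 - h(\rank_u)\big) g(\rank_u) \D\rank_u
    - \int_0^1 e^{-P(\rank_u)} \, h(\rank_u) \int_{\rank_u}^1 g(\critical_v) \D e^{P(\critical_v)} \D\rank_u
    + \text{(terms independent of } h)
    ~,
\]
which is an affine (in fact linear) functional of $h$.

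The key step is then the observation that a linear functional over the convex set of non-decreasing functions $h : [0,1] \to [0,1]$ attains its minimum at an extreme point, and the extreme points of this set are precisely the $\{0,1\}$-valued step functions $h = \mathbbm{1}[\critical \ge \critical_0]$. To make this rigorous without invoking abstract Choquet theory, I would write any non-decreasing $h$ with values in $[0,1]$ as a mixture $h(\critical) = \int_0^1 \mathbbm{1}[\critical \ge \critical_0] \, \D\nu(\critical_0)$ for an appropriate probability-like measure $\nu$ (essentially $\nu$ is the Lebesgue–Stieltjes measure of $h$, together with an atom at $1$ of mass $1 - h(1^-)$ and an atom at $0$ of mass $h(0)$; one checks this represents exactly the non-decreasing $[0,1]$-valued functions). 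Plugging this into the linear functional and swapping the order of integration (Fubini, justified since all integrands are bounded), the functional becomes an average over $\critical_0 \in [0,1]$ of its value at the step functions $h = \mathbbm{1}[\critical \ge \critical_0]$. Hence its minimum over all admissible $h$ is attained at one of these step functions, i.e., with $\bar{P}(\critical) = \infty$ for $\critical < \critical_0$ and $\bar{P}(\critical) = 0$ for $\critical > \critical_0$; the value at the single point $\critical = \critical_0$ is immaterial for the integrals. This is exactly the claimed form.

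I expect the main obstacle to be the bookkeeping in the representation of non-decreasing $[0,1]$-valued functions as mixtures of step functions, particularly handling the boundary behavior at $0$ and $1$ (the atoms) and confirming that $h = e^{-\bar{P}}$ can be any such function given that $\bar{P}$ is allowed to take the value $\infty$ — which is precisely what lets $h$ reach $0$. A secondary point requiring a line of care is the sign and monotonicity of the inner integral $\int_{\rank_u}^1 g(\critical_v) \D e^{P(\critical_v)}$: since $P$ is non-increasing, $e^{P}$ is non-increasing, so $\D e^{P(\critical_v)} \le 0$ and this inner term is non-positive; combined with the explicit minus sign in front, its contribution to the coefficient of $h(\rank_u)$ has a definite sign, but for the extreme-point argument we do not actually need to know this sign — linearity alone suffices — so I would mention it only to reassure the reader that nothing degenerate occurs. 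The argument does not need any structural facts about $g$ beyond those already assumed (non-decreasing, $g(1)=1$).
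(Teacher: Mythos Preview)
Your argument is correct but proceeds differently from the paper. You observe that the RHS of Eqn.~\eqref{eqn:ranking-non-stochastic-dual-feasibility-worst-order} is an affine functional of $h=e^{-\bar P}$, which ranges over non-decreasing $[0,1]$-valued functions, and then use a layer-cake/mixture representation to reduce to the extreme points, i.e., the step functions $\mathbbm{1}[\critical\ge\critical_0]$. The paper instead collects, for each fixed $\rank_u$, the coefficient of $e^{-\bar P(\rank_u)}$, applies an integration by parts to rewrite it as $\big(-g(1)+\int_{\rank_u}^1 e^{P(\critical_v)}\D g(\critical_v)\big)e^{-P(\rank_u)}$, and notes that this coefficient is \emph{decreasing} in $\rank_u$; pointwise minimization then sets $\bar P(\rank_u)=\infty$ where the coefficient is nonnegative and $\bar P(\rank_u)=0$ where it is negative, and monotonicity of the coefficient guarantees the resulting $\bar P$ is itself non-increasing (hence admissible) and identifies $\critical_0$ explicitly as the sign-change point. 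Your route is cleaner in that it needs neither the integration by parts nor the monotonicity of the coefficient---linearity alone forces the minimizer to be a step function---whereas the paper's route is more concrete and pins down $\critical_0$, though that extra information is not used downstream.
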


The lemma indicates that letting all online vertices outside $S$ have the same critical rank $\critical_0$ is the worst-case scenario.

\begin{proof}
    For any $0 \le \rank_u \le 1$, gather the first and third terms which involve $e^{-\bar{P}(\rank_u)}$ on the RHS of Equation~\eqref{eqn:ranking-non-stochastic-dual-feasibility-worst-order}.
    We get:
    \[
        \Big( - g(\rank_u) - e^{-P(\rank_u)}\int_{\rank_u}^1 g(\critical_v) d e^{P(\critical_v)} \Big) e^{-\bar{P}(\rank_u)} = \Big( \underbrace{-g(1) + \int_{\rank_u}^1 e^{P(\critical_v)} dg(\critical_v)}_{(\star)} \Big) e^{-P(\rank_u)} e^{-\bar{P}(\rank_u)}
        ~,
    \]
    where equality follows from an integration by parts.
    Since $(\star)$ is decreasing in $\rank_u$, the lemma follows with $\critical_0$ being the supremum of $\rank_u$ for which $(\star)$ is non-negative.
    In the boundary case when $(\star)$ is negative for all $\rank_u$, the lemma still holds with $\critical_0 = 0$.
    %
    %
\qed \end{proof}


Given Lemma~\ref{lem:ranking-non-stochastic-dual-feasibility-worst-outside}, and noting that $p_{uS} = P(0)$ and thus $\bar{p}_{uS} = \min \big\{ P(0), 1 \big\}$, it suffices to find a gain splitting function $g$ such that for any non-increasing function $P : [0, 1] \to [0, \infty)$ we have:
\begin{equation}
    \label{eqn:ranking-non-stochastic-dual-feasibility-worst-outside}
    \int_0^{\critical_0} g(\rank_u) \D\rank_u
    - \int_0^1 \big( 1 - g(\critical_v) \big) \D P(\critical_v)
    - \int_{\critical_0}^1 e^{-P(\rank_u)} \int_{\rank_u}^1 g(\critical_v) \D e^{P(\critical_v)} \D \rank_u
    ~\ge~
    \Gamma \cdot \min \big\{ P(0), 1 \big\}
    ~.
\end{equation}

\begin{lemma}[Worst Size of $S$]
    \label{lem:ranking-non-stochastic-dual-feasibility-worst-size}
    If Eqn.~\eqref{eqn:ranking-non-stochastic-dual-feasibility-worst-outside} holds for all non-increasing function $P$ that satisfies $p_{uS} = P(0) = 1$, then it also holds for an arbitrary non-increasing function $P$.
\end{lemma}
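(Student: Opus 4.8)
Throughout, write $F(P)$ for the left-hand side of \eqref{eqn:ranking-non-stochastic-dual-feasibility-worst-outside} regarded as a functional of the non-increasing function $P:[0,1]\to[0,\infty)$, and recall from the proof of Lemma~\ref{lem:ranking-non-stochastic-dual-feasibility-worst-outside} that $\critical_0=\critical_0(P)$ is the point where $(\star)(\rank_u)=-g(1)+\int_{\rank_u}^1 e^{P(\critical_v)}\D g(\critical_v)$ changes sign, and that $g(\rank_u)+e^{-P(\rank_u)}\int_{\rank_u}^1 g(\critical_v)\D e^{P(\critical_v)}=-(\star)(\rank_u)\,e^{-P(\rank_u)}$ by an integration by parts. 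Substituting $\int_0^{\critical_0}g=\int_0^1 g-\int_{\critical_0}^1 g$ into $F$ and applying this identity rewrites $F(P)$ as $\int_0^1 g(\rank_u)\D\rank_u-\int_0^1\big(1-g(\critical_v)\big)\D P(\critical_v)+\int_{\critical_0}^1(\star)(\rank_u)\,e^{-P(\rank_u)}\D\rank_u$, and the last term equals $\min_{a\in[0,1]}\int_a^1(\star)(\rank_u)\,e^{-P(\rank_u)}\D\rank_u$ because $(\star)\,e^{-P}$ is nonnegative on $[0,\critical_0)$ and nonpositive on $(\critical_0,1]$; in particular $F(0)=0$. The plan is to deduce $F(P)\ge\Gamma\min\{P(0),1\}$ for an arbitrary $P$ from the hypothesis (the same inequality whenever $P(0)=1$), treating the cases $P(0)>1$ and $P(0)<1$ separately.

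For $P(0)>1$ I would compare $P$ with its truncation $P^{\flat}:=\min\{P,1\}$, which is non-increasing and satisfies $P^{\flat}(0)=1$, so $F(P^{\flat})\ge\Gamma$ by hypothesis, and it suffices to prove $F(P)\ge F(P^{\flat})$. In the rewritten form the linear-in-$P$ term changes by $-\int\big(1-g(\critical_v)\big)\D\big(P-P^{\flat}\big)(\critical_v)\ge 0$, since $P-P^{\flat}=(P-1)^{+}$ is non-increasing and $g\le 1$; and because $P\ge P^{\flat}$ forces $(\star)(\cdot\,;P)\ge(\star)(\cdot\,;P^{\flat})$ and hence $\critical_0(P)\ge\critical_0(P^{\flat})$, the difference of the two $\min$-terms can be localized onto $[\critical_0(P^{\flat}),\critical_0(P)]$, where $-(\star)(\cdot\,;P)\,e^{-P}$ keeps a fixed sign, exactly as in the proof of Lemma~\ref{lem:ranking-non-stochastic-dual-feasibility-worst-outside}.

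For $P(0)<1$ the target is $F(P)\ge\Gamma\,P(0)$, and the idea is to raise $P(0)$ to $1$ by adding mass to $P$ while tracking $F$. If one replaces $P$ by $P+t\cdot\mathbf{1}_{[0,a]}$ for a block of height $t$ on $[0,a]$, then $(\star)$ and hence $\critical_0$ are unchanged whenever $a\le\critical_0(P)$, and a short computation gives $F\big(P+t\cdot\mathbf{1}_{[0,a]}\big)=F(P)+t\big(1-g(a)\big)$; taking $a=\critical_0(P)$ and $t=1-P(0)$ yields a $\widetilde P$ with $\widetilde P(0)=1$ and $F(\widetilde P)=F(P)+(1-P(0))\big(1-g(\critical_0(P))\big)$, so whenever $g(\critical_0(P))\ge 1-\Gamma$ we obtain $F(P)\ge F(\widetilde P)-\Gamma(1-P(0))\ge\Gamma\,P(0)$. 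The hard part is the remaining regime $g(\critical_0(P))<1-\Gamma$: a single block placed at $\critical_0(P)$ overshoots, while adding mass past $\critical_0(P)$ moves $\critical_0$ upward and produces a correction $\min_a\int_a^1(\star)e^{-P}$ of the unfavourable sign, so one needs a more careful deformation. For this I would exploit the relation $g(\critical_0(P))\le 1-e^{-P(0)}$, valid because $P$ is non-increasing (so $e^{P(\critical_v)}\le e^{P(0)}$ under the integral defining $(\star)$), together with the $\min$-over-$a$ representation. The latter is what should make the deformation tractable: it lets the motion of $\critical_0$ be absorbed into an envelope estimate, so that only the explicit exponential dependence of $F$ on $P$ remains to be controlled, which then yields $F(P)\ge\Gamma\,P(0)$ in that regime as well.
\qed
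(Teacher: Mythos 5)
Your approach differs substantially from the paper's and, while the $P(0)>1$ half is essentially sound, the $P(0)<1$ half has a genuine gap that you yourself flag.

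The paper's proof keeps the threshold $\critical_0$ \emph{fixed} throughout and compares term by term, which is what makes the argument short. For $P(0)>1$ it replaces $P$ by the \emph{shift} $P'(\critical)=\big(P(\critical)-P(0)+1\big)^+$, so that on the region where $P'>0$ the difference $P-P'$ is the constant $P(0)-1$ and the exponential factor $e^{-P(\rank_u)+P(\critical_v)}$ is literally unchanged; the first term does not change, the second term drops, and the third term drops, all with $\critical_0$ untouched. For $P(0)<1$ it replaces $P$ by the \emph{rescaling} $P'(\critical)=P(\critical)/P(0)$, and shows each term grows by at most a $1/P(0)$ factor (the first term strictly less), again with $\critical_0$ untouched. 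You instead first absorb $\critical_0$ into a $\min_a$ representation of the LHS, which is a nice reformulation, but it forces you to track how the minimizing $\critical_0$ moves under the deformation of $P$.

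For $P(0)>1$ your truncation $P^{\flat}=\min\{P,1\}$ does in fact give $F(P)\ge F(P^{\flat})$: the linear term increases as you say, and for the $\min$ terms the clean argument is the chain $G_P(\critical_0(P))\ge G_{P^{\flat}}(\critical_0(P))\ge G_{P^{\flat}}(\critical_0(P^{\flat}))$, where the first inequality is a pointwise comparison of integrands on $[\critical_0(P),1]$ (there $(\star_P)\le 0$, $e^{-P}\le e^{-P^{\flat}}$, and $(\star_P)\ge(\star_{P^{\flat}})$) and the second is because $\critical_0(P^{\flat})$ is the minimizer of $G_{P^{\flat}}$. Your stated claim that ``the difference of the two $\min$-terms can be localized onto $[\critical_0(P^{\flat}),\critical_0(P)]$'' is not accurate as written — the difference splits into a piece on $[\critical_0(P^{\flat}),\critical_0(P)]$ and a piece on $[\critical_0(P),1]$, each with the right sign — but the conclusion holds.

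For $P(0)<1$ you only close the case $g(\critical_0(P))\ge 1-\Gamma$. The paper's gain function is chosen so that $g(0)=c/e<1-\Gamma$, so the complementary regime $g(\critical_0(P))<1-\Gamma$ is nonempty (it occurs precisely when $\critical_0(P)$ is small), and there your single-block deformation $P\mapsto P+(1-P(0))\mathbf{1}_{[0,\critical_0(P)]}$ provably overshoots. The subsequent appeal to $g(\critical_0(P))\le 1-e^{-P(0)}$ and an ``envelope estimate'' is only a sketch: it is an upper bound on $g(\critical_0)$, not the lower bound you would need, and no concrete deformation is specified. This is a genuine gap. The simplest fix is to abandon the $\min_a$ reformulation for this half and argue as the paper does: fix $\critical_0$, rescale $P'=P/P(0)$, and observe that the first term is unchanged, the second scales exactly by $1/P(0)$, and the third scales by strictly less than $1/P(0)$ because $e^{-P'(\rank_u)+P'(\critical_v)}=e^{(-P(\rank_u)+P(\critical_v))/P(0)}\le e^{-P(\rank_u)+P(\critical_v)}$ when $\rank_u\le\critical_v$; summing gives $\mathrm{LHS}(P)\ge P(0)\cdot\mathrm{LHS}(P')\ge P(0)\,\Gamma$.
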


\begin{proof}
    For this proof, it is more convenient to write the third term on the left as:
    \[
        - \int_{\critical_0}^1 \int_{\rank_u}^1 e^{-P(\rank_u) + P(\critical_v)} g(\critical_v) \D P(\critical_v) \D \rank_u
        ~.
    \]

    Given any instance whose function $P$ has $P(0) > 1$, we may consider another instance defined by $P'(\critical) = \big( P(\critical) - P(0) + 1 \big)^+$, i.e., dropping some neighbors with the smallest critical ranks from $S$ so that the sum of remaining vertices' success probabilities equals $1$.
    The inequality's RHS equals $\Gamma$ in both instances.
    The LHS, however, is smaller in the new instance.
    To see this, the first term on the left stays the same.
    The second term decreases.
    The third term also decreases because for any relevant $\rank_u \le \critical_v$ with $P'(\rank_u) \ge P'(\critical_v) > 0$ that contribute the integral of the new instance, we have $e^{-P(\rank_u) + P(\critical_v)} = e^{-P'(\rank_u) + P'(\critical_v)}$ because the values of $P$ and $P'$ differ by the same amount $P(0) - 1$ at both $\rank_u$ and $\critical_v$.

    Next consider any instance whose function $P$ has $P(0) < 1$.
    We may consider another instance defined by $P'(\critical) = \nicefrac{P(\critical)}{P(0)}$, i.e., increasing the number of vertices in $S$ with each critical rank by a $\nicefrac{1}{P(0)}$ factor so that the sum of success probabilities becomes $1$.
    By definition, the inequality's RHS increases by a $\nicefrac{1}{P(0)}$ factor in the new instance.
    The LHS, however, increases by less than a $\nicefrac{1}{P(0)}$ factor.
    To see this, the first term on the left remains the same.
    The second term increases by a $\nicefrac{1}{P(0)}$ factor.
    Finally, the third term increases by less than a $\nicefrac{1}{P(0)}$ factor because we have $e^{-P'(\rank_u) + P'(\critical_v)} < e^{-P(\rank_u) + P(\critical_v)}$.
    %
    %
\qed \end{proof}

Therefore, our task further simplifies to finding a gain splitting function $g$ such that for any non-increasing function $P : [0, 1] \to [0, 1]$ with $P(0) = 1$ we have:
\begin{equation}
    \label{eqn:ranking-non-stochastic-dual-feasibility-worst-size}
    \int_0^{\critical_0} g(\rank_u) \D\rank_u
    - \int_0^1 \big( 1 - g(\critical_v) \big) \D P(\critical_v)
    - \int_{\critical_0}^1 e^{-P(\rank_u)} \int_{\rank_u}^1 g(\critical_v) \D e^{P(\critical_v)} \D \rank_u
    ~\ge~
    \Gamma
    ~.
\end{equation}

This is the moment when we finally specify the gain splitting function $g$:
\begin{equation}
    \label{eqn:ranking-non-stochastic-gain-function}
    g(\rank) =
    \begin{cases}
        \min \big\{ \frac{c}{e-(e-1)\rank}, 1-\frac{1}{e} \big\} & 0 \le \rank < 1; \\
        1                                                    & \rank = 1.
    \end{cases}
\end{equation}
for a constant $c \approx 1.161$ such that $\int_0^1 g(x)=1-g(0) > 0.572$.

\begin{lemma}[Worst Critical Ranks Inside $S$]
    \label{lem:ranking-non-stochastic-dual-feasibility-worst-inside}
    For the function $g$ in Eqn.~\eqref{eqn:ranking-non-stochastic-dual-feasibility-worst-size} is minimized when $P$ approaches a step function in the limit, i.e., when for some $\rank_0 > \critical_0$:
    \[
        P(\rank) =
        \begin{cases}
            1                              & 0 \le \rank < \rank_0 - \epsilon;         \\
            \frac{\rank_0-\rank}{\epsilon} & \rank_0 - \epsilon \le \rank \le \rank_0; \\
            0                              & \rank_0 < \rank \le 1.
        \end{cases}
    \]
    and let $\epsilon \to 0$.
\end{lemma}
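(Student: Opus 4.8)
The plan is to view the left-hand side of Eqn.~\eqref{eqn:ranking-non-stochastic-dual-feasibility-worst-size} as a functional $\Phi(P)$ of the critical-rank profile $P$ of $u$'s neighbors in $S$, i.e.\ of a non-increasing function $P:[0,1]\to[0,1]$ with $P(0)=1$, ranging over the convex and compact set $\mathcal{P}$ of all such $P$, and to show that $\Phi$ attains its minimum over $\mathcal{P}$ at an \emph{extreme point} of $\mathcal{P}$, which is exactly a step function $P=\mathbf{1}_{[0,\rho_0)}$. The width-$\epsilon$ linear ramp in the statement is merely a regularization that makes $\D P$ and $\D e^{P}$ ordinary finite measures; letting $\epsilon\to0$ recovers the pure step.

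First I would remove the dependence of $\mu_0$ on $P$. From the proof of Lemma~\ref{lem:ranking-non-stochastic-dual-feasibility-worst-outside}, $\mu_0=\mu_0(P)$ is the threshold minimizing an auxiliary expression, so $\Phi(P)=\min_{t\in[0,1]}\tilde\Phi(P,t)$, where $\tilde\Phi(P,t)$ is the same expression with $\mu_0$ replaced by a free parameter $t$. Since an infimum over $t$ of concave functionals is concave, and a concave functional on a compact convex set is minimized at an extreme point, it is enough to show that $P\mapsto\tilde\Phi(P,t)$ is concave on $\mathcal{P}$ for each fixed $t$ (or, failing that, that each $\tilde\Phi(\cdot,t)$ is minimized at a step function). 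Reading off Eqn.~\eqref{eqn:ranking-non-stochastic-dual-feasibility-worst-size}, $\tilde\Phi(P,t)$ is a constant plus $-\int_0^1 (1-g(\rho))\,\D P(\rho)$, which is linear in $P$, plus $-\int_t^1 e^{-P(\rho)}\int_\rho^1 g(\mu)\,\D e^{P(\mu)}\,\D\rho$, whose dependence on $P$ is not manifestly concave; this last term is what must be controlled. One can also note that $\tilde\Phi$ sees $P|_{[0,t)}$ only through $-\int(1-g)\,\D P$, which is monotone in $P$, so a minimizer may be taken with $P\equiv1$ on $[0,t)$, whence the step functions among the candidates automatically have $\rho_0\ge t$.

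The technical core is establishing this concavity. Computing the second variation along a feasible perturbation $\eta$ — one keeping $P+s\eta$ inside $\mathcal{P}$ for small $s$ — the constant and linear parts vanish and concavity reduces to the inequality
\[
    \int_t^1 e^{P(1)-P(\rho)}\big(\eta(1)-\eta(\rho)\big)^2\,\D\rho
    ~\ge~
    \int_t^1\int_\rho^1 e^{P(\mu)-P(\rho)}\big(\eta(\mu)-\eta(\rho)\big)^2\,\D g(\mu)\,\D\rho
    ~.
\]
I would prove this using that $\mu\ge\rho$ forces $P(\mu)\le P(\rho)$, that $\mu\le1$ forces $P(\mu)\ge P(1)$, that $\int_\rho^1\D g(\mu)=g(1)-g(\rho)=1-g(\rho)\le1$, and — crucially — the defining property of $\mu_0$, namely that for $\rho$ past $\mu_0$ one has $\int_\rho^1 e^{P(\mu)}\,\D g(\mu)<g(1)=1\le e^{P(1)}$, which is what keeps the right-hand side small. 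An essentially equivalent route is a direct concentration/exchange argument: repeatedly move all critical-rank mass that $P$ places at two distinct ranks onto a single rank, verifying monotonicity of $\Phi$ along this move by the same estimate, until only one mass point — a step function — remains. Either way, the minimum of $\Phi$ is attained at $P=\mathbf{1}_{[0,\rho_0)}$, and the asserted inequality $\rho_0>\mu_0$ follows by plugging this $P$ into the defining condition of $\mu_0$: the quantity $-g(1)+\int_\rho^1 e^{P(\mu)}\,\D g(\mu)$ equals $-g(\rho)<0$ for $\rho\ge\rho_0$ and tends to $-g(\rho_0)<0$ as $\rho\uparrow\rho_0$, so $\mu_0(\mathbf{1}_{[0,\rho_0)})<\rho_0$.

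I expect the main obstacle to be the second-variation inequality above (equivalently, the monotonicity of the merging step). It does \emph{not} hold pointwise in $\rho$: the factor $e^{P(1)-P(\rho)}$ on the left can be as small as $1/e$ while $g(\rho)$ on the right may be far from $1$. Hence the estimate has to be carried out after integrating over $\rho$, and the argument must couple the monotonicity of the gain-splitting function $g$ with that of the critical-rank profile $P$ and invoke the defining property of $\mu_0$ at the right moment. A secondary, more bookkeeping-flavored subtlety is that $\mu_0$ is itself a functional of $P$ (handled above by passing to $\min_t\tilde\Phi(\cdot,t)$), together with making the extreme-point description of $\mathcal{P}$ and the $\epsilon\to0$ limit precise.
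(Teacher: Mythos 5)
Your plan — view the LHS of Eqn.~\eqref{eqn:ranking-non-stochastic-dual-feasibility-worst-size} as a functional $\Phi(P)$ on the convex, compact set $\mathcal{P}$ of non-increasing $P:[0,1]\to[0,1]$ with $P(0)=1$, decouple $\critical_0$ into a free parameter $t$ to form $\tilde\Phi(P,t)$, establish concavity of each $\tilde\Phi(\cdot,t)$, and conclude the minimizer is an extreme point (a step function) — has a genuine gap at exactly the point you flag as the ``technical core.'' The concavity of $\tilde\Phi(\cdot,t)$ is not established, and the way you propose to establish it is self-defeating: your proof sketch of the second-variation inequality
\[
\int_t^1\!\!\int_\rho^1 e^{P(\critical)-P(\rank)}\big(\eta(\critical)-\eta(\rank)\big)^2\,\D g(\critical)\,\D\rank
~\le~
\int_t^1 e^{P(1)-P(\rank)}\big(\eta(1)-\eta(\rank)\big)^2\,\D\rank
\]
explicitly invokes ``the defining property of $\critical_0$, namely that for $\rank$ past $\critical_0$ one has $\int_\rank^1 e^{P(\critical)}\D g(\critical)<g(1)$.'' But once you have replaced $\critical_0$ by a free variable $t$ in order to make $\tilde\Phi(\cdot,t)$ a genuine functional on $\mathcal{P}$, $t$ has no relation to $\critical_0(P)$, so this property is simply unavailable for $\rank\in(t,\critical_0(P))$. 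And it is precisely on that range that the pointwise version of the inequality fails, as you note. So the argument circles back on itself: concavity of $\tilde\Phi(\cdot,t)$ for \emph{arbitrary} $t$ is what you need, but the only tool you offer works only when $t$ is coupled to $P$.

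There is also a softer warning sign that concavity is the wrong lens: in the paper's own endgame (the lemma after Lemma~\ref{lem:ranking-non-stochastic-dual-feasibility-worst-inside}, and especially Lemma~\ref{mid} of the appendix), the reduced one-parameter function $F(z)$ is shown to be minimized at the endpoints by the argument ``$F''$ is increasing and $F'(1)\le0$'' — i.e.\ $F$ is concave-then-convex, not concave. If the full $\Phi$ were concave, that phenomenon would not appear. The paper instead proceeds by discretizing ($f(\critical_1,\dots,\critical_n)$ with sorted critical ranks) and running three explicit exchange steps (Lemmas~\ref{left}, \ref{right}, \ref{mid}) that lean heavily on the two-piece form of the chosen $g$ in Eqn.~\eqref{eqn:ranking-non-stochastic-gain-function}: below the threshold $\underline\critical$ where $g$ has the explicit form $\frac{c}{e-(e-1)\rank}$, masses are pushed up toward $\underline\critical$; above $\underline\critical$ where $g\equiv1-\frac{1}{e}$, masses are pushed together; and then a final one-dimensional analysis handles the split between $\underline\critical$ and $1$. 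Your ``alternative route'' of repeatedly merging two mass points ``by the same estimate'' is morally the same idea, but it inherits the same unproven estimate and ignores the case split on $\underline\critical$ that the paper's exchange lemmas actually require. To repair the proposal, you would need to either prove the integrated second-variation inequality for all $t$ (which I believe is false for small $t$), or abandon concavity in favor of the case-specific exchange argument the paper uses.
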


Despite its complex look, the lemma actually gives a simple characterization of the worst-case critical ranks inside $S$:
all online vertices in $S$ have the same critical rank $\rank_0>\critical_0$.
We have the complex form because we write the lower bounds of the LHS of approximate dual feasibility through a function $P$, which help simplify the proofs of previous lemmas but is inconvenient when we need to represent identical critical ranks.
The proof of Lemma~\ref{lem:ranking-non-stochastic-dual-feasibility-worst-inside} is in Appendix~\ref{app:ranking-non-stochastic-dual-feasibility-worst-inside}.

Finally, we apply the above worst-case function $P$ and focus on optimizing the gain splitting function $g$ w.r.t.\ the resulting differential inequality.

\begin{lemma}
    For any $\critical_0 < \rank_0$, and for the function $g$ in Eqn.~\eqref{eqn:ranking-non-stochastic-gain-function}, we have:
    \[
        \int_0^{\critical_0} g(\rank_u) \D \rank_u + \Big( 1 - g(\rank_0) \Big) + \Big(1-\frac{1}{e}\Big) \big(\rank_0 - \critical_0\big) g(\rank_0) ~\ge~ \Gamma
        ~,
    \]
    for the stated competitive ratio $\Gamma = 0.572$
\end{lemma}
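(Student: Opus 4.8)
I would treat the left-hand side as a function $F(\critical_0,\rank_0)$ of the two free parameters and minimize it over the range $0 \le \critical_0 < \rank_0 \le 1$ dictated by the preceding lemmas. Since the integral term contributes $g(\critical_0)$ to the derivative,
\[
\frac{\partial F}{\partial \critical_0} = g(\critical_0) - \Bigl(1-\tfrac1e\Bigr)\,g(\rank_0),
\]
and the behaviour of $F$ in $\critical_0$ splits into two cases according to whether $\rank_0 < 1$ or $\rank_0 = 1$. When $\rank_0 < 1$ we have $g(\rank_0) \le 1-\tfrac1e$, hence $(1-\tfrac1e)g(\rank_0) \le (1-\tfrac1e)^2$, while $g$ non-decreasing gives $g(\critical_0) \ge g(0) = c/e$; since $c/e > (1-\tfrac1e)^2$ for $c \approx 1.161$, the derivative is strictly positive and $F$ is minimized at $\critical_0 = 0$, where
\[
F(0,\rank_0) = 1 - g(\rank_0)\Bigl(1 - (1-\tfrac1e)\rank_0\Bigr).
\]

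The key step is then the algebraic identity $1 - (1-\tfrac1e)\rank = \tfrac{e-(e-1)\rank}{e}$, which is precisely what the form of $g$ in \eqref{eqn:ranking-non-stochastic-gain-function} was designed around. Because $g(\rank) \le \tfrac{c}{e-(e-1)\rank}$ on $[0,1)$ (as $g$ is the minimum of this quantity and $1-\tfrac1e$), multiplying the two factors gives $g(\rank)\bigl(1-(1-\tfrac1e)\rank\bigr) \le \tfrac{c}{e} = g(0)$, with equality wherever $g$ is not yet capped. Hence $F(0,\rank_0) \ge 1 - g(0) = 1 - c/e > 0.572$ for every $\rank_0 \in (0,1)$, which closes the case $\rank_0 < 1$; note that no explicit computation of the capping threshold is needed.

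For the boundary $\rank_0 = 1$ one uses $g(1) = 1$, so that $F(\critical_0,1) = \int_0^{\critical_0} g(\rank_u)\D\rank_u + (1-\tfrac1e)(1-\critical_0)$ and the derivative $g(\critical_0) - (1-\tfrac1e)$ is now $\le 0$ since $g(\critical_0) \le 1-\tfrac1e$ for $\critical_0 < 1$; thus $F(\cdot,1)$ is non-increasing and bounded below by its value as $\critical_0 \to 1^-$, namely $\int_0^1 g(\rank_u)\D\rank_u = 1 - g(0) > 0.572$ by the choice of $c$. Combining the two cases yields $F(\critical_0,\rank_0) \ge 1 - c/e > 0.572 = \Gamma$ throughout. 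I expect no genuine obstacle: the only point that needs care is that the monotonicity direction in $\critical_0$ reverses between $\rank_0 < 1$ and $\rank_0 = 1$ (because of the jump of $g$ at $1$), which is why the argument is organized as two cases; everything else is a one-line derivative plus the identity above, and the real work of reducing to this inequality was already carried out in the worst-case lemmas.
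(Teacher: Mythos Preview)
Your proposal is correct and follows essentially the same approach as the paper: the same two-case split on whether $\rank_0 < 1$ or $\rank_0 = 1$, the same partial derivative in $\critical_0$ with the same sign analysis, and the same algebraic cancellation using $1-(1-\tfrac1e)\rank = \tfrac{e-(e-1)\rank}{e}$ to reduce $F(0,\rank_0)$ to $1-c/e$. The only cosmetic difference is that you make the identity behind the choice of $g$ explicit, whereas the paper leaves it implicit in the one-line computation.
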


\begin{proof}
    Recall that $1-g(0) = \int_0^1 g(\rank_u) \D \rank_u > \Gamma$.
    The derivative of LHS w.r.t.\ $\critical_0$ is:
    \[
        g(\critical_0)- \Big(1-\frac{1}{e}\Big)g(\rank_0)
    \]

    If $\rank_0 < 1$, we have:
    \[
        g(\critical_0)-\Big(1-\frac{1}{e}\Big)g(\rank_0) \geq g(0)-\Big(1-\frac{1}{e}\Big)^2 > 0
    \]

    Therefore taking $\critical_0 = 0$ minimizes the LHS, with minimum value:
    \begin{align*}
        1-g(\rank_0)+\Big(1-\frac{1}{e}\Big) \cdot \rank_0 \cdot g(\rank_0)
        &
        = 1 - \Big(1 - \Big(1-\frac{1}{e}\Big) \rank_0 \Big) g(\rank_0) \\
        &
        \geq 1 - \Big(1 - \Big(1-\frac{1}{e}\Big) \rank_0 \Big) \frac{c}{e-(e-1)\rank_0} && \mbox{(definition of $g$)} \\
        &
        = 1-\frac{c}{e} = 1-g(0) > \Gamma
        ~.
    \end{align*}

    If $\rank_0=1$, on the other hand, the LHS is:
    \[
        \int_0^{\critical_0} g(\rank_u) \D \rank_u + \Big(1-\frac{1}{e}\Big) \big(1 - \critical_0\big)
        ~.
    \]

    The derivative w.r.t.\ $\critical_0$ is $g(\critical_0) - \big(1 - \frac{1}{e}\big) \le 0$.
    Hence, the LHS is minimized when $\critical_0 = 1$, with minimum value $\int_0^1 g(\rank_u) \D \rank_u > \Gamma$.
\qed \end{proof}


\subsection{Stochastic Benchmark}

\begin{theorem}
    \label{thm:ranking-stochastic}
    Ranking is $1-\frac{1}{e}$-competitive w.r.t.\ the offline stochastic benchmark $\SOPT$ for any instance with equal success probabilities.
\end{theorem}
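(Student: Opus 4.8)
By Lemma~\ref{lem:succinct-stochastic-configuration-lp-primal-dual}, it suffices to produce a non-decreasing gain-splitting function $g$, used in the same dual update rule described at the start of this section, for which the approximate dual feasibility condition
\[
    \E_{\rank_u,\threshold_u}\Big[\,\alpha_u + \sum_{v\in S}\big(1-\tilde{p}_{uS(v)}\big)\,\beta_v \;\Big|\; \rank_{-u},\threshold_{-u}\,\Big] \;\ge\; \Gamma\cdot\tilde{p}_{uS}
\]
holds with $\Gamma = 1-\tfrac1e$ for every offline vertex $u$, every $S\subseteq V$, and every fixing of the ranks $\rank_{-u}$ and stochastic thresholds $\threshold_{-u}$ of the other offline vertices. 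The plan is to mirror the non-stochastic subsection, the two differences being that the survival weights $1-\tilde{p}_{uS(v)}$ now multiply the $\beta_v$ contributions, and that we will ultimately pick the canonical function $g(\rank)=e^{\rank-1}$ (with $g(1)=1$) in place of the tuned function of Eqn.~\eqref{eqn:ranking-non-stochastic-gain-function}.

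First I would substitute the bounds of Lemmas~\ref{lem:ranking-alpha-expectation-bound} and~\ref{lem:ranking-beta-expectation-bound} into the left-hand side and, exactly as in the non-stochastic case, treat the resulting expression as a pure inequality in the combinatorial data (the arrival order, the critical ranks $\critical_v$, and the partition of $u$'s neighborhood into $S$ and its complement), perturbing critical ranks to be distinct for free and reducing to infinitesimal success probabilities by a copying construction (the case $p=1$ being plain online bipartite matching, where Ranking is $1-\tfrac1e$-competitive and $\SOPT=\OPT$), so that $1-\tilde{p}_{uS(v)}=e^{-p|S(v)|}$. One then re-derives, carrying the survival weights along, the analogues of Lemmas~\ref{lem:ranking-non-stochastic-dual-feasibility-worst-order}--\ref{lem:ranking-non-stochastic-dual-feasibility-worst-inside}: the worst arrival order puts the neighbors outside $S$ before those in $S$ and orders $S$ by increasing critical rank; the neighbors outside $S$ collapse to a single critical rank $\critical_0$; the worst $S$ has $|S|\to\infty$, so $\tilde{p}_{uS}\to 1$; and the critical ranks inside $S$ collapse to a single value $\rank_0\ge\critical_0$. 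The survival weights contribute two geometric sums, $p\sum_{k\ge1}e^{-p(k-1)}\to1$ and $p\sum_{k\ge1}e^{-2p(k-1)}\to\tfrac12$, so the condition reduces to verifying, for all $0\le\critical_0\le\rank_0\le1$,
\[
    \int_0^{\rank_0} g(\rank_u)\,\D\rank_u \;+\; \big(1-g(\rank_0)\big) \;+\; \tfrac12\int_{\critical_0}^{\rank_0}\big(g(\rank_0)-g(\rank_u)\big)\,\D\rank_u \;\ge\; \Gamma .
\]

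Finally, with $g(\rank)=e^{\rank-1}$ the displayed inequality evaluates to $1-\tfrac1e + \tfrac12\,e^{\critical_0-1}\big(1-(1-t)e^{t}\big)\ge 1-\tfrac1e$, where $t=\rank_0-\critical_0\ge0$, so it reduces to the elementary fact that $(1-t)e^{t}\le1$ (both sides equal $1$ at $t=0$, and the left side has derivative $-t\,e^{t}\le0$). Equality at $t=0$ shows that $1-\tfrac1e$ is exactly the bound this argument yields, consistent with the tight instance inherited from online bipartite matching. I expect the main obstacle to be re-proving the worst-arrival-order step in the presence of the survival weights: exchanging two adjacent online vertices now simultaneously alters the exponents $|N_u(\rank_u,v)|$ appearing in Lemma~\ref{lem:ranking-beta-expectation-bound} and the survival exponents $|S(v)|$, so the exchange argument must be carried out carefully rather than quoted, and one must also verify that moving the complement of $S$ to the front can only weaken the bound, even though those vertices carry no survival weight of their own while still enlarging $|N_u(\rank_u)|$ and every $|N_u(\rank_u,v)|$. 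The remaining reductions, together with the calculus above, are then routine adaptations of the non-stochastic analysis.
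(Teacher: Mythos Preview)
The paper's proof is dramatically simpler and takes a different route: it never characterizes a worst case, never reduces to infinitesimal probabilities, and does not touch Lemmas~\ref{lem:ranking-non-stochastic-dual-feasibility-worst-order}--\ref{lem:ranking-non-stochastic-dual-feasibility-worst-inside} at all. After sorting $S=\{v_1,\dots,v_n\}$ so that $\critical_1\ge\cdots\ge\critical_n$, the paper (i)~weakens Lemma~\ref{lem:ranking-alpha-expectation-bound} by replacing $|N_u(\rank_u)|$ with $|N_u(\rank_u)\cap S|$ and telescopes $1-(1-p)^{k}=\sum_{i\le k}p(1-p)^{i-1}$ to obtain $\E[\alpha_u]\ge\sum_i p(1-p)^{i-1}\int_0^{\critical_i}g$; (ii)~keeps only the crude part $\E[\beta_v]\ge p(1-g(\critical_v))$ of Lemma~\ref{lem:ranking-beta-expectation-bound}, discarding the integral term you work hard to track, and applies the rearrangement inequality to the decreasing survival weights $(1-p)^{i-1}$ against the values $1-g(\critical_{\pi(i)})$; and (iii)~adds the two sums to get $\sum_i p(1-p)^{i-1}\big(\int_0^{\critical_i}g+1-g(\critical_i)\big)$, which with $g(x)=e^{x-1}$ equals $(1-\tfrac1e)\,\tilde{p}_{uS}$ on the nose. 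The whole argument is about ten lines.

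Your approach may also succeed, but the gaps are larger than you suggest. Calling the remaining reductions ``routine adaptations'' is optimistic: the proof of Lemma~\ref{lem:ranking-non-stochastic-dual-feasibility-worst-inside} (via the auxiliary Lemmas~\ref{left}--\ref{mid}) uses the explicit rational form $g(\rank)=c/(e-(e-1)\rank)$ of Eqn.~\eqref{eqn:ranking-non-stochastic-gain-function} in an essential way and would have to be redone from scratch for $g(x)=e^{x-1}$; the ``worst size $|S|\to\infty$'' statement is genuinely different from the non-stochastic $P(0)=1$ of Lemma~\ref{lem:ranking-non-stochastic-dual-feasibility-worst-size} and needs its own argument; and the worst-arrival-order step, as you note, is now a two-parameter exchange since swapping adjacent vertices of $S$ perturbs both the survival exponents $|S(v)|$ and the load exponents $|N_u(\rank_u,v)|$. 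Your final displayed inequality and its verification are correct, but they only show that \emph{one} configuration meets the bound, not that it is the minimizing one. Given how short the paper's direct argument is, your route buys nothing here and carries real risk in the unproved middle steps.
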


The rest of the subsection is devoted to proving Theorem~\ref{thm:ranking-stochastic} by an online primal dual analysis with the Reduced-form Stochastic Configuration LP.
By Lemma~\ref{lem:succinct-stochastic-configuration-lp-primal-dual}, it is sufficient to show that $\E\big[\alpha_u + \sum_{v\in S} (1-\tilde{p}_{uS(v)})\beta_v\big]\geq \Gamma \,\tilde{p}_{uS}$
for any offline vertex $u \in U$ and any subset of online vertices $S \subseteq V$, and with the stated competitive ratio $\Gamma = 1-\frac{1}{e}$.
We will prove it further conditioning on any ranks $\rank_{-u}$ and stochastic thresholds $\threshold_{-u}$ of other vertices, i.e.:

\begin{equation}
    \label{eqn:ranking-stochastic-original-condition}
    \E_{\rank_u, \threshold_u} \bigg[ \, \alpha_u + \sum_{v \in S} (1-\tilde{p}_{uS(v)})\beta_v \mid \rank_{-u}, \threshold_{-u} \, \bigg]\geq \Big( 1-\frac{1}{e} \Big) \cdot \tilde{p}_{uS}
\end{equation}

Let $v_1, v_2, \dots, v_n$ be the online vertices in $S$, sorted by their critical ranks, which we denote as $\critical_1 \ge \critical_2 \ge \dots \ge \critical_n$.
We next apply the lower bounds for the dual variables' expectation to the LHS of Eqn.~\eqref{eqn:ranking-stochastic-original-condition}.
First we have:
\begin{align*}
    \E_{\rank_u,\threshold_u} \big[ \, \alpha_u \mid \rank_{-u},\threshold_{-u}\, \big]
    &
    \ge \int_0^1 \Big( 1 - (1-p)^{|N_u(\rank_u)|} \Big) g(\rank_u) \D\rank_u \qquad\qquad\qquad\qquad\qquad \mbox{(Lemma~\ref{lem:ranking-alpha-expectation-bound})} \\[1ex]
    &
    \ge \int_0^1 \Big( 1 - (1-p)^{|N_u(\rank_u) \cap S|} \Big) g(\rank_u) \D\rank_u \\
    &
    = \int_0^1 \sum_{i=1}^{|N_u(\rank_u) \cap S|} p (1-p)^{i-1} g(\rank_u) \D\rank_u 
    = \sum_{i=1}^n p (1-p)^{i-1} \int_0^{\critical_i} g(\rank_u) \D\rank_u
    ~.
\end{align*}

To bound the expected contribution by $\beta_v$'s, we apply a weaker version of Lemma~\ref{lem:ranking-beta-expectation-bound}, dropping the second part on the RHS of the lemma's inequality.
Suppose that the vertices in $S$ arrive by order $v_{\pi(1)} \before v_{\pi(1)} \before \dots \before v_{pi(n)}$.
We get that:
\begin{align*}
    \E_{\rank_u, \threshold_u} \bigg[ \,\sum_{v \in S} (1-\tilde{p}_{uS(v)})\beta_v \mid \rank_{-u}, \threshold_{-u} \, \bigg]
    &
    \ge \sum_{i=1}^{n} p(1-p)^{i-1} \big(1-g(\critical_{\pi(i)})\big) 
    && \mbox{(Lemma~\ref{lem:ranking-beta-expectation-bound})} \\
    &
    \ge \sum_{i=1}^{n} p(1-p)^{i-1} \big(1-g(\critical_i)\big)
    ~.
    && \mbox{(rearrangement inequality)}
\end{align*}

Combining the two bounds, we get that the LHS of Eqn.~\eqref{eqn:ranking-stochastic-original-condition} is at least:
\[
    \sum_{i=1}^n p(1-p)^{i-1} \Big( \underbrace{\int_0^{\critical_i} g(\rank_u) \D\rank_u + 1 - g(\critical_i)}_{(\star)} \Big)
\]

Next we choose $g(x)=e^{x-1}$ just like the analysis of Ranking for the original Online Bipartite Matching problem.
This choice ensures that $(\star)$ equals $1-\frac{1}{e}$.
The above bound is therefore:
\[
    \Big( 1-\frac{1}{e} \Big) \sum_{i=1}^n p (1-p)^{i-1} = \Big( 1-\frac{1}{e} \Big)\Big(1 - (1-p)^n\Big) = \Big( 1-\frac{1}{e} \Big) \cdot \tilde{p}_{uS}
    ~.
\]


%
%
%
\bibliographystyle{splncs04}
\bibliography{stochastic-rewards}

\begin{thebibliography}{10}
\providecommand{\url}[1]{\texttt{#1}}
\providecommand{\urlprefix}{URL }
\providecommand{\doi}[1]{https://doi.org/#1}

\bibitem{AggarwalGKM:SODA:2011}
Aggarwal, G., Goel, G., Karande, C., Mehta, A.: Online vertex-weighted
  bipartite matching and single-bid budgeted allocations. In: SODA. pp.
  1253--1264. SIAM (2011)

\bibitem{BirnbaumM:2008}
Birnbaum, B., Mathieu, C.: On-line bipartite matching made simple. ACM SIGACT
  News  \textbf{39}(1),  80--87 (2008)

\bibitem{BlancC:FOCS:2021}
Blanc, G., Charikar, M.: Multiway online correlated selection. In: FOCS. pp.
  1277--1284. IEEE (2022)

\bibitem{BuchbinderJN:ESA:2007}
Buchbinder, N., Jain, K., Naor, J.S.: Online primal-dual algorithms for
  maximizing ad-auctions revenue. In: ESA. pp. 253--264. Springer (2007)

\bibitem{DevanurHKMY:TEAC:2016}
Devanur, N.R., Huang, Z., Korula, N., Mirrokni, V.S., Yan, Q.: Whole-page
  optimization and submodular welfare maximization with online bidders. ACM
  Transactions on Economics and Computation  \textbf{4}(3),  1--20 (2016)

\bibitem{DevanurJ:STOC:2012}
Devanur, N.R., Jain, K.: Online matching with concave returns. In: STOC. pp.
  137--144 (2012)

\bibitem{DevanurJK:SODA:2013}
Devanur, N.R., Jain, K., Kleinberg, R.D.: Randomized primal-dual analysis of
  ranking for online bipartite matching. In: SODA. pp. 101--107. SIAM (2013)

\bibitem{FahrbachHTZ:JACM:forthcoming}
Fahrbach, M., Huang, Z., Tao, R., Zadimoghaddam, M.: Edge-weighted online
  bipartite matching. Journal of the ACM  (forthcoming)

\bibitem{FeldmanKMMP:WINE:2009}
Feldman, J., Korula, N., Mirrokni, V., Muthukrishnan, S., Pál, M.: Online ad
  assignment with free disposal. In: WINE. pp. 374--385. Springer (2009)

\bibitem{FeldmanMMM:FOCS:2009}
Feldman, J., Mehta, A., Mirrokni, V., Muthukrishnan, S.: Online stochastic
  matching: beating $1-\frac{1}{e}$. In: FOCS. pp. 117--126. IEEE (2009)

\bibitem{GaoHHNYZ:FOCS:2021}
Gao, R., He, Z., Huang, Z., Nie, Z., Yuan, B., Zhong, Y.: Improved online
  correlated selection. In: FOCS. pp. 1265--1276. IEEE (2021)

\bibitem{GoelM:SODA:2008}
Goel, G., Mehta, A.: Online budgeted matching in random input models with
  applications to adwords. In: SODA. pp. 982--991 (2008)

\bibitem{GoyalU:OR:2022}
Goyal, V., Udwani, R.: Online matching with stochastic rewards: Optimal
  competitive ratio via path-based formulation. Operations Research  (2022)

\bibitem{HuangS:STOC:2021}
Huang, Z., Shu, X.: Online stochastic matching, {P}oisson arrivals, and the
  natural linear program. In: STOC. pp. 682--693 (2021)

\bibitem{HuangSY:STOC:2022}
Huang, Z., Shu, X., Yan, S.: The power of multiple choices in online stochastic
  matching. In: STOC. pp. 91--103 (2022)

\bibitem{HuangTWZ:TALG:2019}
Huang, Z., Tang, Z.G., Wu, X., Zhang, Y.: Online vertex-weighted bipartite
  matching: Beating $1-\frac{1}{e}$ with random arrivals. ACM Transactions on
  Algorithms  \textbf{15}(3),  1--15 (2019)

\bibitem{HuangZ:STOC:2020}
Huang, Z., Zhang, Q.: Online primal dual meets online matching with stochastic
  rewards: configuration {LP} to the rescue. In: STOC. pp. 1153--1164. ACM
  (2020)

\bibitem{HuangZZ:FOCS:2020}
Huang, Z., Zhang, Q., Zhang, Y.: {AdWords} in a panorama. In: FOCS. pp.
  1416--1426. IEEE (2020)

\bibitem{JinW:WINE:2021}
Jin, B., Williamson, D.P.: Improved analysis of ranking for online
  vertex-weighted bipartite matching in the random order model. In: WINE. pp.
  207--225. Springer (2021)

\bibitem{KalyanasundaramP:TCS:2000}
Kalyanasundaram, B., Pruhs, K.R.: An optimal deterministic algorithm for online
  $b$-matching. Theoretical Computer Science  \textbf{233}(1-2),  319--325
  (2000)

\bibitem{KarandeMT:STOC:2011}
Karande, C., Mehta, A., Tripathi, P.: Online bipartite matching with unknown
  distributions. In: STOC. pp. 587--596 (2011)

\bibitem{KarpVV:STOC:1990}
Karp, R.M., Vazirani, U.V., Vazirani, V.V.: An optimal algorithm for on-line
  bipartite matching. In: STOC. pp. 352--358. ACM (1990)

\bibitem{Liang2022OnTP}
Liang, J., Tang, Z.G., Xu, Y., Zhang, Y., Zhou, R.: On the perturbation
  function of ranking and balance for weighted online bipartite matching. ArXiv
   \textbf{abs/2210.10370} (2022)

\bibitem{MahdianY:STOC:2011}
Mahdian, M., Yan, Q.: Online bipartite matching with random arrivals: an
  approach based on strongly factor-revealing {LP}s. In: STOC. pp. 597--606
  (2011)

\bibitem{MehtaP:FOCS:2012}
Mehta, A., Panigrahi, D.: Online matching with stochastic rewards. In: FOCS.
  pp. 728--737. IEEE (2012)

\bibitem{MehtaSVV:JACM:2007}
Mehta, A., Saberi, A., Vazirani, U., Vazirani, V.: Adwords and generalized
  online matching. Journal of the ACM  \textbf{54}(5),  22--es (2007)

\bibitem{ShinA:ISAAC:2021}
Shin, Y., An, H.C.: Making three out of two: three-way online correlated
  selection. In: ISAAC. Schloss Dagstuhl-Leibniz-Zentrum für Informatik (2021)

\bibitem{TangWW:STOC:2022}
Tang, Z.G., Wu, J., Wu, H.: (fractional) online stochastic matching via
  fine-grained offline statistics. In: STOC. pp. 77--90 (2022)

\bibitem{Udwani2021AdwordsWU}
Udwani, R.: Adwords with unknown budgets and beyond (2021)

\bibitem{Vazirani2022TowardsAP}
Vazirani, V.V.: Towards a practical, budget-oblivious algorithm for the adwords
  problem under small bids (2022)

\end{thebibliography}

\newpage

\appendix
\section{Stochastic Balance}
\label{sec:balance}

\subsection{Basics}

This section covers the online primal dual analysis of the Stochastic Balance algorithm using the Stochastic Configuration LP.
The success probabilities are infinitesimal throughout this section.
We first develop some basic elements that will be used by the proofs of both results in the section.

\smallskip
\noindent
\emph{Dual Updates.~}
Let $g : [0, \infty) \to [0, 1]$ be a non-decreasing gain splitting function to be determined.
Further let $G(\load) = \int_0^\load g(z) \D z$ be the integral of $g$.
When the algorithm matches $u$ and $v$ by an $x_{uv}$ amount, the algorithm adds $x_{uv} p_{uv}$ to the objective in expectation.
Accordingly, it increases $\alpha_u(\budget_{-u})$ by $x_{uv} p_{uv} g(\load_u)$ and increases $\beta_v(\budget)$ by $x_{uv} p_{uv} \big( 1 - g(\load_u) \big)$, where $\budget$ is the  stochastic budgets of offline vertices.
Here recall that $\load_u$ denotes the (current) load of an offline vertex $u$.


\bigskip

%
The following invariant follows by how the algorithm updates dual variable $\alpha_u(\budget_{-u})$.

\begin{lemma}
    \label{lem:alpha-invariant}
    For any offline vertex $u \in U$ and conditioned on any stochastic budgets $\budget_{-u}$ of vertices other than $u$, we have $\alpha_u(\budget_{-u}) = G(\load_u)$.
\end{lemma}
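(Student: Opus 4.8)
The plan is to verify the invariant $\alpha_u(\budget_{-u}) = G(\load_u)$ by tracking the simultaneous evolution of the two sides throughout the execution of the (fractional) Stochastic Balance algorithm, conditioned on the stochastic budgets $\budget_{-u}$ of the other offline vertices. Since the success probabilities are infinitesimal in this section, it is cleanest to view the algorithm as a continuous process: each online vertex $v$ is matched fractionally in infinitesimal pieces $\D x$, and at each instant the relevant neighbor's load $\load_u$ increases by $p_{uv}\,\D x$ and the objective by $p_{uv}\,\D x$. Both $\alpha_u(\budget_{-u})$ and $\load_u$ start at $0$, so the claim reduces to showing that the differentials agree.

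First I would set up notation: fix $u$ and $\budget_{-u}$, and consider any infinitesimal step of the algorithm that sends a mass $\D x$ of some online vertex $v$ to $u$. By the dual update rule stated just above, this step increases $\alpha_u(\budget_{-u})$ by $p_{uv}\,g(\load_u)\,\D x$ where $\load_u$ is the current load of $u$. On the other hand, this same step increases $\load_u$ by $p_{uv}\,\D x$, so by the definition $G(\load) = \int_0^{\load} g(z)\,\D z$, the quantity $G(\load_u)$ increases by $g(\load_u)\,p_{uv}\,\D x + o(\D x)$. Hence the increments of $\alpha_u(\budget_{-u})$ and of $G(\load_u)$ coincide to first order at every step. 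Since steps that do not touch $u$ change neither side, and both sides equal $0$ initially (the dual variables start at $0$, and $G(0)=0$), integrating over the whole run yields $\alpha_u(\budget_{-u}) = G(\load_u)$.

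The only subtlety — and the step I would be most careful about — is the bookkeeping of what ``conditioned on $\budget_{-u}$'' does to the argument: $\alpha_u(\budget_{-u})$ is defined as the conditional expectation of the raw dual variable $\alpha_u$ over the remaining randomness (namely $\budget_u$ and the algorithm's coins), so one should note that the whole trajectory of loads and matches incident to $u$ is in fact deterministic once $\budget_{-u}$ is fixed and $u$ is imagined to have infinite budget, or else invoke linearity of expectation to push the conditional expectation through the additive update. Either way the identity holds at the level of the conditional quantities. A formal write-up would phrase this as an induction on the sequence of algorithmic steps (or, equivalently, as the fundamental theorem of calculus applied to the joint evolution), with the base case being the all-zero initialization and the inductive step being exactly the increment comparison above. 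No genuine obstacle is expected; it is a direct consequence of the update rule and the definition of $G$.
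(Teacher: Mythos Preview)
Your core argument is correct and matches the paper's: the lemma follows directly from the dual update rule, since each infinitesimal match to $u$ increases $\alpha_u$ by $g(\load_u)\,\D\load_u$, and integrating gives $G(\load_u)$. The paper in fact offers no proof beyond the one-line remark ``follows by how the algorithm updates dual variable $\alpha_u(\budget_{-u})$,'' so your write-up is strictly more detailed.

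One clarification on your ``subtlety'' paragraph: despite the notation, in this lemma $\alpha_u(\budget_{-u})$ is \emph{not} the conditional expectation over $\budget_u$; it is the raw dual variable (still random in $\budget_u$), as is clear from Lemma~\ref{lem:balance-alpha-bound} which subsequently takes $\E_{\budget_u}$ of it. So the identity $\alpha_u = G(\load_u)$ is a pointwise statement holding for every realization of all budgets, and your differential argument establishes exactly that---no appeal to linearity of expectation or to the infinite-budget trajectory is needed. Your first proposed resolution (deterministic trajectory under infinite budget) would yield $G(\load_u^\infty)$ rather than $G(\load_u)$, so that branch should be dropped.
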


Let $\load^\infty_u$ denote what $u$'s final load would be if all matches to it were unsuccessful, i.e., if its stochastic budget was $\budget_u = \infty$.
A simple calculation shows the following:

\begin{lemma}
    \label{lem:balance-alpha-bound}
    For any offline vertex $u \in U$ and conditioned on any stochastic budgets $\budget_{-u}$ of vertices other than $u$, and for the corresponding $\load^\infty_u$, we have:
    \[
        \E_{\budget_u} \, \alpha(\budget_{-u}) = \int_0^{\load^\infty_u} e^{-\budget_u} g(\budget_u) \D \budget_u
        ~.
    \]
\end{lemma}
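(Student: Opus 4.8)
The plan is to combine the pointwise identity of Lemma~\ref{lem:alpha-invariant} with the elementary observation that, once we condition on $\budget_{-u}$, the load that $u$ accumulates up to the moment it succeeds does not depend on its own hidden budget $\budget_u$.

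First I would fix $\budget_{-u}$ together with the algorithm's internal randomness, so that $\load^\infty_u$ becomes a fixed number, say $L$; by definition $L$ is the load $u$ would accumulate were $\budget_u = \infty$. The first real step is to argue that the actual final load satisfies $\load_u = \min\{L,\budget_u\}$: while $u$ is still unsuccessful, the fractional Stochastic Balance algorithm's allocation decisions are a deterministic function of the current loads and the graph, and the hidden budgets play no role, so the real run and the $\budget_u=\infty$ counterfactual agree on $u$'s load up until the real run reaches $\load_u = \budget_u$, at which point $u$ succeeds and its load stays frozen. Lemma~\ref{lem:alpha-invariant} then gives $\alpha_u = G(\load_u) = G\big(\min\{L,\budget_u\}\big)$.

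Next I would take the expectation over $\budget_u$, which is exponential with mean $1$ and, given $\budget_{-u}$, independent of $L$. The cleanest route is a layer-cake rewriting: $G\big(\min\{L,b\}\big) = \int_0^L g(z)\,\mathbf{1}[b>z]\,\D z$, so by Fubini $\E_{\budget_u}\big[\alpha_u \mid \budget_{-u}\big] = \int_0^L g(z)\,\Pr[\budget_u > z]\,\D z = \int_0^L e^{-z} g(z)\,\D z$, and since $L = \load^\infty_u$ this is exactly the claimed formula. Equivalently, one can split $\int_0^\infty e^{-b} G\big(\min\{L,b\}\big)\,\D b$ at $b=L$ and integrate the $b<L$ part by parts using $G(0)=0$; the boundary term $-e^{-L}G(L)$ then cancels the tail contribution $e^{-L}G(L)$ coming from $b\ge L$, leaving $\int_0^L e^{-b}g(b)\,\D b$.

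The computation itself is a one-liner once the coupling $\load_u = \min\{\load^\infty_u,\budget_u\}$ is in place, so the only point that needs care is that coupling — making precise that conditioning on $\budget_{-u}$ does not entangle $u$'s pre-success load accrual with $\budget_u$. This is immediate from the definition of $\load^\infty_u$ and from the fact that the algorithm never observes $\budget_u$ before $u$ succeeds, but it is worth spelling out, since the very possibility of writing the dual variables as functions of $\budget_{-u}$ rather than of the full vector $\budget$ rests on it.
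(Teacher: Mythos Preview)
Your proposal is correct and follows essentially the same route as the paper: establish $\load_u = \min\{\load^\infty_u,\budget_u\}$, apply Lemma~\ref{lem:alpha-invariant} to get $\alpha_u = G(\min\{\load^\infty_u,\budget_u\})$, and then take expectation over the exponential $\budget_u$. The paper carries out the last step via the split-at-$\load^\infty_u$ plus integration-by-parts that you sketch as your alternative; your primary layer-cake/Fubini computation is an equivalent one-line variant of the same calculation.
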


\begin{proof}
    Offline vertex $u$'s final load is $\min \big\{ \load^\infty_u, \budget_u \big\}$.
    Hence by Lemma~\ref{lem:alpha-invariant} we get that:
    \begin{align*}
        \E_{\budget_u} \, \alpha(\budget_{-u})
         &
        = \int_0^\infty e^{-\budget_u} G\big( \min \big\{ \load^\infty_u, \budget_u \big\} \big) \D \budget_u
         &   &
        \mbox{(substitution of variable)}                                                                                                               \\
         &
        = \int_0^{\load^\infty_u} e^{-\budget_u} G(\budget_u) \D \budget_u + \int_{\load^\infty_u}^\infty e^{-\budget_u} G(\load^\infty_u) \D \budget_u \\
         &
        = \int_0^{\load^\infty_u} e^{-\budget_u} G(\budget_u) \D \budget_u + e^{-\load^\infty_u} G\left( \load^\infty_u \right)                         \\
         &
        = \int_0^{\load^\infty_u} e^{-\budget_u} g(\budget_u) \D \budget_u
        ~.
         &   &
        \mbox{(integration by parts)}
    \end{align*}
\qed \end{proof}

%
%
%

%

\subsection{Equal Infinitesimal Success Probabilities}

\begin{theorem}
    \label{thm:balance-equal-prob}
    Stochastic Balance is at least $2(1-\ln 2) > 0.613$-competitive w.r.t.\ the offline stochastic benchmark $\SOPT$, when the instance has equal infinitesimal success probabilities.
\end{theorem}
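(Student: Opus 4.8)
The plan is to apply the online primal-dual machinery developed in Appendix~\ref{sec:balance} together with the Stochastic Configuration LP, so that by Lemma~\ref{lem:stochastic-configuration-lp-primal-dual} it suffices to establish the approximate dual feasibility condition
\[
    \alpha_u(\budget_{-u}) + \E_{\budget_u} \Big[ \sum_{v \in S : \budget_u \ge p_{uS(v)}} \beta_v(\budget_u, \budget_{-u}) \Big] ~\ge~ \Gamma \cdot \tilde{p}_{uS}
\]
for every offline vertex $u$, every subset $S$ of online neighbors, and every realization $\budget_{-u}$ of the other stochastic budgets, with $\Gamma = 2(1-\ln 2)$. Since success probabilities are infinitesimal, $\tilde{p}_{uS} = 1 - e^{-p_{uS}}$, and we may work in the fractional stochastic-budget model. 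The gain-splitting function $g$ will be chosen at the end; for Stochastic Balance with equal probabilities the natural candidate is the same one that appears in the non-stochastic analysis of Huang--Zhang, namely $g(\load) = \min\{ e^{\load-1}, \ldots\}$ type functions, truncated so that $g(1)=1$; I expect the optimal choice here to be $g(\load) = e^{\load - 1}$ on $[0,1]$ and $1$ beyond, mirroring the $g(x)=e^{x-1}$ choice used in the Ranking stochastic analysis.

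First I would import the structural lemma of Huang and Zhang characterizing the matching outcome of $u$ and its neighbors under Stochastic Balance: conditioned on $\budget_{-u}$, there is a well-defined value $\load^\infty_u$ (the load $u$ would accumulate if $\budget_u = \infty$), and more refined ``per-prefix'' information about which neighbors in $S$ get matched to $u$ and with how much load on $u$ at that moment. Then I would use Lemma~\ref{lem:balance-alpha-bound} to write $\E_{\budget_u}\alpha_u(\budget_{-u}) = \int_0^{\load^\infty_u} e^{-\budget_u} g(\budget_u)\,\D\budget_u$, and derive a matching integral lower bound for $\sum_{v\in S:\,\budget_u \ge p_{uS(v)}} \beta_v$: for each $v\in S$ matched to $u$ at a moment when $u$'s load is $z_v$, the term $\beta_v$ is at least $p_{uv}(1 - g(z_v))$ in a ``baseline'' sense, plus a correction from the event that $u$'s earlier matches fail, analogous to Lemma~\ref{lem:ranking-beta-expectation-bound}. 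Aggregating over $v\in S$ and integrating over $\budget_u$ should reduce the condition to a one-dimensional inequality in the variable $\load^\infty_u$ (and the ``critical loads'' of the vertices in $S$), structurally parallel to Eqn.~\eqref{eqn:ranking-non-stochastic-dual-feasibility-small} but with the extra $(1-\tilde{p}_{uS(v)})$ discount already baked into the LP's online constraint.

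Next I would run the same worst-case reduction sequence as in Section~\ref{sec:ranking}: reduce to the case where all vertices of $S$ share a single critical load, reduce $p_{uS}$ to the tight value ($p_{uS}=1$, i.e. $\tilde{p}_{uS} = 1-1/e$) by a scaling argument, and identify the extremal configuration as a step function. What remains is a clean differential/integral inequality of the form ``$\int_0^{c} g + (1 - g(\text{threshold})) + (\text{discounted tail}) \ge \Gamma$,'' which I would solve by choosing $g$ so that the binding constraint holds with equality, giving the ODE whose solution is $g(\load)=e^{\load-1}$ and the value $\Gamma = 2(1-\ln 2)$; one verifies $2(1-\ln 2) = 0.6137\ldots > 0.613$. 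The main obstacle is the second step: obtaining a clean lower bound on $\E_{\budget_u}\big[\sum_{v\in S} \mathbbm{1}[\budget_u \ge p_{uS(v)}]\beta_v\big]$ that is simultaneously (i) tight enough to yield $0.613$ rather than the weaker $0.596$ of Goyal--Udwani, and (ii) simple enough to feed into the worst-case reduction. This is exactly where the Stochastic Configuration LP is supposed to help — the discount factor $(1-\tilde{p}_{uS(v)})$ in the LP matches the probability $\Pr[\budget_u \ge p_{uS(v)}]$ that $v$ is actually charged to $u$, so the $\beta_v$ bound should combine with the $\alpha_u$ bound into a single integral over $\budget_u$ with integrand involving $g(\budget_u)$ and $1-g(\budget_u)$, which is the form that makes the choice $g(z)=e^{z-1}$ optimal. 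I would expect the bulk of the technical work (and the place where the constant $2(1-\ln 2)$ as opposed to $1-1/e$ emerges) to be in handling the ``equal probabilities'' structure, where neighbors of $u$ outside $S$ still consume budget of $u$ and must be tracked — precisely the phenomenon that the all-neighbors characterization in this paper is designed to control.
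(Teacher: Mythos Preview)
Your high-level framework (use Lemma~\ref{lem:stochastic-configuration-lp-primal-dual}, fix $u$ and $\budget_{-u}$, invoke Lemma~\ref{lem:balance-alpha-bound} for $\alpha_u$, then bound the $\beta_v$-sum) is right, but several concrete steps go astray.

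\textbf{Wrong gain-splitting function and wrong worst-case $p_{uS}$.} The choice $g(\load)=e^{\load-1}$ does \emph{not} yield $2(1-\ln 2)$ here. The boundary condition forced by the inequality at $\load=0$ is $g(0)=1-\Gamma$, so $g(0)=e^{-1}$ would correspond to $\Gamma=1-\tfrac{1}{e}$, not $2(1-\ln 2)$; and in fact the binding constraint is not satisfied by $e^{\load-1}$. The actual optimal $g$ comes from solving the ODE obtained by making the reduced inequality tight for every $\load$; after the substitution $\mu=e^{-\budget}$ one gets
\[
f(\mu)=\frac{2(2-\sqrt{\mu})\bigl(\ln 2-\ln(2-\sqrt{\mu})\bigr)}{\sqrt{\mu}}-1,
\]
which is nothing like $e^{\load-1}$. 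Relatedly, the worst case is \emph{not} $p_{uS}=1$: once $g(0)=1-\Gamma$ is fixed, the derivative of the RHS in $q=p_{uS}$ dominates that of the LHS, so the binding case is $q\to\infty$. It is precisely the computation
\[
\int_0^\infty e^{-\budget}\bigl(\budget-(\load-\budget)^+\bigr)^+\D\budget \;=\; 2e^{-\load/2}-e^{-\load}
\]
(after letting $q\to\infty$) that produces the $2(1-\ln 2)$ constant.

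\textbf{Over-importing the Ranking machinery.} The ``all-neighbors characterization'' and the worst-arrival-order/worst-critical-rank reductions from Section~\ref{sec:ranking} are not needed and would not help here. For Stochastic Balance with equal probabilities, the $\beta_v$-sum is controlled directly by the Huang--Zhang structural lemma: when $\budget_u<\load^\infty_u$, the set of vertices matched to offline vertices with load $\le\load^\infty_u$ changes by at most $p^{-1}(\load^\infty_u-\budget_u)$ compared with the $\budget_u=\infty$ run. This immediately gives
\[
\sum_{v\in S:\,\budget_u\ge p_{uS(v)}}\beta_v(\budget_u,\budget_{-u})\;\ge\;\bigl(\min\{p_{uS},\budget_u\}-(\load^\infty_u-\budget_u)^+\bigr)^+\bigl(1-g(\load^\infty_u)\bigr),
\]
with no per-vertex critical-load bookkeeping and no role for neighbors outside $S$ beyond their aggregate effect on $\load^\infty_u$. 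Integrating this against $e^{-\budget_u}$ and combining with Lemma~\ref{lem:balance-alpha-bound} gives a two-variable inequality in $(\load^\infty_u,p_{uS})$; the $q\to\infty$ reduction then collapses it to a one-variable differential inequality in $\load$, from which the ODE and the constant $2(1-\ln 2)$ fall out.
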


The rest of the subsection is devoted to proving Theorem~\ref{thm:balance-equal-prob}.
By Lemma~\ref{lem:stochastic-configuration-lp-primal-dual}, we need to show that for any offline vertex $u \in U$ and any subset of online vertices $S \subseteq V$, and conditioned on any stochastic budgets $\budget_{-u}$ of other offline vertices, we have:
\[
    \E_{\budget_u} \Big[ \, \alpha_u(\budget_{-u}) + \sum_{v \in S : \budget_u \ge p_{uS(v)}} \beta_v(\budget_u, \budget_{-u}) \, \Big] \ge \Gamma \cdot \tilde{p}_{uS}
    ~,
\]
for the stated competitive ratio $\Gamma = 2(1-\ln 2)$.
Note that it is without loss of generality to consider $S$ that is comprised of $u$'s neighbors.

We start by bounding the expectation of dual variables while leaving the gain splitting function $g$ to be determined.
The expectation of $\alpha(\budget_{-u})$ is already given by Lemma~\ref{lem:balance-alpha-bound}.
The next lemma bounds the expectation of $\beta_v(\budget_u, \budget_{-u})$'s.

\begin{lemma}
    \label{lem:balance-equal-prob-dual-gain}
    Conditioned on any $\budget_{-u}$ and the corresponding $\load^\infty_u$, we have:
    \[
        \E_{\budget_u} \Big[ \,\sum_{v \in S : \budget_u \ge p_{uS(v)}} \beta_v(\budget_u, \budget_{-u}) \, \Big]
        \ge \int_0^\infty e^{-\budget_u} \Big( \min \big\{ p_{uS}, \budget_u \big\} - (\load^\infty_u - \budget_u)^+ \Big)^+ \D \budget_u \cdot \Big( 1 - g(\load^\infty_u) \Big)
        ~.
    \]
\end{lemma}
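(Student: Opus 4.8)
The plan is to condition on the stochastic budgets $\budget_{-u}$ of all offline vertices other than $u$, to prove the claimed inequality pointwise as a function of $u$'s own budget $\budget_u$, and then to take the expectation over $\budget_u \sim \mathrm{Exp}(1)$ by integrating the resulting bound against $e^{-\budget_u}\D\budget_u$. Throughout, write $L = \load^\infty_u$ for the value determined by $\budget_{-u}$.

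First I would isolate the monotonicity structure of Stochastic Balance that plays the role of Lemma~\ref{lem:ranking-outcome} for Ranking. Run the algorithm with $\budget_u = \infty$ and record, for each load level $z \in [0, L]$, which online vertex contributes to $u$'s load at level $z$. Conditioned on $\budget_{-u}$, capping $u$'s budget at some value $\budget_u$ merely truncates this picture: in the run with budget $\budget_u$, vertex $u$ receives exactly the load below $\min\{L, \budget_u\}$ from the very same online vertices, becomes successful the moment its load reaches $\budget_u$ (if $\budget_u < L$), and performs all of its matches at load below $\budget_u$. This holds because the prefix of matches onto $u$ up to any level $c < \budget_u$ does not depend on whether $u$'s budget is $\budget_u$ or $\infty$ — $u$ turns successful only after those matches have been made. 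I would reuse the corresponding structural lemma of Huang and Zhang~\cite{HuangZ:STOC:2020}, specialized to equal infinitesimal success probabilities, which is consistent with the bound on $\E_{\budget_u}\alpha_u(\budget_{-u})$ in Lemma~\ref{lem:balance-alpha-bound}.

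Next I would lower bound, in the run with budget $\budget_u$, the contribution of each $v \in S$ with $\budget_u \ge p_{uS(v)}$ to $\beta_v$. Call such a $v$ \emph{served} if $u$ is still unsuccessful when $v$ arrives. For a served $v$, Stochastic Balance matches it to a least loaded unsuccessful neighbor, whose load is at most that of $u$ at that moment and hence at most $L$; since $g$ is non-decreasing, each unit of $v$'s matching mass then contributes at least $1 - g(L)$ to $\beta_v$, so $\beta_v \ge p\bigl(1 - g(L)\bigr)$. The total success-probability mass of all $v \in S$ with $\budget_u \ge p_{uS(v)}$ is $\min\{p_{uS}, \budget_u\}$ in the infinitesimal limit, since these are precisely the vertices of $S$ whose preceding $S$-mass is below $\budget_u$. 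A first attempt then keeps $\beta_v \ge p(1 - g(L))$ for served vertices and $\beta_v \ge 0$ otherwise, and observes that the non-served vertices that $u$ would have absorbed above level $\budget_u$ in the $\budget_u = \infty$ run have total mass at most $(L - \budget_u)^+$, because that is the load $u$ gains above level $\budget_u$ and each such match consumes $p$ of it; this would give the pointwise bound $\sum_{v \in S : \budget_u \ge p_{uS(v)}} \beta_v \ge \bigl(1 - g(L)\bigr)\bigl(\min\{p_{uS}, \budget_u\} - (L - \budget_u)^+\bigr)^+$, and integrating against $e^{-\budget_u}\D\budget_u$ would finish the proof.

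The gap in that first attempt — and what I expect to be the main obstacle — is that when $\budget_u$ is small and $u$ becomes successful early, the ensuing rerouting of online vertices can raise the loads of other offline vertices, so there may be additional non-served vertices $v \in S$ matched, in this run, at loads strictly larger than $L$ and hence contributing less than $p(1-g(L))$ to $\beta_v$. The fix is not to bound those $\beta_v$ by a single number but to track, as $\budget_u$ grows, the neighbor that absorbs each such $v$ and to charge its $\beta$-gain to the integral of $1 - g$ over the corresponding load window; the load mass that $u$ fails to absorb above level $\budget_u$ reappears on these neighbors, and after integrating over $\budget_u$ this slack is exactly offset by the $(L - \budget_u)^+$ term. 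The cleanest way to carry out this accounting is to compare the run with finite $\budget_u$ to the $\budget_u = \infty$ run level by level — precisely the kind of structural characterization that Huang and Zhang~\cite{HuangZ:STOC:2020} establish for Stochastic Balance — after which the present lemma follows by specialization and integration.
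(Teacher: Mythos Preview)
Your overall strategy---prove a pointwise lower bound for each fixed $\budget_u$, then integrate against $e^{-\budget_u}\D\budget_u$---matches the paper, and your treatment of served vertices (those arriving while $u$ is still unsuccessful) is correct: each such $v$ is matched at load at most $L$ and contributes at least $p\bigl(1-g(L)\bigr)$.

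Where your write-up goes astray is in the ``fix'' paragraph. You correctly identify the gap in the first attempt---cascading can push additional $v\in S$ (ones that were \emph{not} routed through $u$ in the $\budget_u=\infty$ run) above load $L$---but the remedy you describe is the wrong tool. The accounting via integrals of $1-g$ over load windows is the structural lemma for \emph{unequal} probabilities (Lemma~22 of Huang and Zhang~\cite{HuangZ:STOC:2020}, used in the paper's Lemma~\ref{lem:balance-general-dual-gain}). For equal probabilities the relevant statement is their Lemma~7: comparing the $\budget_u$ run to the $\budget_u=\infty$ run, the set of online vertices matched at load at most $L$ differs by at most $p^{-1}(L-\budget_u)^+$ vertices, and this count already absorbs the entire cascade. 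Since every $v$ in the sum is matched at load $\le L$ in the $\infty$ run, at most $p^{-1}(L-\budget_u)^+$ of them can be matched at load $>L$ in the $\budget_u$ run; bounding those by $\beta_v\ge 0$ and the rest by $\beta_v\ge p(1-g(L))$ gives exactly
\[
\sum_{v \in S : \budget_u \ge p_{uS(v)}} \beta_v(\budget_u, \budget_{-u}) \;\ge\; \Bigl(\min\{p_{uS},\budget_u\}-(L-\budget_u)^+\Bigr)^+\bigl(1-g(L)\bigr),
\]
which is your ``first attempt'' bound, now justified. No level-by-level charging or integration over $\budget_u$ is needed to close the gap; the pointwise bound holds as stated, and integrating it is the final step. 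So: keep your first attempt, but replace the informal counting of what $u$ absorbs by a direct citation of Lemma~7 of~\cite{HuangZ:STOC:2020}, and drop the $(1-g)$-integral detour.
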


\begin{proof}

    If $u$'s stochastic budget is at least $\budget_u \ge \load^\infty_u$, then $u$ remains unsuccessful until the end and therefore any vertex $v \in S$ would be matched to an offline vertex with load at most $\load^\infty_u$.
    Further note that $\sum_{v \in S : \budget_u \ge p_{uS(v)}} p_{uv} = \min \big\{ p_{uS}, \budget_u \big\}$.
    Hence:
    \[
        \sum_{v \in S : \budget_u \ge p_{uS(v)}} \beta_v(\budget_u, \budget_{-u}) \ge \min \big\{ p_{uS}, \budget_u \big\} \Big( 1 - g(\load^\infty_u) \Big)
    \]

    Next consider the case when $u$'s stochastic budget is $\budget_u < \load^\infty_u$.
    First consider how much the $\beta_v$'s would sum to if $u$'s stochastic budget was $\infty$ (while using the actual $\budget_u$ to define the range of summation).
    By the same argument as above, all vertices in the summation would be matched to an offline vertex with load at most $\load^\infty_u$.
    We can therefore lower bound the actual value of $\beta_v$'s using the following structural lemma.

    \begin{lemma}[c.f., Lemma 7 of Huang and Zhang~\cite{HuangZ:STOC:2020}]
        If we compare the set of online vertices that are matched to offline vertices with loads at most $\load^\infty_u$ when $u$'s stochastic budget is $\budget_u$, with the set of such vertices when $u$'s stochastic budget is $\infty$, they differ by at most $p^{-1} \big( \load^\infty_u - \budget_u \big)^+$ vertices.
    \end{lemma}

    We get that:
    \[
        \sum_{v \in S : \budget_u \ge p_{uS(v)}} \beta_v(\budget_u, \budget_{-u}) \ge \Big( \min \big\{ p_{uS}, \budget_u \big\} - \big( \load^\infty_u - \budget_u\big) \Big)^+ \Big( 1 - g(\load^\infty_u) \Big)
        ~.
    \]

    Combining the two cases proves the lemma.
\qed \end{proof}

Next we apply the bounds of dual variables and focus on optimizing the gain splitting function $g$.
For ease of notations, we will now omit the superscripts and subscripts, replacing $\load^\infty_u$ with $\load$, $\budget_u$ with $\budget$, and $p_{uS}$ with $q$ (because $p$ is already devoted to denote the equal success probabilities).
To finish the proof of Theorem~\ref{thm:balance-equal-prob}, it suffices to show the next lemma.

\begin{lemma}
    There is a non-decreasing function $g : [0, \infty) \to [0, 1]$ such that for any $\load \ge 0, q \ge 0$:
    \[
        \int_0^\load e^{-\budget} g(\budget) \D \budget + \int_0^\infty e^{-\budget} \big( \min \{ q, \budget \} - (\load - \budget)^+ \big)^+ \D \budget \cdot \big(1-g(\load)\big) \ge \Gamma \cdot (1 - e^{-q})
        ~,
    \]
    for $\Gamma = 2(1-\ln 2) > 0.613$.
\end{lemma}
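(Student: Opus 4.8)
The plan is to collapse this two‑parameter inequality into a one‑dimensional problem in $g$: first put the coefficient of $(1-g(\load))$ in closed form, then show the universally quantified $q$ has a single worst case, and finally choose $g$ so that the resulting inequality is tight. For the first step, write $h(\load,q)$ for $\int_0^\infty e^{-\budget}\big(\min\{q,\budget\}-(\load-\budget)^+\big)^+\D\budget$. Using $a-(a-b)^+=\min\{a,b\}$ for $a,b\ge 0$ together with $\int_0^\infty e^{-\budget}\min\{q,\budget\}\D\budget=1-e^{-q}$, one gets $h(\load,q)=(1-e^{-q})-\int_0^\load e^{-\budget}\min\{q,\budget,\load-\budget\}\D\budget$; splitting the remaining integral at $\min\{q,\load/2\}$ yields $\int_0^\load e^{-\budget}\min\{q,\budget,\load-\budget\}\D\budget=(1-e^{-a})(1-e^{-(\load-a)})$ with $a=\min\{q,\load/2\}$, so in particular $h(\load,\infty)=2e^{-\load/2}-e^{-\load}$.

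For the reduction over $q$, the closed form gives $\partial_q h(\load,q)=e^{-\max\{q,\load-q\}}$, hence the $q$‑derivative of $(\mathrm{LHS}-\mathrm{RHS})$ equals $e^{-\max\{q,\load-q\}}\bigl(1-g(\load)\bigr)-\Gamma e^{-q}$. This expression is increasing in $q$ on $[0,\load/2]$ and equals $e^{-q}\bigl(1-g(\load)-\Gamma\bigr)$ on $[\load/2,\infty)$; a short case check then shows that as long as $g(\load)\ge 1-\Gamma$ it is $\le 0$ for all $q\ge 0$, so $(\mathrm{LHS}-\mathrm{RHS})$ is non‑increasing in $q$ and its infimum is attained as $q\to\infty$. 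Therefore it suffices to exhibit a non‑decreasing $g:[0,\infty)\to[0,1]$ with $g(0)\ge 1-\Gamma$ such that for all $\load\ge 0$,
\[
\int_0^\load e^{-\budget}g(\budget)\D\budget+\bigl(2e^{-\load/2}-e^{-\load}\bigr)\bigl(1-g(\load)\bigr)\ \ge\ \Gamma .
\]

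Taking $\load=0$ above forces $g(0)\le 1-\Gamma$, so we aim for equality everywhere: define $g$ by the first‑order linear ODE obtained by differentiating the displayed relation (read as an equality) in $\load$, with boundary value $g(0)=1-\Gamma$. Substituting $w=e^{\load/2}$ this integrates to $g(\load)=(2w-1)\bigl(2\ln\tfrac{w}{2w-1}-\tfrac1w+2-\Gamma\bigr)+1-\tfrac1w$, so by construction the displayed inequality holds with equality for every $\load$; it only remains to verify $g$ is admissible. With $\phi(\load)=g(\load)-(1-e^{-\load/2})$ the ODE reads $g'(\load)=\phi(\load)/\bigl(2-e^{-\load/2}\bigr)$ and $\phi(\load)=(2w-1)F(w)$ where $F(w)=2\ln\tfrac{w}{2w-1}-\tfrac1w+2-\Gamma$. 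Here $F(1)=1-\Gamma>0$, $F'(w)=-\tfrac{1}{w^2(2w-1)}<0$, and, crucially, $F(\infty)=2-2\ln 2-\Gamma=0$ precisely because $\Gamma=2(1-\ln 2)$; hence $F>0$ on $[1,\infty)$, so $\phi>0$ and $g$ is strictly increasing, with $g(0)=1-\Gamma\ge 0$ and $\lim_{\load\to\infty}g(\load)=1$ (since $\phi\to 0$). Thus $g:[0,\infty)\to[1-\Gamma,1)\subseteq[0,1]$ is non‑decreasing with $g(0)\ge 1-\Gamma$, which is exactly what the reduction needed, and the lemma follows.

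The two delicate points are the reduction over \emph{all} $q\ge 0$ to the limit $q\to\infty$ — the piecewise sign analysis of $\partial_q(\mathrm{LHS}-\mathrm{RHS})$, which relies on $g$ being non‑decreasing as a structural input — and the identity $F(\infty)=0$, which is what singles out $\Gamma=2(1-\ln 2)$ as the largest ratio for which the ODE solution stays non‑decreasing (and in $[0,1]$); everything else is mechanical integration.
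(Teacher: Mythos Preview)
Your proof is correct and follows essentially the same line as the paper: reduce over $q$ to the limit $q\to\infty$ by comparing the $q$-derivatives of the two sides (using that $g(\ell)\ge g(0)=1-\Gamma$), then solve the resulting one-parameter equality as a linear ODE in $g$ and check the solution is admissible. The only differences are cosmetic: you first put the coefficient $h(\load,q)$ in closed form and then differentiate, whereas the paper bounds $\partial_q h$ directly without computing $h$; and you substitute $w=e^{\load/2}$ where the paper substitutes $\mu=e^{-\load}$, arriving at the same function (your formula rewrites to the paper's $f(\mu)=\tfrac{2(2-\sqrt{\mu})(\ln 2-\ln(2-\sqrt{\mu}))}{\sqrt{\mu}}-1$). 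Your argument that $F(w)>0$ on $[1,\infty)$ with $F(\infty)=0$ exactly when $\Gamma=2(1-\ln 2)$ is a nice explicit verification of monotonicity and of the extremality of $\Gamma$, which the paper leaves implicit.
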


\begin{proof}
    First consider the boundary condition when $\load = 0$:
    \[
        \int_0^\infty e^{-\budget} \min \{q, \budget\} \D \budget \cdot \big(1-g(0)\big) = \big( 1-e^{-q} \big) \big( 1-g(0) \big) \ge \Gamma \big(1 - e^{-q} \big)
        ~.
    \]

    Hence, we let $g(0) = 1 - \Gamma$.

    Next we simplify the differential inequality by characterizing the worst-case value of $q$ for any given $\load$.
    Consider increasing $q$ by an infinitesimal amount $\epsilon$.
    How would this affect the LHS?
    The first integral stays the same.
    In the second integral, the term $\big( \min \{ p_{uS}, \budget_u \} - ( \load^\infty - \budget_u) \big)^+$ increases by at most $\epsilon$ for $\budget > q$.
    Hence, the derivative of LHS w.r.t.\ $q$ is at most:
    \[
        \int_q^\infty e^{-\budget} \D \budget \cdot \big(1-g(\load)\big)
        = e^{-q} \big(1-g(\load)\big)
        ~.
    \]

    On the other hand, the derivative of RHS w.r.t.\ $q$ is $e^{-q}\,\Gamma$.
    By the monotonicity of $g$ and the boundary condition that $g(0) = 1-\Gamma$, this is $e^{-q} \big(1-g(0)\big) \ge e^{-q} \big(1-g(\load)\big)$.
    In other words, the derivative of RHS is greater than or equal to the derivative of LHS.

    Therefore, the lemma's differential inequality holds if and only if it holds for $q = \infty$.
    Simplifying the LHS with basic calculus after plugging in $q = \infty$, our task simplifies to finding a function $g$ such that for any $\load \ge 0$:
    \[
        \int_0^\load e^{-\budget} g(\budget) \D \budget + \big( 2 e^{-\load / 2} - e^{-\load} \big) \big(1-g(\load)\big) \ge \Gamma
        ~.
    \]

    To further reduce the problem to solving an ordinary differential equation, we next substitute variables by letting $\mu = e^{-\budget}$ and $f(\mu) = g(- \ln \mu)$.
    The problem then becomes finding a function $f$ such that for any $0 \le \lambda \le 1$:
    \[
        \int_\lambda^1 f(\mu) \D \mu + \big( 2 \sqrt{\lambda} - \lambda \big) \big(1 - f(\lambda) \big) \ge \Gamma
        ~,
    \]
    with boundary conditions $f(1) = 1 - \Gamma$ and $f(0) \le 1$.
    The lemma's competitive ratio $\Gamma = 2(1-\ln 2)$ is achieved when this holds with equality for all $0 \le \lambda \le 1$, which reduces to an ordinary differential equation.
    Solving it gives:
    \[
        f(\mu) = \begin{cases}
            \frac{2 (2 - \sqrt{\mu}) (\ln 2 - \ln (2-\sqrt{\mu}))}{\sqrt{\mu}} - 1 & \mbox{if $\mu > 0$;} \\
            1                                                                      & \mbox{if $\mu = 0$.}
        \end{cases}
    \]
\qed \end{proof}

\subsection{General Infinitesimal Success Probabilities}

\begin{theorem}
    \label{thm:balance-general}
    Stochastic Balance is at least $0.611$-competitive w.r.t.\ the offline stochastic benchmark $\SOPT$, when the instance has (general) infinitesimal success probabilities.
\end{theorem}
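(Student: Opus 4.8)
The plan is to mirror the online primal-dual analysis of the equal-probability case in Theorem~\ref{thm:balance-equal-prob}, but now applied to the \emph{fractional} Stochastic Balance algorithm with preference $p_{uv}\bigl(1 - g(\load_u)\bigr)$ and its gain-splitting function $g : [0,\infty)\to[0,1]$ from the Basics section; by the randomized-rounding reduction of Huang and Zhang~\cite{HuangZ:STOC:2020} this suffices for the integral algorithm under infinitesimal success probabilities. Invoking Lemma~\ref{lem:stochastic-configuration-lp-primal-dual}, it is enough to prove, for every offline vertex $u$, every subset $S$ of its neighbors, and conditioned on any stochastic budgets $\budget_{-u}$ of the other offline vertices,
\[
\E_{\budget_u}\Bigl[\alpha_u(\budget_{-u}) + \sum_{v \in S\,:\,\budget_u \ge p_{uS(v)}}\beta_v(\budget_u,\budget_{-u})\Bigr] ~\ge~ \Gamma\cdot\tilde{p}_{uS}
\]
with $\Gamma = 0.611$, where $\tilde{p}_{uS} = 1 - e^{-p_{uS}}$ under infinitesimal probabilities.

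First, the $\alpha$-term is inherited unchanged: Lemma~\ref{lem:balance-alpha-bound} already holds for arbitrary success probabilities, giving $\E_{\budget_u}\alpha_u(\budget_{-u}) = \int_0^{\load^\infty_u} e^{-\budget_u} g(\budget_u)\D\budget_u$. The substantive step is a general-probability replacement of Lemma~\ref{lem:balance-equal-prob-dual-gain}. Here the key complication is that, unlike the equal case, the preference $p_{uv}\bigl(1 - g(\load_u)\bigr)$ does not reduce to ``match $v$ to the lowest-load available neighbor'', so one cannot simply argue that every $v\in S$ is matched to an offline vertex of load at most $\load^\infty_u$. Instead I would invoke the structural lemmas of Huang and Zhang~\cite{HuangZ:STOC:2020} for the fractional algorithm, which track, as a function of the load level, the total success-probability mass of online vertices matched to offline vertices below that level, and bound how this mass is displaced when $u$'s stochastic budget shrinks from $\infty$ to $\budget_u$. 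Combining these with the monotonicity of $g$ should produce a lower bound of the form $\int_0^\infty e^{-\budget}\,\Phi(\budget, p_{uS}, \load^\infty_u)\,\D\budget \cdot \bigl(1 - g(\load^\infty_u)\bigr)$ for a suitable nonnegative, concave-in-$\budget$ function $\Phi$ that is weaker than the one $\bigl(\min\{p_{uS},\budget\} - (\load^\infty_u - \budget)^+\bigr)^+$ appearing in Lemma~\ref{lem:balance-equal-prob-dual-gain}; this weakening is exactly what separates the ratio $0.611$ from the $2(1-\ln 2)$ of the equal case.

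Having assembled the two bounds, I would, as in the equal case, rename $\load = \load^\infty_u$ and $q = p_{uS}$, fix $g(0) = 1 - \Gamma$ from the $\load = 0$ boundary case, and then characterize the worst value of $q$ for each fixed $\load$ by differentiating both sides in $q$; in the general case I expect the clean ``worst case is $q = \infty$'' reduction of the equal analysis to fail, or to require an explicit treatment of an intermediate regime of $q$. Finally I would substitute $\mu = e^{-\budget}$ and $f(\mu) = g(-\ln\mu)$, impose equality to obtain an ordinary differential equation for $f$, solve it (resorting to a numerical solution if no closed form is available), and verify that the solution is non-decreasing with range in $[0,1]$ and certifies $\Gamma \ge 0.611$.

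The main obstacle is the middle step: obtaining the correct general-probability structural bound on the $\beta_v$'s. Because the fractional matching is governed by the product $p_{uv}\bigl(1 - g(\load_u)\bigr)$, tracking which online vertices are displaced --- and by how much probability mass --- when $u$'s budget changes requires a careful adaptation of Huang and Zhang's structural lemmas, and the precise form of the resulting function $\Phi$ is what pins down the final constant. A secondary difficulty is solving the ensuing differential equation and confirming the numerical value $0.611$, since unlike the equal case it need not have a tidy closed form.
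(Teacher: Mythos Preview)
Your high-level framework is right: primal-dual against the Stochastic Configuration LP via Lemma~\ref{lem:stochastic-configuration-lp-primal-dual}, the $\alpha$-bound from Lemma~\ref{lem:balance-alpha-bound}, and the randomized-rounding reduction of Huang and Zhang. The gap is in the shape you posit for the $\beta$-bound and in everything downstream that depends on it.

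You assume the $\beta$-term can be lower-bounded by an expression of the form $\int_0^\infty e^{-\budget}\,\Phi(\budget,p_{uS},\load^\infty_u)\,\D\budget \cdot \bigl(1-g(\load^\infty_u)\bigr)$ with $\Phi$ independent of $g$. That factorization is precisely what holds in the equal-probability case (via Lemma~7 of Huang and Zhang, which counts displaced \emph{vertices}) and it is precisely what fails in the general case. With unequal $p_{uv}$'s one can no longer convert ``number of displaced vertices'' into ``displaced $\beta$-mass'' via a single factor $p\bigl(1-g(\load^\infty_u)\bigr)$. The correct structural tool is Lemma~22 of Huang and Zhang, which bounds the displaced $\beta$-mass directly by $\int_{\budget_u}^{\load^\infty_u}\bigl(1-g(z)\bigr)\D z$. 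The resulting bound (the paper's Lemma~\ref{lem:balance-general-dual-gain}) is, for $\budget_u<\load^\infty_u$,
\[
\Bigl(\min\{p_{uS},\budget_u\}\,\bigl(1-g(\load^\infty_u)\bigr)-\int_{\budget_u}^{\load^\infty_u}\bigl(1-g(z)\bigr)\D z\Bigr)^+,
\]
which depends on $g$ over the whole interval $[\budget_u,\load^\infty_u]$, not only through $g(\load^\infty_u)$.

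This mis-specification propagates. First, contrary to your expectation, the ``worst case is $q=\infty$'' reduction \emph{does} survive unchanged: differentiating both sides in $q$ still gives $e^{-q}\bigl(1-g(\load)\bigr)\le e^{-q}\Gamma$ once $g(0)=1-\Gamma$. Second, your plan to substitute $\mu=e^{-\budget}$ and solve an ODE cannot work here, because the inequality after setting $q=\infty$ contains the term $\int_0^\load e^{-\budget}\bigl(\budget\,(1-g(\load))-\int_\budget^\load(1-g(z))\D z\bigr)^+\D\budget$, which is non-local in $g$ and carries a data-dependent truncation. The paper handles this by introducing an auxiliary threshold function $h(\load)$ to replace the $(\cdot)^+$ with a fixed lower integration limit, and then alternately (i) solving a linear program for a step-function $g$ given $h$ and (ii) updating $h$ given $g$; the ratio $0.611$ comes from this numerical optimization, not from a closed-form ODE solution.
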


The rest of the subsection is devoted to proving Theorem~\ref{thm:balance-general}.
By Lemma~\ref{lem:stochastic-configuration-lp-primal-dual}, we need to show that for any offline vertex $u \in U$, any subset of online vertices $S \subseteq V$, and any stochastic budgets $\budget_{u}$ of other offline vertices, we have:
\[
    \E_{\budget_u} \Big[ \, \alpha_u(\budget_{-u}) + \sum_{v \in S : \budget_u \ge p_{uS(v)}} \beta_v(\threshold) \, \Big] \ge \Gamma \cdot \tilde{p}_{uS}
    ~,
\]
for the stated competitive ratio $\Gamma = 0.611$.

Once again, we start by bounding the expectation of dual variables while leaving the gain splitting function $g$ to be determined.
Recall that $\load^\infty_u$ denotes the load of $u$ if all matches to $u$ fail, i.e., if its stochastic budget was $\budget_u = \infty$.
The expectation of $\alpha(\budget_{-u})$ is already given by Lemma~\ref{lem:balance-alpha-bound}.
The next lemma bounds the expectation of $\beta_v(\threshold)$'s.

\begin{lemma}
    \label{lem:balance-general-dual-gain}
    Conditioned on any $\budget_{-u}$ and the corresponding $\load^\infty_u$, for any $\budget_u$, we have:
    \[
        \sum_{v \in S : \budget_u \ge p_{uS(v)}} \beta_v(\budget_u, \budget_{-u})
        \ge
        \begin{cases}
            \min \big\{ p_{uS}, \budget_u \big\} \cdot \big( 1 - g(\load^\infty_u) \big)
             &
            \mbox{if $\budget_u \ge \load^\infty_u$;}
            \\[2ex]
            \Big( \min \big\{ p_{uS}, \budget_u \big\} \cdot \big( 1 - g(\load^\infty_u) \big) - \int_{\budget_u}^{\load^\infty_u} \big( 1 - g(z) \big) dz \Big)^+
             &
            \mbox{if $\budget_u > \load^\infty_u$.}
        \end{cases}
    \]
\end{lemma}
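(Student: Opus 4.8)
The plan is to follow the two‑case structure in the statement, mirroring the proof of Lemma~\ref{lem:balance-equal-prob-dual-gain} but replacing its single invocation of the equal‑probability structural lemma by the general‑probability analogue of ``Lemma~7 of Huang and Zhang~\cite{HuangZ:STOC:2020}''. Throughout, write $S^{*}=\{\,v\in S:\budget_u\ge p_{uS(v)}\,\}$ for the index set of the sum, and recall from the proof of Lemma~\ref{lem:balance-equal-prob-dual-gain} that, for infinitesimal success probabilities, $\sum_{v\in S^{*}}p_{uv}=\min\{p_{uS},\budget_u\}$.

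First consider the case $\budget_u\ge\load^\infty_u$. Then $u$'s load never exceeds $\load^\infty_u\le\budget_u$, so $u$ is unsuccessful and available throughout, and the run coincides with the one in which $u$'s budget is $\infty$. Hence every $v\in S$ is fully matched, and each infinitesimal piece of $v$'s match is assigned to a neighbor $u'$ maximizing $p_{u'v}\bigl(1-g(\load_{u'})\bigr)$; since $u$ is a candidate with current load at most $\load^\infty_u$ and $g$ is non‑decreasing, this value is at least $p_{uv}\bigl(1-g(\load^\infty_u)\bigr)$. Integrating over $v$'s match gives $\beta_v(\budget_u,\budget_{-u})\ge p_{uv}\bigl(1-g(\load^\infty_u)\bigr)$, and summing over $v\in S^{*}$ with the identity above yields $\sum_{v\in S^{*}}\beta_v(\budget_u,\budget_{-u})\ge\min\{p_{uS},\budget_u\}\bigl(1-g(\load^\infty_u)\bigr)$, the first bound.

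Now consider the remaining case (the condition ``$\budget_u>\load^\infty_u$'' printed above should read $\budget_u<\load^\infty_u$). I would compare the actual run ``Run‑$\budget_u$'' with the hypothetical ``Run‑$\infty$'' in which $u$'s stochastic budget is $\infty$; the two runs are identical until $u$'s load first reaches $\budget_u$, after which $u$ is successful and unavailable in Run‑$\budget_u$ but keeps being matched in Run‑$\infty$. In Run‑$\infty$ the argument of the first case applies verbatim, so $\sum_{v\in S^{*}}\beta_v^{\infty}\ge\min\{p_{uS},\budget_u\}\bigl(1-g(\load^\infty_u)\bigr)$, where $\beta_v^{\infty}$ denotes the dual value produced by Run‑$\infty$. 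Since the $\beta$'s are non‑negative, it then suffices to prove that capping $u$'s budget destroys at most $\int_{\budget_u}^{\load^\infty_u}\bigl(1-g(z)\bigr)\D z$ of dual value, i.e.\ $\sum_{v\in S^{*}}\bigl(\beta_v^{\infty}-\beta_v(\budget_u,\budget_{-u})\bigr)\le\int_{\budget_u}^{\load^\infty_u}\bigl(1-g(z)\bigr)\D z$; subtracting this from the Run‑$\infty$ bound and taking the positive part gives exactly the stated expression.

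For this loss bound I would use two ingredients: (i) the exhaustiveness identity that, for any fixed realization, the dual objective $\sum_{u'}\alpha_{u'}+\sum_v\beta_v$ equals the total success‑probability‑weighted fractional matching, since the gain $p_{uv}$ is split in full between $\alpha_u$ and $\beta_v$; and (ii) the monotonicity fact (the Huang–Zhang structural lemma) that capping $u$'s budget at $\budget_u$ only weakly increases the final load of every other offline vertex, while $u$'s own final load drops from $\load^\infty_u$ to $\budget_u$. Since $\alpha_{u'}=G(\load_{u'})$ by Lemma~\ref{lem:alpha-invariant}, (i) and (ii) let one write $\sum_v\beta_v^{\infty}-\sum_v\beta_v$ in terms of the changes in the $G(\load_{u'})$'s and in the total matching weight, and then bound it by $\int_{\budget_u}^{\load^\infty_u}\bigl(1-g(z)\bigr)\D z$ using only $g\le 1$ together with monotonicity of $g$; the same monotonicity package also gives $\beta_v^{\infty}\ge\beta_v$ vertex by vertex, so restricting the loss to $v\in S^{*}$ only decreases it. The main obstacle is exactly ingredient (ii): making rigorous, in the fractional general‑probability setting, that the cascade of re‑matchings triggered by removing $u$ cannot push any other offline vertex's load below its Run‑$\infty$ value. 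I would isolate this as an explicit monotonicity lemma for Stochastic Balance (the analogue of the structural lemma cited in Lemma~\ref{lem:balance-equal-prob-dual-gain}) rather than re‑derive it inline.
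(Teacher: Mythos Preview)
Your proposal is correct and follows essentially the same route as the paper: the case $\budget_u\ge\load^\infty_u$ is handled exactly as in Lemma~\ref{lem:balance-equal-prob-dual-gain}, and for $\budget_u<\load^\infty_u$ you compare to the $\budget_u=\infty$ run and bound the loss in $\sum_{v\in S^{*}}\beta_v$ by $\int_{\budget_u}^{\load^\infty_u}(1-g(z))\,dz$. The only difference is packaging: the paper invokes this loss bound as a black box (Lemma~22 of Huang and Zhang~\cite{HuangZ:STOC:2020}, stated for arbitrary subsets $T\subseteq V$), whereas you sketch its proof via the exhaustiveness identity for $T=V$ and then restrict to $S^{*}$ using the pointwise inequality $\beta_v^{\infty}\ge\beta_v$---which indeed follows from the same load-monotonicity ingredient you flag as the main obstacle.
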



\begin{proof}
    If $\budget_u \ge \load^\infty_u$, the proof is verbatim to the counterpart in Lemma~\ref{lem:balance-equal-prob-dual-gain}.
    Since $u$ remains unsuccessful until the end,  any neighboring online vertex $v \in S$ would be matched to an offline vertex with load at most $\load^\infty_u$.
    Further note that $\sum_{v \in S : \budget_u \ge p_{uS(v)}} p_{uv} = \min \big\{ p_{uS}, \budget_u \big\}$.
    Hence:
    \[
        \sum_{v \in S : \threshold_u \ge \tilde{p}_{uS(v)}} \beta_v(\threshold) \ge \min \big\{ p_{uS}, \budget_u \big\} \Big( 1 - g(\load^\infty_u) \Big)
        ~.
    \]

    Next consider the case when $\budget_u < \load^\infty_u$.
    The proof is also similar to the counterpart in Lemma~\ref{lem:balance-equal-prob-dual-gain}, except that we would need a different structural lemma.
    First consider how much the $\beta_v$'s would sum to if $u$'s stochastic budget was $\infty$ (while using the actual $\budget_u$ to define the range of summation).
    By the same argument as above, we have that all vertices in the summation would be matched to an offline vertex with load at most $\load^\infty_u$.
    In other words:
    \[
        \sum_{v \in S : \budget_u \ge p_{uS(v)}} \beta_v(\budget'_u = \infty, \budget_{-u}) \ge \min \big\{ p_{uS}, \budget_u \big\} \Big( 1 - g(\load^\infty_u) \Big)
        ~.
    \]

    We can then lower bound the actual value of $\beta_v$'s using the following structural lemma.

    \begin{lemma}[c.f., Lemma 22 of Huang and Zhang~\cite{HuangZ:STOC:2020}]
        For any subset of online vertices $T \subseteq V$, if we compare the value of $\sum_{v \in T} \beta_v$ when $u$'s stochastic budget is $\budget_u$, with the counterpart when $u$'s stochastic budget is $\infty$, they differ by at most $\int_{\budget_u}^{\load^\infty_u} \big( 1 - g(z) \big) dz$.
    \end{lemma}

    On the one hand, by the above argument we get that:
    \[
        \sum_{v \in S : \budget_u \ge p_{uS(v)}} \beta_v(\budget_u, \budget_{-u}) \ge \min \big\{ p_{uS}, \budget_u \big\} \Big( 1 - g(\load^\infty_u) \Big) - \int_{\budget_u}^{\load^\infty_u} \big( 1 - g(z) \big) dz
        ~.
    \]

    On the other hand, the LHS above is nonnegative.
    The lemma follows.
\qed \end{proof}

Our last task is to apply the bounds of dual variables and focus on optimizing the gain splitting function $g$.
Once again, we shall now omit the superscripts and subscripts  for ease of notations, replacing $\load^\infty_u$ with $\load$, $\budget_u$ with $\budget$, and $p_{uS}$ with $q$ (because $p$ is already devoted to denote the equal success probabilities).
It suffices to prove the next lemma.

\begin{lemma}
    There is a non-decreasing function $g : [0, \infty) \to [0, 1]$ such that for any $\load \ge 0, q \ge 0$:
    \begin{multline*}
        \int_0^\load e^{-\budget} g(\budget) d\budget + \int_0^\load e^{-\budget} \Big( \min \{ q, \budget \} \big(1-g(\load)\big)  - \int_{\budget}^{\load} \big( 1 - g(z) \big) dz \Big)^+ d\budget \\
        + \int_\load^\infty e^{-\budget}  \min \{ q, \budget \} \big(1-g(\load)\big) d\budget\ge \Gamma \cdot (1 - e^{-q})
        ~,
    \end{multline*}
    for $\Gamma = 0.611$.
\end{lemma}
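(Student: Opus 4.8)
The plan is to follow the same template as the proof of the equal-probability case. \textbf{Step 1 (boundary at $\load=0$).} First I would set $\load=0$: the first two integrals vanish and the third becomes $\big(1-g(0)\big)\int_0^\infty e^{-\budget}\min\{q,\budget\}\D\budget=\big(1-g(0)\big)\big(1-e^{-q}\big)$, so the inequality at $\load=0$ is equivalent to $g(0)\le 1-\Gamma$. I would fix $g(0)=1-\Gamma$, which also identifies $\load=0$ as the case where the bound should be tight.

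\textbf{Step 2 (the worst $q$ is $q=\infty$).} Next I would show that for every fixed $\load$ the quantity $\mathrm{LHS}-\mathrm{RHS}$ is non-increasing in $q$, so it suffices to prove the inequality in the limit $q\to\infty$. Differentiating in $q$: the first integral is constant, and in the remaining two integrals only the summands with $\budget>q$ depend on $q$, each contributing at most its weight $e^{-\budget}$ times $1-g(\load)$ (the positive part only shrinks the derivative). Hence $\partial_q(\mathrm{LHS})\le\big(1-g(\load)\big)\int_q^\infty e^{-\budget}\D\budget=\big(1-g(\load)\big)e^{-q}$, whereas $\partial_q(\mathrm{RHS})=\Gamma\,e^{-q}$; since $g$ is non-decreasing and $g(0)=1-\Gamma$, we have $1-g(\load)\le\Gamma$, so $\mathrm{LHS}-\mathrm{RHS}$ is non-increasing and checking $q=\infty$ suffices.

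\textbf{Step 3 (reduce to a differential inequality in $\load$).} Plugging in $q=\infty$, write $\Phi(\load)$ for the resulting left-hand side; then $\Phi(0)=1-g(0)=\Gamma$, so it is enough to show $\Phi'(\load)\ge 0$. Let $\budget^\star=\budget^\star(\load)\in[0,\load]$ be the unique point at which the argument of the middle positive part vanishes, i.e.\ $\budget^\star\big(1-g(\load)\big)=\int_{\budget^\star}^{\load}\big(1-g(z)\big)\D z$; it exists and is unique because that argument is increasing in $\budget$, nonpositive at $0$ and nonnegative at $\load$. Differentiating $\Phi$ (Leibniz rule at the moving endpoint $\load$ in all three terms, after which several terms cancel) the goal becomes
\[
    e^{-\load}~\ge~e^{-\budget^\star}\Big(\big(1-g(\load)\big)+\big(\budget^\star+1\big)g'(\load)\Big)
    ~.
\]

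\textbf{Step 4 (solve for the optimal $g$) and the main obstacle.} Finally I would choose $g$ so that this holds with equality for all $\load$; together with $g(0)=1-\Gamma$ this is an integro-differential equation for $g$ coupled to the implicit relation defining $\budget^\star(\load)$. I expect the main difficulty to be precisely this coupling: the positive part $(\cdot)^+$ in the second integral creates the endogenous threshold $\budget^\star(\load)$ that is itself determined by $g$, so unlike the equal-probability case there is no clean closed form, and the equation has to be solved numerically while checking that $\budget^\star\in[0,\load]$ holds throughout and that the resulting $g$ is non-decreasing with values in $[0,1]$; the largest $\Gamma$ compatible with these constraints is at least $0.611$.
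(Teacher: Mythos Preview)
Your Steps 1 and 2 are identical to the paper's proof, and your threshold $\budget^\star(\load)$ is exactly the function $h(\load)$ the paper introduces (the smallest $\budget$ at which the argument of the positive part becomes nonnegative). The difference is in how you close the argument. You differentiate $\Phi$ and impose $\Phi'(\load)\ge 0$ pointwise, then ask for the $g$ that makes $\Phi'\equiv 0$ and couples to the implicit relation for $\budget^\star$. The paper instead keeps the requirement in the form $\Phi(\load)\ge\Gamma$, treats $h$ as an auxiliary function that lower-bounds the positive-part term by $\int_{h(\load)}^\load e^{-\budget}\big(\budget(1-g(\load))-\int_\budget^\load(1-g(z))\D z\big)\D\budget$, simplifies this to
\[
\int_0^\load e^{-\budget}g(\budget)\D\budget-\int_{h(\load)}^\load\big(e^{-h(\load)}-e^{-z}\big)\big(1-g(z)\big)\D z+(1+h(\load))e^{-h(\load)}\big(1-g(\load)\big)\ge\Gamma,
\]
and then alternates between (i) solving an LP for the best step function $g$ given $h$, and (ii) resetting $h$ to the exact threshold given $g$ (your $\budget^\star$). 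After a few rounds this yields $\Gamma=0.611$.

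Your derivative computation in Step 3 is correct, and conceptually the two routes converge to the same object since the paper's optimal $h$ is your $\budget^\star$. The one caveat worth flagging is that demanding $\Phi'\ge 0$ everywhere is strictly stronger than $\Phi\ge\Gamma$; you are implicitly assuming the optimizer makes the bound tight at every $\load$. That does hold in the equal-probability case and is plausible here, but the paper's LP approach sidesteps the assumption entirely and is also easier to implement and verify than your coupled integro-differential equation.
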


\begin{proof}
    First consider the boundary condition when $\load = 0$:
    \[
        \int_0^\infty e^{-\budget} \min \{q, \budget\} d\budget \big(1-g(0)\big) = \big( 1-e^{-q} \big) \big( 1-g(0) \big) \ge \Gamma \big(1 - e^{-q} \big)
        ~.
    \]

    Hence, we let $g(0) = 1 - \Gamma$.

    Next we simplify the differential inequality by characterizing the worst-case value of $q$ for any given $\load$.
    First rewrite the relevant terms (the second and the third) on the LHS that involves $q$ as:
    \[
        \int_0^\infty e^{-\budget} \underbrace{\Big( \min \{ q, \budget \} \big(1-g(\load)\big)  - \int_{\min\{\budget, \load\}}^{\load} \big( 1 - g(z) \big) dz \Big)^+}_{(\star)} d\budget
        ~.
    \]
    Consider increasing $q$ by an infinitesimal amount $\epsilon$.
    This will increase $(\star)$ by at most $\epsilon$ for $\budget > q$.
    Hence, the derivative of LHS w.r.t.\ $q$ is at most:
    \[
        \int_q^\infty e^{-\budget} d\budget \big(1-g(\load)\big)
        = e^{-q} \big(1-g(\load)\big)
        ~.
    \]

    On the other hand, the derivative of RHS w.r.t.\ $q$ is $e^{-q} \, \Gamma$.
    By the monotonicity of $g$ and the boundary condition that $g(0) = 1 - \Gamma$, this is $e^{-q} \big(1-g(0)\big) \ge e^{-q} \big(1-g(\load)\big)$.
    In other words, the derivative of RHS is greater than or equal to the derivative of LHS.

    Therefore, the lemma's differential inequality holds for any value if and only if it holds for $q = \infty$.
    Simplifying the LHS with basic calculus after plugging in $q = \infty$, our task simplifies to finding a function $g$ such that for any $\load \ge 0$:
    \[
        \int_0^\load e^{-\budget} g(\budget) d\budget + \int_0^\load e^{-\budget} \Big( \budget \big(1-g(\load)\big)  - \int_{\budget}^{\load} \big( 1 - g(z) \big) dz \Big)^+ d\budget  + \int_\load^\infty e^{-\budget}  \budget \big(1-g(\load)\big) d\budget \ge \Gamma
        ~.
    \]

    We follow the same strategy as Huang and Zhang~\cite{HuangZ:STOC:2020} to get a numerical lower bound of the best achievable $\Gamma$ subject to the differential inequality.
    To guarantee above inequality, it suffices to find a pair of functions $g : [0, \infty) \to [0, 1]$ and $h : [0, \infty) \to [0, \infty)$ such that for any $\load \ge 0$:
    \[
        \int_0^\load e^{-\budget} g(\budget) d\budget + \int_{h(\load)}^\load e^{-\budget} \Big( \budget \big(1-g(\load)\big)  - \int_{\budget}^{\load} \big( 1 - g(z) \big) dz \Big) d\budget  + \int_\load^\infty e^{-\budget}  \budget \big(1-g(\load)\big) d\budget
        ~.
    \]

    Before we proceed further, let us first simplify the LHS.
    It is equal to:
    \begin{align*}
         &
        \int_0^\load e^{-\budget} g(\budget) d\budget - \int_{h(\load)}^\load e^{-\budget} \int_{\budget}^{\load} \big( 1 - g(z) \big) dz d\budget  + \int_{h(\load)}^\infty e^{-\budget}  \budget \big(1-g(\load)\big) d\budget \\
         & \qquad
        = \int_0^\load e^{-\budget} g(\budget) d\budget - \int_{h(\load)}^\load \big( e^{-h(\load)} - e^{-z} \big) \big( 1 - g(z) \big) dz + \big(1+h(\load)\big) e^{-h(\load)} \big(1-g(\load)\big)
        ~.
    \end{align*}

    Hence, we want to ensure that for any $\load \ge 0$:
    \[
        \int_0^\load e^{-\budget} g(\budget) d\budget - \int_{h(\load)}^\load \big( e^{-h(\load)} - e^{-z} \big) \big( 1 - g(z) \big) dz + \big(1+h(\load)\big) e^{-h(\load)} \big(1-g(\load)\big) \ge \Gamma
        ~.
    \]

    The rest of the proof relies on a computer-aided numerical optimization.
    Fixing any function $h$, we can find an optimal \emph{step-function} $g$ such that the above inequality holds for the largest possible $\Gamma$ by solving a linear program.
    By choosing a sufficiently small step, the resulting function $g$ and the corresponding $\Gamma$ are approximately optimal.
    Fixing any function $g$, on the other hand, finding the optimal function $h$ to maximize the LHS above is easy.
    Hence, we start with a trivial function $h(\ell) = 0$ for all $\ell$.
    Then, we alternate between (1) optimizing $g$ given $h$ and (2) choosing $h$ according to $g$ so that $h(\ell)$ is the smallest value for which:
    \[
        h(\ell) \big(1-g(\load)\big)  - \int_{h(\ell)}^{\load} \big( 1 - g(z) \big) dz \ge 0
        ~.
    \]

    Since the necessary discretization and the approach of alternating optimization were already detailed by Huang and Zhang~\cite{HuangZ:STOC:2020}, we merely provide a link to our code here.%
    \footnote{\href{https://github.com/dsgsjk/stochastic-balance}{https://github.com/dsgsjk/stochastic-balance}}
    The stated competitive ratio is achieved after only three rounds of alternating optimization.
\qed \end{proof}

\section{Missing Proofs from Section~\ref{sec:ranking}}
\label{app:ranking}

\subsection{Proof of Lemma~\ref{lem:ranking-non-stochastic-dual-feasibility-worst-order}}
\label{app:ranking-non-stochastic-dual-feasibility-worst-order}

We will prove the lemma by further relaxing and simplifying the right-hand-side of Eqn.~\eqref{eqn:ranking-non-stochastic-dual-feasibility-small}.
By the definition of $P$ and $\bar{P}$, the first two terms on the RHS of Eqn.~\eqref{eqn:ranking-non-stochastic-dual-feasibility-small} can be written as:
\begin{equation}
    \label{eqn:ranking-non-stochastic-dual-feasibility-bound-1st-term}
    \int_0^1 \big( 1 - e^{-P(\rank_u)-\bar{P}(\rank_u)} \big) g(\rank_u) \D\rank_u
\end{equation}
and:
\begin{equation}
    \label{eqn:ranking-non-stochastic-dual-feasibility-bound-2nd-term}
    - \int_0^1 \big(1 - g(\critical_v) \big) \D P(\critical_v)
    ~.
\end{equation}

We next examine the last term of Eqn.~\eqref{eqn:ranking-non-stochastic-dual-feasibility-original}, changing the order of integration and summation:
\begin{equation}
    \label{eqn:ranking-non-stochastic-dual-feasibility-bound-3rd-term}
    \int_0^1 \underbrace{\sum_{v \in N_u(\rank_u) \cap S} e^{-p|N_u(\rank_u,v)|} \big( 1 - e^{-p} \big) \big( g(\critical_v)-g(\rank_u) \big)}_{(\star)} \D \rank_u
    ~.
\end{equation}

Next decompose $N_u(\rank_u, v)$ into two parts to derive an upper bound:
\begin{align*}
    \big| N_u(\rank_u, v) \big|
     &
    = \big| N_u(\rank_u, v) \cap S \big| + \big| N_u(\rank_u, v) \setminus S \big| \\
     &
    \le \big| N_u(\rank_u, v) \cap S \big| + \big| N_u(\rank_u) \setminus S \big|
    ~,
\end{align*}
where it would hold with equality if $u$'s neighbors outside $S$ arrive before those in $S$.
Further recall that $\bar{P}(\rank_u) = p |N_u(\rank_u) \setminus S|$
We get:
\[
    (\star) ~\ge~ e^{- \bar{P}(\rank_u)} \big(1 - e^{-p}\big) \sum_{v \in N_u(\rank_u) \cap S} e^{- p |N_u(\rank_u, v) \cap S|} \, \big( g(\critical_v)-g(\rank_u) \big)
    ~.
\]

Observe that $|N_u(\rank_u, v) \cap S|$ simply counts the number of vertices in $N_u(\rank_u) \cap S$ that arrive before $v$.
Thus, when $v$ iterates over vertices in $N_u(\rank_u) \cap S$, $e^{- p |N_u(\rank_u, v) \cap S|}$ iterates over $e^{-pi}$ for $0 \le i < |N_u(\rank_u) \cap S|$.
By the rearrangement inequality, the RHS above is minimized when vertices in $N_u(\rank_u) \cap S$ arrive by the increasing order of their ranks.
Under this arrival order, and recalling that we have assumed distinct critical ranks, we have:
\[
    \big| N_u(\rank_u, v) \cap S \big| = \big| N_u(\rank_u) \cap S \big| - \big| N_u(\critical_v) \cap S \big|
    ~.
\]

Therefore, by $P(\critical) = p |N_u(\critical) \cap S|$ we have:
\begin{align*}
    (\star)
     &
    ~\ge~
    e^{- \bar{P}(\rank_u)} \big(1 - e^{-p}\big) \sum_{v \in N_u(\rank_u) \cap S} e^{- P(\rank_u) + P(\critical_v)} \big( g(\critical_v)-g(\rank_u) \big) \\
     &
    ~=~
    e^{- \bar{P}(\rank_u) - P(\rank_u)} \sum_{v \in N_u(\rank_u) \cap S} e^{P(\critical_v)} \big(1 - e^{-p}\big) \big( g(\critical_v)-g(\rank_u) \big)
    \\
     &
    ~=~
    - e^{- \bar{P}(\rank_u) - P(\rank_u)} \int_{\rank_u}^1 \big( g(\critical_v)-g(\rank_u) \big) \D e^{P(\critical_v)}
    ~.
\end{align*}

Putting this back to Eqn.~\eqref{eqn:ranking-non-stochastic-dual-feasibility-bound-3rd-term} shows that the third term on the RHS of Eqn.~\eqref{eqn:ranking-non-stochastic-dual-feasibility-small} is at least:
\begin{align*}
    - \int_0^1 e^{- \bar{P}(\rank_u) - P(\rank_u)} \int_{\rank_u}^1 \big( g(\critical_v)-g(\rank_u) \big) \D e^{P(\critical_v)}
    ~=~
     &
    - \int_0^1 e^{- \bar{P}(\rank_u) - P(\rank_u)} \int_{\rank_u}^1 g(\critical_v) \D e^{P(\critical_v)} d\rank_u
    \\
     & \quad
    -  \int_0^1 e^{- \bar{P}(\rank_u) - P(\rank_u)} \big( e^{P(\rank_u)} - 1 \big) g(\rank_u) \D \rank_u
    ~.
\end{align*}

Combining it with Equations~\eqref{eqn:ranking-non-stochastic-dual-feasibility-bound-1st-term} and \eqref{eqn:ranking-non-stochastic-dual-feasibility-bound-2nd-term} proves the lemma.
Specifically, merging the second term on the RHS above with Eqn.~\eqref{eqn:ranking-non-stochastic-dual-feasibility-bound-1st-term} yields the first term on the RHS of the lemma.

\subsection{Proof of Lemma~\ref{lem:ranking-non-stochastic-dual-feasibility-worst-inside}}
\label{app:ranking-non-stochastic-dual-feasibility-worst-inside}

We first rewrite the LHS of Eqn.~\eqref{eqn:ranking-non-stochastic-dual-feasibility-worst-size} in a discrete way.
Let $\critical_1\leq \critical_2\leq \cdots\leq\critical_n$ be the critical ranks of vertices in $S$.
Define:
\begin{align*}
    f(\critical_1,\critical_2,\dots,\critical_n)
     & = \int_0^{\critical_0} g(\rank_u) \D\rank_u + \sum_{i=1}^n p\bigg( 1-g(\critical_i) + \sum_{j=1}^i (\critical_j - \critical_{j-1}) e^{-p(i-j)} g(\critical_i)\bigg)                                      \\
     & = \int_0^{\critical_0} g(\rank_u) \D\rank_u + \underbrace{\vphantom{\sum_{j=1}^n}\sum_{i=1}^n p \big( 1-g(\critical_i) \big)}_{(a)} + \underbrace{\sum_{j=1}^n (\critical_j - \critical_{j-1}) \sum_{i=j}^n p e^{-p(i-j)} g(\critical_i)}_{(b)}
     ~.
\end{align*}


Let $\underline\critical\approx 0.513$ denote the smallest solution to $g(\rank)=1-\frac{1}{e}$. For any fixed $\critical_0$, we show that $f$ is minimized
either $\critical_1=\cdots=\critical_n=\underline\critical$ or $\critical_1=\dots=\critical_n=1$.

\begin{lemma}
    \label{left}
    For any $1 \le m < n$, any $\critical_1 < \critical_2$ that are between $\critical_0$ and $\underline{\critical}$, and any $\critical'_{m+1} \le \critical'_{m+2} \le \dots \le \critical'_n$ that are between $\critical_2$ and $1$, we have:
    \[
        f \big( \underbrace{\vphantom{\Big|} \critical_1,\dots,\critical_1}_{m},\critical'_{m+1},\dots,\critical'_n \big) ~\geq~ f\big( \underbrace{\vphantom{\Big|} \critical_2,\dots,\critical_2}_{m},\critical'_{m+1},\dots,\critical'_n \big)
        ~.
    \]
\end{lemma}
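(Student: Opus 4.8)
The plan is to fix the threshold $\critical_0$ and the tail $\critical'_{m+1}\le\cdots\le\critical'_n$, and to study the single-variable function
\[
\phi(t) \;:=\; f\big(\underbrace{t,\dots,t}_{m},\,\critical'_{m+1},\dots,\critical'_n\big),\qquad t\in[\critical_0,\underline\critical],
\]
which is well defined since $t\le\critical_2\le\critical'_{m+1}$ keeps the argument sorted. The claimed inequality is exactly $\phi(\critical_1)\ge\phi(\critical_2)$ with $\critical_0\le\critical_1<\critical_2\le\underline\critical$, so it will suffice to show $\phi'(t)\le 0$ throughout $(\critical_0,\underline\critical)$, a range on which $g(\rank)=\frac{c}{e-(e-1)\rank}$ by \eqref{eqn:ranking-non-stochastic-gain-function} and hence $g$ is smooth.

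First I would differentiate $\phi$ directly. In the formula for $f$, the only $t$-dependent contributions are: the $m$ copies of $p\big(1-g(t)\big)$ in part $(a)$; the $j=1$ summand $(t-\critical_0)\sum_{i=1}^{n}p\,e^{-p(i-1)}g(\critical_i)$ of part $(b)$ (the summands with $2\le j\le m$ drop out because consecutive $\critical$'s coincide); and the $j=m+1$ summand $(\critical'_{m+1}-t)\sum_{i=m+1}^{n}p\,e^{-p(i-m-1)}g(\critical'_i)$ of part $(b)$. Writing $A:=\sum_{i=1}^{m}p\,e^{-p(i-1)}=p\,\frac{1-e^{-pm}}{1-e^{-p}}$ and $C:=\sum_{i=m+1}^{n}p\,e^{-p(i-m-1)}g(\critical'_i)$, and using $\sum_{i=m+1}^{n}p\,e^{-p(i-1)}g(\critical'_i)=e^{-pm}C$, the differentiation collapses to
\[
\phi'(t) \;=\; -\,mp\,g'(t)\;+\;A\,g(t)\;+\;(t-\critical_0)\,A\,g'(t)\;-\;\big(1-e^{-pm}\big)\,C .
\]

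Next I would bound this. Since $g$ is non-decreasing and every $\critical'_i\ge\critical_2>t$, we have $C\ge g(t)\sum_{i=m+1}^{n}p\,e^{-p(i-m-1)}$, and the geometric identity $\big(1-e^{-pm}\big)\sum_{i=m+1}^{n}p\,e^{-p(i-m-1)}=A\big(1-e^{-p(n-m)}\big)$ then gives
\[
\phi'(t)\;\le\; A\Big(e^{-p(n-m)}g(t)+(t-\critical_0)\,g'(t)\Big)\;-\;mp\,g'(t).
\]
Plugging in $g'(t)=\frac{e-1}{c}\,g(t)^2>0$ and $\frac{g(t)}{g'(t)}=\frac{e-(e-1)t}{e-1}$, recalling that by Lemma~\ref{lem:ranking-non-stochastic-dual-feasibility-worst-size} we may assume $p_{uS}=P(0)=1$, i.e.\ $pn=1$, and dividing by $g'(t)$, the goal $\phi'(t)\le 0$ reduces to the scalar inequality
\[
\frac{1-e^{-pm}}{m\,(1-e^{-p})}\left(\frac{e-(e-1)t}{e-1}\,e^{-p(n-m)}+t-\critical_0\right)\;\le\;1 .
\]

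The hard part is this last estimate. Its left-hand side is non-increasing in $\critical_0$, so only $\critical_0=0$ needs attention; setting $s:=pm\in(0,1)$ (so $p(n-m)=1-s$) and passing to the limit $p\to 0$ — legitimate after the reduction to infinitesimal success probabilities — the inequality simplifies, after clearing denominators, to
\[
\big(1-e^{-s}\big)\,(e-1)\,t\,\big(1-e^{-(1-s)}\big)\;\le\;s(e-1)-e^{s}+1 \qquad\text{for all }s\in[0,1],\; t\in[0,\underline\critical].
\]
Here $h(s):=s(e-1)-e^s+1$ is concave with $h(0)=h(1)=0$, hence non-negative on $[0,1]$, while the left-hand side is increasing in $t$; so it remains to verify $h(s)\ge(e-1)\,\underline\critical\,(1-e^{-s})(1-e^{-(1-s)})$, whose worst case is $s\to 0$ and which reduces to the numerical fact $\underline\critical<\frac{e(e-2)}{(e-1)^2}\approx 0.661$. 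Equality in the scalar inequality occurs only in the boundary regime $s=1$, i.e.\ $m=n$, which the hypothesis $1\le m<n$ excludes; hence $\phi'(t)<0$ and $\phi$ is strictly decreasing on $[\critical_0,\underline\critical]$, which is exactly the assertion. I expect the derivative computation and the algebra to be routine, and this final one-variable verification to be the only step that requires genuine calculation.
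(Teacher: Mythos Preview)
Your approach is essentially the paper's: both set $\critical_0=0$, replace the tail values $g(\critical'_i)$ by their minimum, and reduce to the scalar inequality
\[
h(s)\;\ge\;(e-1)\,t\,(1-e^{-s})(1-e^{s-1}),\qquad h(s)=s(e-1)-e^{s}+1,
\]
in the single variable $s=pm$. You compute a derivative where the paper computes a finite difference, but after the same simplifications the two routes yield exactly this same inequality (the paper's quantity $(\star)$ is equivalent to it after clearing denominators).

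The only real divergence is the endgame. You assert that the minimum of $h(s)/[(e-1)(1-e^{-s})(1-e^{s-1})]$ over $s\in(0,1)$ is attained as $s\to 0^{+}$; this is true but you do not justify it, and it is not entirely obvious since the ratio tends to different finite limits at the two endpoints. The paper avoids this step: it first observes that the needed inequality is monotone in $t$ (equivalently in $\critical_1$), and then substitutes the slightly looser bound $t\le\frac{1}{e-1}$ (using $\underline\critical<\frac{1}{e-1}$) rather than $t\le\underline\critical$. This collapses the target to
\[
z-\frac{1-e^{-z}}{e-1}-(e^{z-1}-e^{-1})\;\ge\;0,
\]
which vanishes at both $z=0$ and $z=1$ and is straightforward to verify on $[0,1]$. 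Making the same substitution in your scalar inequality would close the gap without any additional analysis of where the ``worst $s$'' lies.
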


\begin{proof}
    Let $z = p \cdot m$ and recall that $n \cdot p = 1$.
    By the definition of $f$, the difference between the LHS and RHS equals:
    \begin{align*}
        &
        \underbrace{\vphantom{\sum_{i=m+1}^n} z \big(g(\critical_2)-g(\critical_1)\big)}_{\text{from $(a)$}} + \underbrace{(\critical_2-\critical_1) \sum_{i=m+1}^n p e^{-p(i-m-1)} g(\critical'_i)}_{\text{from $(b)$ when $j=m+1$}}
        \\
        &
        \underbrace{\vphantom{\sum_{i=m+1}^n} - \big(1-e^{-z}\big) \big(\critical_1-\critical_0\big) \big(g(\critical_2)-g(\critical_1)\big)}_{\text{from $(b)$ when $j = 1$}} - \underbrace{\vphantom{\sum_{i=m+1}^n} \big(\critical_2 - \critical_1\big) \Big( \big(1-e^{-z}\big) g(\critical_2) + \sum_{i=m+1}^n p e^{-p(i-1)} g(\critical'_i) \Big)}_{\text{from $(b)$ when $j = m+1$}}
        ~.
    \end{align*}

    Since the coefficients of $g(\critical'_i)$ is positive for all $m < i \le n$, and the coefficient of $\critical_0$ is positive, the difference is minimized when $\critical'_i = \critical_2$ and $\critical_0 = 0$.
    Hence, the difference is at least:
    \begin{align*}
        &
        z \big( g(\critical_2)-g(\critical_1) \big) + \big( \critical_2 - \critical_1 \big) \big(1 - e^{-(1-z)}\big) g(\critical_2) 
        \\
        & \qquad
        - \big(1-e^{-z}\big) \critical_1 \big( g(\critical_2)-g(\critical_1) \big) - \big(\critical_2 - \critical_1\big) \big(1 - e^{-1}\big) g(\critical_2)
        \\[1ex]
        & \quad
        = \big(z - (1-e^{-z}) \critical_1 \big) \big( g(\critical_2)-g(\critical_1) \big) - \big(e^{z-1}-e^{-1}\big)\big(\critical_2-\critical_1\big)g(\critical_2)
        ~.
    \end{align*}

    By the definition of $g$ and by that $\critical_1 < \critical_2 \le \underline{\critical}$, we have that for $\critical = \critical_1$ or $\critical_2$:
    \[
        g(\critical) = \frac{c}{e-(e-1)\critical}
        ~.
    \]

    Hence:
    \[
        g(\critical_2) - g(\critical_1) = (\critical_2 - \critical_1) \cdot \frac{c(e-1)}{(e-(e-1)\critical_1)(e-(e-1)\critical_2)}
        ~.
    \]

    Putting into the above bound we get that the difference is at least:
    \[
        \frac{c(\critical_2-\critical_1)}{e-(e-1)\critical_2} \Big( \underbrace{\frac{e-1}{e}\frac{z-(1-e^{-z})\critical_1}{1-(1-e^{-1})\critical_1}-(e^{z-1}-e^{-1})}_{(\star)} \Big)
        ~.
    \]

    It suffices to prove that $(\star) \ge 0$.
    Since $\frac{1-e^{-z}}{z}$ is decreasing in $z$ and in particular is at least $1-e^{-1}$, we get that $(\star)$ is decreasing in $\critical_1$.
    Further by that $\critical_1 \le \underline{\critical} < \frac{1}{e-1}$, it suffices to establish the $(\star)$ is non-negative when $\critical_1 = \frac{1}{e-1}$.
    In that case, it follows by:
    %
    %
    \[
        z-\frac{1-e^{-z}}{e-1}-(e^{z-1}-e^{-1})\geq 0
        ~,
    \]
    with equality achieved at $z = 0$ and $z = 1$
\qed \end{proof}

\begin{lemma}
    \label{right}
    For any $0 \le m < m' \leq n$, any $\critical_1 < \critical_2$ that are between $(\underline{\critical},1)$, 
    and any $\critical'_{1} \le \critical'_{2} \le \dots \le \critical'_m$ that are between $[0,\underline{\critical}]$, we have:
    \[
        f \big(\critical'_1,\dots,\critical'_m,\underbrace{\vphantom{\Big|} \critical_1,\dots,\critical_1}_{m'-m},\underbrace{\vphantom{\Big|} 1,\dots,1}_{n-m'}\big)\leq
        f \big(\critical'_1,\dots,\critical'_m,\underbrace{\vphantom{\Big|} \critical_2,\dots,\critical_2}_{m'-m},\underbrace{\vphantom{\Big|} 1,\dots,1}_{n-m'}\big)
        ~.
    \]
\end{lemma}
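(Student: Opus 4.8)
The plan is an exchange argument whose whole point is that the gain splitting function $g$ of Eqn.~\eqref{eqn:ranking-non-stochastic-gain-function} is \emph{constant}, equal to $1-\tfrac1e$, on the half-open interval $[\underline{\critical},1)$. First I would rewrite $f = \int_0^{\critical_0} g(\rank_u)\,\D\rank_u + (a) + (b)$ with $(a) = \sum_{i=1}^n p\bigl(1 - g(\critical_i)\bigr)$ and $(b) = \sum_{j=1}^n (\critical_j - \critical_{j-1})\,B_j$, where $B_j := \sum_{i=j}^n p\,e^{-p(i-j)} g(\critical_i)$ and the (fixed) threshold $\critical_0$ plays the role of the $0$-th entry in the telescoping. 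Replacing the block of $m'-m$ copies of $\critical_1$ by a block of $\critical_2$'s can only change $g$-values at indices $m{+}1,\dots,m'$; but $\critical_1,\critical_2\in(\underline{\critical},1)$, so those all equal $1-\tfrac1e$ before and after, and the $\critical'_i\in[0,\underline{\critical}]$ are in the flat region too. Hence $(a)$ is unchanged and every $B_j$ is unchanged, so the entire effect is through the telescoping gaps in $(b)$: the gap at index $m{+}1$ grows by $\delta := \critical_2-\critical_1$, and (when $m'<n$) the gap at index $m'{+}1$ shrinks by $\delta$. Consequently the difference between the right and left sides of the claimed inequality is exactly $\delta\,(B_{m+1} - B_{m'+1})$, with $B_{n+1}:=0$, and it remains to show $B_{m+1}\ge B_{m'+1}$.

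For that I would split $B_{m+1}$ at index $m'$: the head $m{+}1\le i\le m'$ contributes $\bigl(1-\tfrac1e\bigr)$ times a geometric sum, while the tail $i\ge m'{+}1$ is exactly $e^{-p(m'-m)}\,B_{m'+1}$ since $g(\critical_i)=g(1)=1$ there. Together with the closed form $B_{m'+1} = p\,(1-e^{-p(n-m')})/(1-e^{-p})$, a one-line manipulation yields
\[
B_{m+1} - B_{m'+1} = \bigl(1 - e^{-p(m'-m)}\bigr)\,\frac{p}{1-e^{-p}}\,\Bigl(e^{-p(n-m')} - \tfrac1e\Bigr)\,.
\]
The first two factors are nonnegative, and the third is nonnegative precisely because $p(n-m') = 1 - pm' \le 1$, using the normalization $np=1$ (from Lemma~\ref{lem:ranking-non-stochastic-dual-feasibility-worst-size}) and $m'\ge 1$. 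This is the only step where the range hypotheses on the critical ranks and the size of $S$ are really used, so I expect it to be the crux; everything around it is bookkeeping.

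The anticipated difficulties are all in that bookkeeping. One must keep $\critical_0$ fixed throughout the exchange so that $\int_0^{\critical_0} g$ and the gap at index $1$ are untouched; one needs $g$ flat on all of $[\underline{\critical},1)$, not just $(\underline{\critical},1)$, so that a $\critical'_i$ equal to $\underline{\critical}$ causes no problem; and one should treat the degenerate cases separately --- $m=0$ (the gap at index $1$ is $\critical_1-\critical_0$, still growing by $\delta$) and $m'=n$ (no trailing block of $1$'s, so there is no compensating gap and the inequality reduces to $\delta\,B_{m+1}\ge 0$, which is immediate). None of these needs a new idea.
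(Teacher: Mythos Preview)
Your argument is essentially the paper's: both observe that $g(\critical_1)=g(\critical_2)=1-\tfrac1e$ so $(a)$ and every $B_j$ are unchanged, compute the difference as $\delta\,(B_{m+1}-B_{m'+1})$, and reduce the sign check to $e^{-p(n-m')}\ge e^{-1}$, which follows from $np=1$. One slip to fix: the $\critical'_i\in[0,\underline{\critical}]$ are \emph{not} ``in the flat region'' (the flat region is $[\underline{\critical},1)$; on $[0,\underline{\critical})$ the function $g$ is strictly increasing), but this is harmless because the $\critical'_i$ are identical on both sides and $B_{m+1},B_{m'+1}$ do not involve them at all.
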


\begin{proof}
    Let $z = p \cdot (m'-m), w = p\cdot (n-m')$. By the definition of $f$, the difference between the LHS and RHS equals (note that $g(\mu_1)=g(\mu_2)=1-\frac{1}{e}$):
    \begin{align*}
        & \underbrace{\vphantom{\Big|} -(\critical_2-\critical_1)(1-e^{-z})g(\critical_2)-(\critical_2-\critical_1)e^{-z}(1-e^{-w})}_{\text{from $(b)$ when $j=m+1$}}
        + \underbrace{\vphantom{\Big|} (1-\critical_1)(1-e^{-w})-(1-\critical_2)(1-e^{-w})}_{\text{from $(b)$ when $j=m'+1$}} \\
        = & (1-e^{-w}-g(\critical_2))(\critical_2-\critical_1)(1-e^{-z})\\
        = & (1-e^{-w}-(1-\frac{1}{e}))(\critical_2-\critical_1)(1-e^{-z})\leq 0
    \end{align*}
\qed \end{proof}

\begin{lemma}
    \label{mid}
    $f(\underbrace{\vphantom{\Big|}\underline \critical,\dots,\underline \critical}_{m'},\underbrace{\vphantom{\Big|} 1,\dots,1}_{n-m'})$ 
    is minimized when $m'=0$ or $m'=n$ for any ${\underline \critical}$.
\end{lemma}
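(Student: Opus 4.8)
Throughout write $a:=\underline{\critical}$ for the smallest root of $g(\rank)=1-\frac1e$ (so $g(a)=1-\frac1e$ and $a\approx 0.513$), and interpret the lemma's parameter as $\critical_0$. The plan is to exploit the reduction to infinitesimal success probabilities already made in this subsection and to reparametrize the configuration by the single scalar $z:=p\,m'\in[0,1]$, the total success-probability mass carried by the $S$-vertices whose critical rank is $a$ (so mass $1-z$ sits at critical rank $1$, using $np=1$). First I would evaluate $f$ on this configuration in closed form. In the sum defining part $(b)$ of $f$, the telescoping increments $\critical_j-\critical_{j-1}$ vanish except at $j=1$, where the increment is $a-\critical_0$ (we may assume $\critical_0\le a$, since otherwise the configuration does not arise in the reduction leading to Lemma~\ref{lem:ranking-non-stochastic-dual-feasibility-worst-inside}), and at $j=m'+1$, where it is $1-a$. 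Evaluating the geometric sums $\sum p\,e^{-p(i-j)}$ in the limit $p\to 0$ and using $g(a)=1-\frac1e$, $g(1)=1$, one gets, writing $C:=\int_0^{\critical_0} g$,
\[
    f(z) \;=\; C + \frac{z}{e} + (a-\critical_0)\Big(1-\frac2e+e^{-z-1}\Big) + (1-a)\big(1-e^{z-1}\big), \qquad z\in[0,1].
\]

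Next I would study $f$ as a function of $z$. Differentiating, $f'(z)=\frac1e-(a-\critical_0)e^{-z-1}-(1-a)e^{z-1}$, whence $f'(0)=\frac1e\big(1-(a-\critical_0)-(1-a)\big)=\critical_0/e\ge 0$, and $f'''(z)=-(a-\critical_0)e^{-z-1}-(1-a)e^{z-1}<0$ since both summands are nonpositive and the second is strictly negative. Thus $f'$ is strictly concave on $[0,1]$, so its superlevel set $\{z\in[0,1]: f'(z)\ge 0\}$ is an interval; it contains $0$, hence equals $[0,b]$ for some $b\in[0,1]$. Therefore $f$ is nondecreasing on $[0,b]$ and nonincreasing on $[b,1]$, i.e.\ $f$ is unimodal with a single peak, which forces $\min_{z\in[0,1]} f(z)\in\{f(0),f(1)\}$. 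Since $z=0$ is the configuration $m'=0$ and $z=1$ is $m'=n$, this is exactly the lemma. (For completeness, $f'(1)=\frac1e-(1-a)-(a-\critical_0)e^{-2}<0$ because $1-a>\frac1e$, so in fact $b<1$; and the excluded case $\critical_0>a$, should it occur, is even easier, since then $f''(z)=(a-\critical_0)e^{-z-1}-(1-a)e^{z-1}<0$ makes $f$ concave, hence minimized at an endpoint.)

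The step I expect to be the real obstacle is not the algebra but recognizing that $f$ is \emph{not} globally concave in $z$: one has $f''(0)=e^{-1}\big(2a-1-\critical_0\big)$, which is positive when $\critical_0$ is close to $0$ (recall $a\approx 0.513$), so $f$ genuinely has a short convex stretch near $z=0$, and the argument that concavity alone would yield endpoint minimality fails there. The remedy above — passing to the derivative, which \emph{is} concave, and using $f'(0)\ge 0$ to locate the unique sign change of $f'$ — is what makes endpoint minimality go through. A minor secondary point is the interchange of $p\to 0$ with the optimization over $m'$, justified by the earlier reduction to infinitesimal probabilities, after which $f$ depends on the configuration only through $z$ and the attainable $z$-values fill $[0,1]$ in the limit; the continuous minimum at an endpoint then a fortiori bounds the discrete one.
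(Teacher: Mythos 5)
Your proof is correct and takes essentially the same route as the paper: both reduce to a single-variable closed form (you use $z=pm'$, the paper $z=p(n-m')$, so the two are related by $z\leftrightarrow 1-z$), both establish unimodality by showing the first derivative has a definite convexity (you show $f'$ is concave via $f'''<0$; the paper shows the symmetric $F'$ is convex via $F''$ increasing), and both pin down the sign of the derivative at the corresponding endpoint ($f'(0)=\critical_0/e\ge 0$ versus $F'(1)=-\critical_0/e\le 0$). Your additional remarks — that $f$ itself need not be concave near $z=0$, and the brief treatment of the excluded case $\critical_0>\underline\critical$ — are accurate but not a different method.
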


\begin{proof}
    Let $z = p\cdot (n-m')$ and define $F(z)=f(\underbrace{\vphantom{\Big|}\underline \critical,\dots,\underline \critical}_{m'},\underbrace{\vphantom{\Big|} 1,\dots,1}_{n-m'})$. By the definition of $f$,
    \begin{align*}
        F(z)=                                & (1-z)(1-g(\underline \critical))+(\underline \critical-\critical_0)((1-e^{z-1})g(\underline \critical)+(e^{z-1}-e^{-1})) \\
        +                                    & (1-\underline \critical)(1-e^{-z})+\int_0^{\critical_0} g(\rank_u) \D \rank_u                            \\
        \frac{\mathrm{d}F}{\mathrm{d}z}=     & -(1-g(\underline \critical))+(\underline \critical-\critical_0)e^{z-1}(1-g(\underline \critical))+e^{-z}(1-\underline \critical)  \\
        =                                    & (\underline \critical-\critical_0)e^{z-2}+e^{-z}(1-\underline \critical)-e^{-1}                                 \\
        \frac{\mathrm{d}^2F}{\mathrm{d}z^2}= & (\underline \critical-\critical_0)e^{z-2}-e^{-z}(1-\underline \critical)
    \end{align*}

    Since $\frac{\mathrm{d}^2F}{\mathrm{d}z^2}$ is increasing and $\frac{\mathrm{d}F}{\mathrm{d}z}(1)=-\frac{{\critical_0}}{e}\leq 0$, $F$ is minimized at either $z=0$ or $z=1$.
\qed \end{proof}

\begin{proof}[Proof of Lemma~\ref{lem:ranking-non-stochastic-dual-feasibility-worst-inside}]

    For vertices in $S$, assume $\critical_0\leq \critical_1\leq \dots \leq\critical_m\leq \underline\critical < \critical_{m+1}\leq \dots \leq \critical_{m'} < \critical_{m'+1}=\dots=\critical_n=1$.

    \begin{align*}
             & f(\critical_1,\dots,\critical_m,\critical_{m+1},\dots,\critical_{m'},\underbrace{\vphantom{\Big|}1,\dots,1}_{n-m'})                                                                                                        \\
        \geq & f(\underbrace{\vphantom{\Big|}\underline\critical,\dots,\underline\critical}_{m},\critical_{m+1},\dots,\critical_{m'},\underbrace{\vphantom{\Big|}1,\dots,1}_{n-m'}) &  & \text{(by an induction using Lemma~\ref{left})}  \\
        \geq & f(\underbrace{\vphantom{\Big|}\underline\critical,\dots,\underline\critical}_{m'},\underbrace{\vphantom{\Big|}1,\dots,1}_{n-m'})                                   &  & \text{(by an induction using Lemma~\ref{right})} \\
        \geq & \min(f(\underline\critical,\dots,\underline\critical),f(1,\dots,1))                                                                                   &  & \text{(by Lemma~\ref{mid})}
    \end{align*}

\qed \end{proof}

\end{document}